\newtheorem{theorem}{Theorem}[section]
\newtheorem{lemma}[theorem]{Lemma}
\newtheorem{fact}[theorem]{Fact}
\newtheorem{definition}[theorem]{Definition}
\newtheorem{observation}[theorem]{Observation}
\newcommand{\poly}{\mathrm{poly}}
\newcommand{\dtv}{d_{TV}}
\newcommand{\cald}{\mathcal{D}}
\newcommand{\calpsat}{\mathcal{P}_{\mathrm{sat}}}
\newcommand{\calpfar}{\mathcal{P}_{\mathrm{far}}}
\newcommand{\caldsat}{\mathcal{D}_{\mathrm{sat}}}
\newcommand{\caldfar}{\mathcal{D}_{\mathrm{far}}}
\newcommand{\calksat}{\mathcal{K}_{\mathrm{sat}}}
\newcommand{\calkfar}{\mathcal{K}_{\mathrm{far}}}
\newcommand{\caltsat}{\mathcal{T}_{\mathrm{sat}}}
\newcommand{\caltfar}{\mathcal{T}_{\mathrm{far}}}
\newcommand{\cy}{\mathrm{cy}}
\newcommand{\calo}{\mathcal{O}}
\newcommand{\cala}{\mathcal{A}}
\newcommand{\calg}{\mathcal{G}}
\newcommand{\calu}{\mathcal{U}}
\newcommand{\bfX}{\mathbf{X}}
\newcommand{\bfx}{\mathbf{x}}
\newcommand{\bfy}{\mathbf{y}}
\newcommand{\bfb}{\mathbf{b}}
\newcommand{\ci}{\perp\!\!\!\perp}
\newcommand{\bit}{\{0,1\}}
\newcommand{\supp}{\mathrm{Supp}}
\newcommand{\mislong}{\textsf{Maximum Independent Set}\xspace}
\newcommand{\mis}{\textsf{MIS}\xspace}
\newcommand{\kcsp}{\textsf{$k$-CSP}\xspace}
\newcommand{\csp}[1]{\textsf{CSP}(#1)\xspace}
\newcommand{\xor}{\textsf{XOR}\xspace}
\newcommand{\nae}{\textsf{NAE}\xspace}
\newcommand{\equ}{\textsf{EQU}\xspace}
\newcommand{\txor}{\textsf{$2$-XOR}\xspace}
\newcommand{\kxor}{$k$-\textsf{XOR}\xspace}
\newcommand{\kequ}{$k$-\textsf{EQU}\xspace}
\newcommand{\knae}{$k$-\textsf{NAE}\xspace}
\newcommand{\maxcsp}[1]{\textsf{Max CSP}(#1)\xspace}
\newcommand{\maxcut}{\textsf{Max Cut}\xspace}
\newcommand{\maxkcsp}{\textsf{Max $k$-CSP}\xspace}
\newcommand{\maxkxor}{\textsf{Max $k$-XOR}\xspace}
\title{Lower Bounds on Query Complexity for Testing Bounded-Degree CSPs}
\author{Yuichi Yoshida\thanks{This work was conducted while the author was visiting Rutgers University.}\\\\
  School of Informatics, Kyoto University, and\\ Preferred Infrastructure, Inc.\\yyoshida@lab2.kuis.kyoto-u.ac.jp}
\date{}
\begin{document}
\setcounter{page}{0}
\maketitle
\begin{abstract}
  In this paper, we consider lower bounds on the query complexity for testing CSPs in the bounded-degree model.

  First, for any ``symmetric'' predicate $P:\bit^{k}\to \bit$ except \equ where $k\geq 3$, 
  we show that every (randomized) algorithm that distinguishes satisfiable instances of \csp{$P$} from instances $(|P^{-1}(0)|/2^k-\epsilon)$-far from satisfiability requires $\Omega(n^{1/2+\delta})$ queries where $n$ is the number of variables and $\delta>0$ is a constant that depends on $P$ and $\epsilon$.
  This breaks a natural lower bound $\Omega(n^{1/2})$, which is obtained by the birthday paradox.
  We also show that every one-sided error tester requires $\Omega(n)$ queries for such $P$.
  These results are hereditary in the sense that the same results hold for any predicate $Q$ such that $P^{-1}(1)\subseteq Q^{-1}(1)$.
  For \equ, we give a one-sided error tester whose query complexity is $\tilde{O}(n^{1/2})$.
  Also, for \txor (or, equivalently \textsf{E2LIN2}), 
  we show an $\Omega(n^{1/2+\delta})$ lower bound for distinguishing instances between $\epsilon$-close to and $(1/2-\epsilon)$-far from satisfiability.

  Next, for the general \kcsp over the binary domain,
  we show that every algorithm that distinguishes satisfiable instances from instances $(1-2k/2^k-\epsilon)$-far from satisfiability requires $\Omega(n)$ queries.
  The matching NP-hardness is not known, even assuming the Unique Games Conjecture or the $d$-to-$1$ Conjecture.
  As a corollary, for \mis on graphs with $n$ vertices and a degree bound $d$,
  we show that every approximation algorithm within a factor $d/\poly\log d$ and an additive error of $\epsilon n$ requires $\Omega(n)$ queries.
  Previously, only super-constant lower bounds were known.
\end{abstract}
\newpage
\section{Introduction}
\textit{Property testing}~\cite{GGR98} is a relaxation of decision.
We call a randomized algorithm an \textit{$(\gamma,\epsilon)$-tester} when, 
given an oracle access $\calo_\Phi$ to an instance $\Phi$, 
it accepts $\Phi$ if it is $\gamma$-close to a predetermined property with a probability of at least $2/3$ and rejects $\Phi$ if it is $\epsilon$-far from the property with a probability of at least $2/3$.
An $(\gamma,\epsilon)$-tester is often referred to as a \textit{tolerant tester}~\cite{PRR06}.
The efficiency of an algorithm is measured by the \textit{query complexity}, which is the number of accesses to $\calo_\Phi$.
The definition of farness depends on each model.
A $(0,\epsilon)$-tester is simply called an \textit{$\epsilon$-tester}.

In this paper, we study testers for \kcsp (\textit{constraint satisfaction problems}) in the bounded-degree model and show various lower bounds on the query complexity.
An instance $\Phi$ of \kcsp is a tuple of a set of variables and a set of constraints (functions) over $k$ variables.
Then, we test whether there exists an assignment over variables that satisfies all the constraints.
We only consider Boolean CSPs.
The \textit{degree} of a variable $x$ is the number of constraints in which $x$ appears.
In the \textit{bounded-degree model}~\cite{GR08},
we only consider instances such that the degree of each variable is at most $d$, where $d$ is a predetermined parameter.
By specifying a variable $x$ and an index $i(1\leq i \leq d)$, the oracle $\calo_\Phi$ returns the $i$-th constraint in which $x$ appears.
If there exists no such constraint, $\calo_\Phi$ returns some unique symbol.
An instance $\Phi$ is called \textit{$\epsilon$-far from satisfiability} if we must remove at least $\epsilon dn/k$ constraints to make $\Phi$ satisfiable.
An instance $\Phi$ is called \textit{$\epsilon$-close to satisfiability} if we can make $\Phi$ satisfiable by removing at most $\epsilon dn/k$ constraints.
Let $P:\bit^k\to \bit$ be a predicate (a function).
Then, \csp{$P$} is a sub-problem of \kcsp in which every constraint is specified by the same predicate $P$ and literals on it (see Section~\ref{sec:preliminary} for details).
For a concrete predicate, we often use $P$ as the name of a problem instead of writing \csp{$P$} (e.g., \kxor).

The first contribution of this paper is the development of a new technique to show lower bounds for testing a wide range of \csp{$P$}.
A predicate $P$ is called \textit{symmetric} if the following conditions hold: 
(i) $P(x)=P(y)$ for any $x,y\in \bit^k$ such that $|x|=|y|$.
(ii) $P(x)=P(\overline{x})$ for any $x\in \bit^k$ where $\overline{x}=(1,\ldots,1)-x$.
We assume $|P^{-1}(1)|>0$ throughout this paper.
The simplest symmetric predicates might be \kequ$:\bit^k\to \bit$, 
which is satisfied iff the variables are all zeros or all ones,
and \knae$:\bit^k\to \bit$,
which is satisfied iff not all of the variables have the same value.
\knae is much related to coloring on $k$-uniform hypergraphs.
We show the next theorem.
\begin{theorem}\label{thr:two-sided}
  Let $P:\bit^k\to\bit$ be a symmetric predicate except \kequ where $k\geq 3$.
  Then, for any $\epsilon>0$ and predicate $Q:\bit^k\to\bit$ such that $P^{-1}(1)\subseteq Q^{-1}(1)$,
  there exist $\delta=O(1/\log (k/\epsilon^2))$ and $d=O(1/\epsilon^2)$ such that 
  every $(|Q^{-1}(0)|/2^k-\epsilon)$-tester for \csp{$Q$} with a degree bound $d$ requires $\Omega(n^{1/2+\delta})$ queries,
\end{theorem}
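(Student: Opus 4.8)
The plan is to exhibit two distributions, $\caldsat$ over satisfiable instances of \csp{$Q$} and $\caldfar$ over instances that are (with probability $1-o(1)$) $(|Q^{-1}(0)|/2^{k}-\epsilon)$-far from satisfiability, both with degree bound $d$, so that the views of any $q$-query algorithm under the two distributions are $o(1)$-close in total variation whenever $q=o(n^{1/2+\delta})$; since a $(|Q^{-1}(0)|/2^{k}-\epsilon)$-tester must accept the former and reject the latter, no such tester can use $o(n^{1/2+\delta})$ queries. Both distributions put down the \emph{same} factor graph, a uniformly random $d$-regular $k$-uniform hypergraph $H$ on the $n$ variables, and differ only in the literal patterns. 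In $\caldfar$ each constraint is $Q$ applied to a pattern drawn independently and uniformly from $\bit^{k}$; a fixed assignment then violates each constraint independently with probability $|Q^{-1}(0)|/2^{k}$, so Hoeffding together with a union bound over the $2^{n}$ assignments shows that, for $d=\Theta(k/\epsilon^{2})$, $\caldfar$ is supported on $(|Q^{-1}(0)|/2^{k}-\epsilon)$-far instances with probability $1-o(1)$. In $\caldsat$ we first plant a uniformly random $\sigma\in\bit^{n}$; for each constraint on variables $v_{1},\dots,v_{k}$ we sample $z\in P^{-1}(1)$ uniformly and use the pattern $(\sigma_{v_{1}}\oplus z_{1},\dots,\sigma_{v_{k}}\oplus z_{k})$, again wrapped in $Q$. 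Since $P^{-1}(1)\subseteq Q^{-1}(1)$, every instance in $\supp(\caldsat)$ is satisfied by $\sigma$, which is exactly where the hypothesis on $Q$ is used; note also that the predicate symbol is $Q$ throughout, so it leaks nothing.

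By Yao's principle we may fix a deterministic $q$-query algorithm and bound the total variation distance of its two views. As $H$ is common to both, a view is the sequence of literal patterns on the explored constraints. The one local fact we need is that, because $P$ is symmetric, $P^{-1}(1)$ is balanced in every coordinate, i.e.\ $\E_{z\sim P^{-1}(1)}[(-1)^{z_{i}}]=0$; hence when the algorithm exposes a constraint meeting the already-explored part in at most one variable, a freshly drawn coordinate of $z$ (or a fresh $\sigma$-value) makes its pattern uniform on $\bit^{k}$ and independent of the history, under both $\caldsat$ and $\caldfar$. Thus the views are identically distributed as long as the explored subgraph is a forest, which yields the birthday bound $\Omega(n^{1/2})$. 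To pass it, consider the step that first closes a cycle: the two endpoints $u,v$ of the new constraint are already explored and joined by a forest path of length $\ell$, and the pattern on these two coordinates equals the two-coordinate marginal of $z\sim P^{-1}(1)$, of bias $\alpha:=\E_{z\sim P^{-1}(1)}[(-1)^{z_{i}+z_{j}}]$ (independent of $i\neq j$, by symmetry), shifted by the posterior of $(\sigma_{u},\sigma_{v})$. Walking the path with a $2\times 2$ transfer matrix whose nontrivial eigenvalue has modulus $|\alpha|$ shows this posterior is $O(|\alpha|^{\ell-1})$-close to uniform, so the conditional law of the new pattern differs from its (uniform) law under $\caldfar$ by $O(|\alpha|^{\ell-1})$ in total variation. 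Crucially $|\alpha|<1$ precisely when $k\ge 3$ and $P\neq\kequ$: if $S\subseteq\{0,\dots,k\}$ is the set of accepted Hamming weights, then $|\alpha|=1$ forces $S\subseteq\{0,k\}$, i.e.\ $P=\kequ$.

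It remains to amalgamate over the cycles an algorithm can see. Because $H$ is random, short cycles are out of reach: a given cycle of length $\ell$ lies in the explored region with probability $O(q/n)$, and there are $O((dk)^{\ell})$ of them, so with probability $1-o(1)$ the explored subgraph has girth at least $g:=\tfrac14\log_{dk}n$ and, by deferred decisions, cycle rank $r=O(q^{2}/n)$ (with a $k$-dependent constant). On this event a step-by-step coupling of the two views fails only at the at most $r$ cycle-closing steps, each with probability $O(|\alpha|^{g-1})$, so $\dtv\le o(1)+O\!\left(r\,|\alpha|^{\,g}\right)=o(1)+O\!\left((q^{2}/n)\,n^{-\Theta(\log_{dk}(1/|\alpha|))}\right)$, which is $o(1)$ once $q=o\!\left(n^{1/2+\delta}\right)$ with $\delta=\Theta\!\left(\log(1/|\alpha|)/\log(dk)\right)$. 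Since $d=\Theta(k/\epsilon^{2})$ we have $\log(dk)=\Theta(\log(k/\epsilon^{2}))$ and $\log(1/|\alpha|)$ is a positive constant depending only on $P$, so $\delta>0$ and $\delta=O(1/\log(k/\epsilon^{2}))$, as required. I expect the \emph{main obstacle} to be exactly this last step: one must show the transfer-matrix estimate $O(|\alpha|^{\ell-1})$ survives conditioning on the earlier (long) cycles already revealed — so that the posterior on $\sigma$ stays spread out throughout the exploration — and one must preclude an adaptive strategy that steers into an atypically short-cycled or dense region of $H$ (which is where the randomness of $H$, rather than an explicit high-girth graph, is essential).
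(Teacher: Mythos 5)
Your strategy is essentially the paper's: Yao's principle with a planted distribution $\caldsat$ versus a random-literal distribution $\caldfar$ over the same random $d$-regular $k$-uniform hypergraph, using $P^{-1}(1)\subseteq Q^{-1}(1)$ to plant $P$-patterns inside constraints presented via $Q$; the no-information-on-a-hyperforest fact; the random-graph facts that an $o(n^{1/2+\delta})$-query exploration has $O(n^{3\delta})$ cycle rank and girth $\Omega(\log_{dk}n)$; and a coupling of decision trees whose failure probability is charged to cycle-closing steps. Your transfer-matrix observation that the one-edge propagation kernel is doubly stochastic (precisely because each $\bfx_v$ remains uniform given hyperforest literal vectors, which is the paper's Lemma~\ref{lmm:cannot-guess-tree}) with second eigenvalue of modulus $|\alpha|=\bigl|\E_{z\sim P^{-1}(1)}[(-1)^{z_i+z_j}]\bigr|$ is a cleaner and more explicit way to obtain the decay constant than the paper's Lemma~\ref{lmm:merge-edge}, and your characterization of $|\alpha|<1$ for symmetric $P\neq\kequ$ with $k\ge3$ is correct. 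That part I would call a genuine, if local, improvement in transparency.

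The genuine gap is the one you correctly flag as the ``main obstacle'' and leave unresolved, and it is not a technicality. Once the exploration is no longer a forest, the posterior on $\sigma$ seen at the next cycle-closing step is not a pure path posterior, so the $2\times2$ transfer-matrix estimate along the single $u$--$v$ path does not directly bound the bias of the new literal vector; nor does your sketch handle the (rare but possible) event that the new constraint meets the explored region in more than two variables. The paper closes exactly this hole in Lemma~\ref{lmm:E+S-to-v} (with Lemma~\ref{lmm:E-to-e} for the final literal-vector step): take the radius-$g$ hypertree $T$ around $v$ in $G-e$ (it is a hypertree because the girth is $\ge g$), observe that a leaf $u$ of $T$ can carry information about $\bfx_v$ only if the component $C_u$ of $G-e-T$ hanging from $u$ contains a vertex of $S$ or a hypercycle --- all other $C_u$ are hyperforests and inert by Lemma~\ref{lmm:cannot-guess-tree} --- and count that there are at most $2\,\cy(G-e)+k$ such leaves. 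Feeding this into the merge lemmas gives $\dtv[\cald_H(\bfx_v\mid\bfx_S,\bfb_{E-e})]\le\rho(P)^g\bigl(2\,\cy(G-e)+k\bigr)$, a bound that degrades only linearly in the cycle rank, which is exactly what the union bound over the $O(n^{3\delta})$ cycle-closing steps requires. Without this leaf-counting argument your per-step estimate of $O(|\alpha|^{g-1})$ is unjustified, so the proposal as written does not constitute a complete proof.
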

We note that a $(|Q^{-1}(0)|/2^k)$-tester is trivial since no instance can be $(|Q^{-1}(0)|/2^k)$-far from satisfiability and we can always accept.
Thus, Theorem~\ref{thr:two-sided} excludes the possibility of efficient \textit{non-trivial} testers.
We also stress that it is impossible to get rid of the condition of symmetry since for a certain non-symmetric CSP, 
called \textsf{Dicut}, we have a constant-time non-trivial tester using recent results~\cite{Tre98,Yos10}.
The lower bound $\Omega(n^{1/2+\delta})$ is somewhat surprising since, 
as we will see in Section~\ref{sec:overview}, 
this lower bound implies that even if we find cycles in the instance,
they do not help at all to test the satisfiability.

\kxor is a predicate of arity $k$, 
which is satisfied iff the parity of its variables is $1$.
We show a similar lower bound for \txor.
\begin{theorem}\label{thr:txor}
  For any $\epsilon>0$,
  there exist $\delta=O(\epsilon/\log(k/\epsilon^2))$ and $d=O(1/\epsilon^2)$ such that every $(\epsilon,1/2-\epsilon)$-tester for \txor with a degree bound $d$ requires $\Omega(n^{1/2+\delta})$ queries.
\end{theorem}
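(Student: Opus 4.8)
The plan is to reuse the architecture behind Theorem~\ref{thr:two-sided} — a Yao-type pair $\caldsat,\caldfar$ of distributions on \txor instances that is statistically indistinguishable to any algorithm making few queries — with the one modification forced by the fact that, for \txor, a cycle in an instance \emph{does} constrain satisfiability (its labels must sum to $0$), in contrast to the symmetric predicates of arity $\ge 3$. The fix is to arrange that the only informative cycles an algorithm can discover are simultaneously \emph{long} (so their label-parity is noisy) and \emph{few}. Concretely, take $d=\Theta(1/\epsilon^2)$, let the constraint graph be a random $d$-regular multigraph $\bfG$ on $n$ variables (the configuration model), common to both distributions, and specify an instance by one bit $b_e\in\bit$ per edge $e=uv$, the constraint being $x_u\oplus x_v=b_e$. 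In $\caldfar$ the bits $b_e$ are i.i.d.\ uniform. In $\caldsat$ we pick $\sigma\in\bit^n$ uniformly, set $b_e=\sigma_u\oplus\sigma_v$, and then flip each $b_e$ independently with probability $\epsilon/2$.

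The closeness/farness claims are the easy steps. An instance from $\caldsat$ is satisfiable after removing its $\le\epsilon|E(\bfG)|$ flipped constraints (with high probability), hence $\epsilon$-close. For $\caldfar$, setting $y_i=(-1)^{x_i}$, the number of satisfied constraints is $\tfrac12|E(\bfG)|+\tfrac14\langle y,A_by\rangle$ with $A_b$ the adjacency matrix of $\bfG$ signed by $(-1)^{b_e}$; since a random signing of a $d$-regular graph has spectral norm $O(\sqrt d)$ with high probability, the optimum is $\le(\tfrac12+O(1/\sqrt d))|E(\bfG)|\le(\tfrac12+\epsilon)|E(\bfG)|$ for $d=\Omega(1/\epsilon^2)$, i.e.\ the instance is $(1/2-\epsilon)$-far.

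The heart is indistinguishability for $q=n^{1/2+\delta}$ queries. I would reveal $\bfG$ edge-by-edge through the configuration model, precisely when the algorithm probes a fresh constraint, coupling $\caldsat$ and $\caldfar$ so the same edge appears in both; the new edge's label is a fresh uniform bit in both experiments \emph{unless} it closes a cycle $C$ in the revealed subgraph $H$, in which case the $\caldfar$-label is uniform while the $\caldsat$-label is a coin biased by $(1-\epsilon)^{|C|}$ toward the consistent value. Two configuration-model estimates, valid with probability $1-o(1)$ against any adaptive strategy (each step matches a uniformly random half-edge, so a union bound or martingale over the $q$ steps suffices), then do the work: (i) over all $q$ steps the number of cycle-closings is $m:=O(q^2/n)=O(n^{2\delta})$; and (ii) every cycle that ever closes has length $\ge L_0:=(\tfrac12-2\delta)\log_d n$, because closing a shorter one needs a half-edge to fall inside a radius-$L_0$ ball of size $d^{L_0}=n^{1/2-2\delta}$, an event of total probability $\le q\cdot d^{L_0}/n=n^{-\delta}$. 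Thus the transcripts coincide until the first cycle-closing, and the $\le m$ cycle-closings are the only source of divergence; equivalently, conditioned on $H$ the two label vectors on $E(H)$ have Fourier characters agreeing everywhere except on the cycle space of $H$, where $\caldsat$ assigns $\chi_S\mapsto(1-\epsilon)^{|S|}$. Combining these $\le m$ nearly independent discrepancies via the Hellinger (or $\chi^2$) chain rule — so that $m$ coins of bias $\beta$ contribute only $\sqrt m\,\beta$, not $m\beta$, to the total variation — yields distinguishing advantage $O(\sqrt m\,(1-\epsilon)^{L_0})=O(n^{\delta-(1/2-2\delta)\ln(1/(1-\epsilon))/\ln d})$, which is $o(1)$ whenever $\delta<(1-4\delta)\ln(1/(1-\epsilon))/(2\ln d)$; with $d=\Theta(1/\epsilon^2)$ this leaves room for some $\delta=\Theta(\epsilon/\log(1/\epsilon^2))$, matching the claimed $\delta=O(\epsilon/\log(k/\epsilon^2))$. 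Since $\caldsat$ and $\caldfar$ are supported on $\epsilon$-close and (with probability $1-o(1)$) $(1/2-\epsilon)$-far instances and are indistinguishable with $q$ queries, any $(\epsilon,1/2-\epsilon)$-tester needs $\Omega(n^{1/2+\delta})$ queries.

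I expect the main obstacle to be exactly the last two items of the indistinguishability step: first, making the coupling airtight even though the oracle returns \emph{labels} (so the algorithm's future queries depend on what it has already seen), which requires an induction on the query count showing that the experiments agree step by step until a cycle closes; and second, getting the square-root (Hellinger/$\chi^2$) saving rather than a union bound, since the naive estimate $\sum_C(1-\epsilon)^{|C|}$ is weaker by a factor $\sqrt m$ and would not admit any positive $\delta$. The supporting facts — the configuration-model collision and girth-type bounds against adaptive strategies, the $O(\sqrt d)$ spectral bound for random signings, and the identity $\sum_{S\in\mathrm{cyc}(H)}(1-\epsilon)^{2|S|}=\prod_i(1+(1-\epsilon)^{2|C_i|})$ up to lower-order terms when the few excess edges of $H$ are well separated — are standard and parallel the corresponding lemmas in the proof of Theorem~\ref{thr:two-sided}.
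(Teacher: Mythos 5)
Your construction matches the paper's: plant a random assignment $\sigma$, add independent per-constraint noise at rate $\Theta(\epsilon)$ to get $\caldsat$ (the paper's $\cald_{H,\epsilon}$ in Appendix~\ref{apx:txor}, with flip rate $\epsilon$ instead of your $\epsilon/2$), keep $\caldfar$ uniform on literal vectors, and couple the underlying configuration-model graph across the two distributions. Your two structural estimates are exactly Lemmas~\ref{lmm:few-cycles} and~\ref{lmm:large-girth}. From there the two arguments diverge. For farness the paper uses Hoeffding plus a union bound over all $2^n$ assignments (Lemma~\ref{lmm:far}); you use the $O(\sqrt d)$ spectral norm of a random signing, which also works. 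For indistinguishability the paper propagates total-variation bounds locally (Lemma~\ref{lmm:merge-edge-txor} replaces $\rho(P)$ by $1-2\epsilon$; Lemmas~\ref{lmm:E+S-to-v} and~\ref{lmm:E-to-e} give a per-cycle cost $O(\cy(G-e)\rho^g)$; the decision-tree coupling in the proof of Theorem~\ref{thr:two-sided-weak} sums these additively to $O(c^2\rho^g)$ with $c=O(n^{3\delta})$). You instead read off the Fourier spectrum of the conditional label distribution on the cycle space and accumulate via a Hellinger/$\chi^2$ chain rule, buying a $\sqrt m$ in place of $c^2$. That route can be made to work and is cleaner when the excess edges of the revealed subgraph are pairwise well separated, but you are mistaken that the additive union bound ``would not admit any positive $\delta$'': plugging $c=O(n^{3\delta})$, $g=(\tfrac12-2\delta)\log_{dk}n$ and $\rho=1-2\epsilon$ into $O(c^2\rho^g)$ already gives $o(1)$ for every $\delta$ up to a threshold of order $\epsilon/\log(k/\epsilon^2)$, which is precisely what the paper does. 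One more caveat: your per-cycle bias $\beta=(1-\epsilon)^{L_0}$ silently drops the $\cy(G-e)$ factor that the paper carries to account for several short $u$--$v$ paths once multiple excess edges have appeared; your Fourier version handles this correctly only modulo the well-separation claim, which you state but do not establish.
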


If an $\epsilon$-tester always accepts satisfiable instances,
it is called a \textit{one-sided error tester}.
Otherwise, it is called a \textit{two-sided error tester}.
We give a tight lower bound for one-sided error testers.
\begin{theorem}\label{thr:one-sided}
  Let $P:\bit^k\to\bit$ be a symmetric predicate except \kequ where $k\geq 3$.
  Then, for any $\epsilon>0$ and any $Q:\bit^k\to\bit$ such that $P^{-1}(1)\subseteq Q^{-1}(1)$,
  there exists $d=O(1/\epsilon^2)$ such that
  every one-sided error $(|Q^{-1}(0)|/2^k-\epsilon)$-tester for \csp{$Q$} with a degree bound $d$ requires $\Omega(n)$ queries.
\end{theorem}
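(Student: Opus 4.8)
The plan is to combine the standard ``witness'' characterization of one-sided error testers with a family of instances that are $(|Q^{-1}(0)|/2^{k}-\epsilon)$-far from satisfiability yet admit no short refutation. A one-sided error tester accepts every satisfiable instance, so whenever it rejects (for some fixing of its random coins) the query/answer transcript it has read must be inconsistent with \emph{every} satisfiable degree-$d$ instance on $n$ variables; in particular the set $C$ of constraints it has seen, viewed as a stand-alone \csp{$Q$} instance (after a routine relabeling of constraint slots, padded if necessary with dummy constraints on untouched variables to respect the observed ``$i$-th constraint of $x$'' labels --- possible since only $o(n)$ variables are touched and $Q^{-1}(1)\ne\emptyset$), must be unsatisfiable. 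Hence it suffices to produce, for some constant $c=c(P,\epsilon)>0$, a distribution that with probability $1-o(1)$ is $(|Q^{-1}(0)|/2^{k}-\epsilon)$-far from satisfiability and \emph{all} of whose sub-instances on at most $cn$ constraints are satisfiable: a tester reading $q\le cn$ constraints can never reach a rejecting transcript, so it cannot reject the far instances with probability $2/3$, and $q=\Omega(n)$ follows.

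For the distribution I would take a random $d$-regular $k$-uniform constraint hypergraph with independent uniform negation patterns and every constraint equal to $Q$ --- a random bounded-degree instance much like the far side of the construction behind Theorem~\ref{thr:two-sided} --- with $d=\Theta(k/\epsilon^{2})=O(1/\epsilon^{2})$. Farness is the routine first-moment bound: for any fixed assignment the expected fraction of violated constraints is exactly $|Q^{-1}(0)|/2^{k}$, because the random negations symmetrize $Q$; a Chernoff bound over the roughly independent $\Theta(dn/k)$ constraints keeps the violated fraction below $|Q^{-1}(0)|/2^{k}-\epsilon$ with probability at most $e^{-\Omega(\epsilon^{2}dn/k)}$; and a union bound over the $2^{n}$ assignments closes once $d\gtrsim k/\epsilon^{2}$. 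This is the step that fixes $d=O(1/\epsilon^{2})$.

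The core is to show that w.h.p.\ every sub-instance on at most $cn$ constraints is satisfiable, which I would assemble from two facts. \textbf{(i) Sparsity.} A first-moment computation shows that for suitable constants $\delta>0$ and $\alpha$ slightly above $1/(k-1)$, with probability $1-o(1)$ the constraint hypergraph has no sub-hypergraph on $t\le\delta n$ vertices with more than $\alpha t$ hyperedges: the expected number of $t$-vertex sub-hypergraphs violating this bound scales, up to lower-order factors, as $n^{t(1+\alpha(1-k))}$, which tends to $0$ precisely because $k\ge 3$ makes the exponent negative. (For $k=2$ this breaks down, and indeed short parity cycles then \emph{are} genuine refutations --- consistent with the hypothesis $k\ge 3$ and with Theorem~\ref{thr:txor}.) Taking $c$ small enough, every sub-instance on at most $cn$ constraints then sits on a sub-hypergraph of average degree strictly below $2$, hence a ``peelable'' one: every sub-sub-hypergraph has a hyperedge with a degree-one vertex. \textbf{(ii) Sparse $\Rightarrow$ satisfiable.} Since $P$ is symmetric and not \kequ, $P^{-1}(1)\subseteq Q^{-1}(1)$ contains all strings of some weight $w$ with $0<w<k$. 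Peeling hyperedges off and assigning variables in the reverse order, each hyperedge when incorporated has at least one free variable, and choosing its free literals lets us move its literal-weight through a whole interval; using the non-trivial accepting weight $w$ one wants to argue that this interval can always be made to meet $\{w,k-w\}$ while leaving enough slack on the boundary for the remaining hyperedges, so the whole sub-instance is satisfiable.

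I expect step (ii) to be the main obstacle. The naive ``peel one degree-one vertex at a time'' argument can get stuck, because $\{w,k-w\}$ may have a gap wider than $1$ and a hyperedge incorporated with only one free variable then need not be able to reach any accepting weight. So one must either extract from the sparsity threshold that the hyperedge being peeled always carries enough private variables to bridge the gap --- which goes through comfortably for $k\ge 5$, and for $k=3$ the hypothesis already forces the accepting weights $1$ and $2$, but it is tight near $k=4$ --- or, more robustly, maintain throughout the peeling an invariant on the set of partial satisfying assignments (for instance that its restriction to the current boundary realizes a contiguous range of literal-weights longer than the gap) and show the sparsity keeps that invariant alive. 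Balancing the sparsity parameter against the gap in $Q^{-1}(1)$, uniformly over $k$ and over all $Q$ with $P^{-1}(1)\subseteq Q^{-1}(1)$, is the delicate point; everything else is bookkeeping.
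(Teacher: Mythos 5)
Your high-level plan coincides with the paper's: a one-sided tester can reject only after reading a transcript that is already unsatisfiable, so it suffices to exhibit an instance that is $(|Q^{-1}(0)|/2^{k}-\epsilon)$-far while every sub-instance on $O(n)$ constraints is satisfiable; farness and the choice $d=O(1/\epsilon^{2})$ are the Chernoff/union-bound computation of Lemma~\ref{lmm:far}, and the sparsity you invoke is the expander property of Lemma~\ref{lmm:expander}. The gap is exactly where you flag it, in step~(ii), and as written it is fatal: peeling by a degree-one vertex gives only \emph{one} private literal in the peeled constraint. Once the other $k-1$ shared literals are fixed to some weight $s'$, the reachable weights are $\{s',s'+1\}$, and for this to always meet $Q^{-1}(1)$ you would need $Q$ to accept a weight out of \emph{every} pair of consecutive integers in $\{0,\dots,k\}$, which the hypotheses do not give you --- for $k=5$ the symmetric $P\ne\kequ$ with $P^{-1}(1)$ supported on weights $\{2,3\}$ is legal and fails at $s'=0$.

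The repair is not a more clever peeling invariant but a sharper consequence of the expansion you already have, and it is what the paper does. Taking $\eta<1/2$ in the $(\gamma,\eta)$-expander (equivalently tightening your $\alpha$) forces the average number of private vertices per hyperedge in any sub-instance on $s\le\gamma n$ edges to exceed $k-3$, so some constraint shares at most \emph{two} variables with the rest, i.e.\ has at least $k-2$ free literals. With two fixed literals contributing weight $s\in\{0,1,2\}$ and $k-2$ free ones, the reachable weights are the entire interval $[s,s+k-2]$, which always meets $\{w,k-w\}$ for the accepted weight $2\le w\le k-1$ that symmetry and $P\ne\kequ$ guarantee (and hence that $Q$ inherits from $P^{-1}(1)\subseteq Q^{-1}(1)$). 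This is precisely the paper's induction in Theorem~\ref{thr:one-sided-weak}; it works uniformly for all $k\ge 3$ and dissolves the $k=3,4$ casework you were worried about.
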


On the other hand,
\kequ is an easier problem as stated in the next theorem.
\begin{theorem}\label{thr:equ}
  For any $\epsilon>0,d\geq 1$ and $k\geq 2$,
  there exists a one-sided error $\epsilon$-tester for \kequ with query complexity $O(n^{1/2}\poly(dk\log n/\epsilon))$.
\end{theorem}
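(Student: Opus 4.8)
The plan is to reduce testing satisfiability of \csp{\kequ} to testing bipartiteness of a bounded-degree graph, and then to invoke the sublinear-time bipartiteness tester of Goldreich and Ron, which runs in $\tilde{O}(\sqrt{m})\cdot\poly(\log m/\epsilon)$ queries on $m$-vertex bounded-degree graphs and already has one-sided error. A \kequ constraint $c$ on literals over variables $x_{i_1},\ldots,x_{i_k}$ asserts that all $k$ literals take the same value; equivalently, for any two of its variables it forces $x_{i_a}=x_{i_b}$ or $x_{i_a}\neq x_{i_b}$, according to whether their signs in $c$ agree. Hence $\Phi$ is satisfiable iff the \emph{signed constraint graph} $H_\Phi$ is \emph{balanced}: $H_\Phi$ has the variables as vertices and, for each constraint, a spanning star joining its variables (rooted, say, at the lowest-indexed one), each edge labelled $0$ or $1$ by the relative sign; balanced means every cycle carries even total label. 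A standard observation identifies this with bipartiteness: letting $G_\Phi$ be the unsigned graph obtained from $H_\Phi$ by subdividing each label-$0$ edge with a fresh degree-$2$ vertex, a satisfying assignment of $\Phi$ is exactly a proper $2$-colouring of $G_\Phi$ (colour each variable by its value, and each subdivision vertex by the opposite of its two now-equal endpoints). Since every constraint has $k$ variables and every variable lies in at most $d$ constraints, $H_\Phi$ has at most $dn$ edges, so $G_\Phi$ has $m=O(dn)$ vertices and, with the rooted-star choice, maximum degree $D=O(dk)$; moreover a single call to $\calo_\Phi$ returns a whole constraint, so one adjacency-list query to $G_\Phi$ is answered with $O(d)$ calls to $\calo_\Phi$.

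The tester, then, is: run the Goldreich--Ron bipartiteness tester on $G_\Phi$ with a farness parameter $\epsilon'$ to be chosen, answering its queries through $\calo_\Phi$, and accept iff it accepts. One-sided error is immediate, since a satisfying assignment of $\Phi$ yields a proper $2$-colouring of $G_\Phi$ and the bipartiteness tester never rejects a bipartite graph. For soundness one transfers farness: if $G_\Phi$ becomes bipartite after deleting $r$ edges, then deleting the at most $r$ constraints whose star-gadgets contain a deleted edge makes $\Phi$ satisfiable (the remaining graph is a subgraph of a bipartite graph), so $\Phi$ being $\epsilon$-far, i.e.\ requiring $\ge \epsilon dn/k$ constraint deletions, forces $r\ge \epsilon dn/k$, which is an $\epsilon'=\Omega(\epsilon/(dk^2))$ fraction of the $Dm=O(d^2kn)$ edge slots of $G_\Phi$. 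With this $\epsilon'$ the bipartiteness tester rejects $\epsilon$-far $\Phi$ with probability at least $2/3$, at cost $\tilde{O}(\sqrt{m})\cdot\poly(\log m/\epsilon')\cdot O(d)=O(n^{1/2}\poly(dk\log n/\epsilon))$ calls to $\calo_\Phi$, as claimed.

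The bulk of the work is technical rather than conceptual: verifying that subdivision exactly captures balancedness, that deleting a constraint removes \emph{all} edges of its gadget (so edge deletions in $G_\Phi$ correspond to constraint deletions in $\Phi$), and that the degree and size bounds survive the canonical choice of star roots; and dispatching degenerate constraints. A constraint containing a variable with conflicting signs makes $\Phi$ unsatisfiable, so any such constraint seen during a query triggers a reject, and a preliminary sample of $\poly(k/\epsilon)$ random $(\mathrm{variable},\mathrm{index})$ pairs catches the case where $\Phi$ is $\epsilon$-far mainly because of such constraints; a repeated literal is simply dropped. One-sided error is untouched, as a satisfiable instance has no degenerate constraints.

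The only genuinely \emph{conceptual} obstacle arises if one insists on a self-contained argument that does not cite the bipartiteness tester as a black box. In that case one reproves the relevant part of Goldreich and Ron's analysis directly on $H_\Phi$: run $\poly(dk\log n/\epsilon)$ lazy random walks of length $\poly(dk\log n/\epsilon)$ from $\tilde{O}(\sqrt n)$ random start variables, carrying along the running XOR of the edge labels, and reject upon a closed walk of odd total label (which necessarily contains a frustrated, hence violated, sub-cycle). The only addition over the unsigned bipartiteness argument is propagating the parity bit, and the ``far-from-balanced $\Rightarrow$ walks spread out $\Rightarrow$ birthday collision of two equal-endpoint, opposite-parity walks'' analysis then goes through essentially unchanged, after replacing the unsigned graph by the lazily regularised $H_\Phi$ so that the mixing arguments apply.
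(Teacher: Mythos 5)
Your proposal is correct and follows essentially the same strategy as the paper: reduce \kequ to graph bipartiteness and invoke the Goldreich--Ron one-sided error bipartiteness tester~\cite{GR99}, after translating the farness parameter. The one-sided error, degree, size, and cost accounting all check out, and the argument that deleting the $\leq r$ constraints touched by an $r$-edge bipartization of $G_\Phi$ leaves a satisfiable sub-instance is the crux of the farness transfer, which you identify correctly.

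The only real divergence from the paper is in the choice of gadget. The paper goes through an explicit intermediate problem, $2$-\equ, by expanding each $k$-ary equality into all $\binom{k}{2}$ pairwise equalities, and then maps $2$-\equ to bipartiteness by the literal-doubling construction (one vertex per literal $x$ and $\overline{x}$, a ``variable edge'' $(x,\overline{x})$, and two ``constraint edges'' $(\ell_1,\overline{\ell_2}),(\overline{\ell_1},\ell_2)$ per constraint $\ell_1=\ell_2$); this keeps the vertex set at $2n$ and produces degree bound $O(dk)$, so the farness loss in each of the two reduction lemmas is tracked separately. You instead take the signed-graph view directly: a spanning star per constraint with edges labelled by relative sign, and then subdivide the label-$0$ (``equal'') edges so that balancedness becomes bipartiteness. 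Your graph has $O(dn)$ vertices rather than $2n$, but since $\sqrt{d}$ is absorbed into $\poly(d)$ this is harmless; your farness loss $\epsilon' = \Omega(\epsilon/(dk^2))$ also lands in the same place. The two gadgets are interchangeable and neither route is more general or more elementary than the other; the paper's version has slightly cleaner bookkeeping because each of the two steps ($k\to 2$ and $2\to$graph) is a simple local replacement on a constant-size object, whereas yours handles both in one step at the cost of having to argue about subdivision vertices. Your remarks about degenerate constraints (repeated variable with conflicting signs) are extra caution not present in the paper, which implicitly assumes hyperedges have distinct vertices; this does not affect correctness.
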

\textit{Bipartiteness} is the property of a graph such that the vertex set can be partitioned into two disjoints sets $U$ and $V$ such that every edge connects a vertex of $U$ and a vertex of $V$.
Theorem~\ref{thr:equ} is almost tight since testing bipartiteness is a sub-problem of $2$-\equ and the $\Omega(\sqrt{n})$ lower bound is known for this problem~\cite{GR08}.

The second contribution of this work is a linear lower bound to distinguish satisfiable instances of the general \kcsp from instances much further from satisfiability.
\begin{theorem}\label{thr:kcsp}
  For any $\epsilon>0$ and $k\geq 3$, 
  there exists $d=O(1/\epsilon^2)$ such that 
  every $(1-2k/2^k-\epsilon)$-tester for \kcsp with a degree bound $d$ requires $\Omega(n)$ queries.
\end{theorem}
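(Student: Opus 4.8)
\emph{Proof plan.} The plan is to apply Yao's minimax principle: exhibit two distributions, $\caldsat$ supported on satisfiable instances and $\caldfar$ supported on instances that are $(1-2k/2^k-\epsilon)$-far from satisfiability, and then show that no deterministic algorithm making $o(n)$ queries can distinguish a random draw from $\caldsat$ from a random draw from $\caldfar$. Since a $(1-2k/2^k-\epsilon)$-tester would accept $\caldsat$-instances and reject $\caldfar$-instances each with probability $\ge 2/3$, it would distinguish the two ensembles, so $\Omega(n)$ queries are forced.

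For the construction I would fix a predicate $P:\bit^k\to\bit$ with $|P^{-1}(1)|=2k$ whose uniform distribution on $P^{-1}(1)$ has uniform one-coordinate marginals (for instance, $P(x)=1$ iff $x$ has Hamming weight $1$ or $k-1$, which needs $k\ge 3$ for the two weight levels to be distinct --- matching the hypothesis). Both distributions first sample the same constraint hypergraph $G$: a uniformly random $k$-uniform, $d$-regular hypergraph on the $n$ variables. In $\caldfar$ each hyperedge $C$ gets the constraint $x\mapsto P(x\oplus b_C)$ with the literal pattern $b_C$ independent and uniform in $\bit^k$; in $\caldsat$ one first plants a uniformly random assignment $\sigma$ and then draws $b_C$ uniformly among the $2k$ patterns with $P(\sigma|_C\oplus b_C)=1$, so the instance is satisfiable by design. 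The ``far'' side is a routine first-moment bound: for a fixed assignment $\tau$ the fraction of satisfied constraints concentrates around $|P^{-1}(1)|/2^k=2k/2^k$, so by a Chernoff estimate it exceeds $2k/2^k+\epsilon$ with probability $e^{-\Omega(\epsilon^2 dn/k)}$; taking $d=\Theta(k/\epsilon^2)$ (which is $O(1/\epsilon^2)$ for fixed $k$) and union-bounding over all $2^n$ assignments shows that with high probability every assignment violates a $(1-2k/2^k-\epsilon)$-fraction of the $dn/k$ constraints. This is the step that pins down the degree bound.

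For indistinguishability I would model an adaptive algorithm as progressively revealing a connected sub-hypergraph $H$ of $G$ together with the literal pattern on each revealed hyperedge. The marginal law of a single $b_C$ is uniform in both ensembles, and, crucially, revealing the hyperedges one at a time, whenever a new hyperedge meets the already-revealed structure in at most one ``tree-like'' vertex, the uniform one-coordinate marginals of $P^{-1}(1)$ make the consistency constraint at that vertex vacuous in the posterior, so the conditional law of the fresh pattern is uniform in $\caldsat$ exactly as in $\caldfar$. Hence as long as the explored region stays a hyperforest the two query-answer processes are identically distributed; since a birthday argument makes the explored region a hyperforest with high probability for $o(\sqrt{n})$ queries, this already reproves an $\Omega(\sqrt{n})$ lower bound, but nothing stronger.

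The hard part will be to go from $\Omega(\sqrt{n})$ to the claimed $\Omega(n)$, i.e., to control what the algorithm gains from the cycles it inevitably uncovers once it asks $\omega(\sqrt{n})$ queries. The plan is to pick the constant $c$ in ``$q\le cn$'' small enough (as a function of $d$ and $k$) that, with high probability over $G$, every sub-hypergraph explorable with $\le cn$ queries is badly underconstrained --- it carries only a tiny constant times $|E(H)|$ cycle-closing hyperedges, so its density $|E(H)|/|V(H)|$ stays far below the satisfiability threshold. On such a sparse $H$ the conditional law of the revealed patterns in $\caldsat$ is just the $\caldfar$-law reweighted in proportion to the number $N(H,b)$ of satisfying assignments of the revealed sub-instance, so $\dtv$ between the two laws is governed by the fluctuations of $N(H,b)$ under $\caldfar$; the remaining work is to show that, in this low-density regime, those fluctuations are $o(1)$ --- equivalently, that no $o(n)$-query exploration ever discovers a sub-configuration certifying unsatisfiability. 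Getting this concentration / ``no local certificate'' statement to hold, with the quantitative dependence among $c$, $d$, $k$, $\epsilon$ closing up, is the delicate core of the argument; it is also where the freedom of \kcsp to mix constraint functions (rather than being stuck with a fixed symmetric predicate, which would force small unsatisfiable sub-configurations such as Fano-type gadgets) does the real work. Finally, the \mis corollary would follow by encoding the independent-set constraints of a random bounded-degree graph as a \kcsp and translating the above satisfiability gap into the stated multiplicative-plus-additive gap for the independence number.
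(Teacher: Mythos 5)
Your high-level frame (Yao's principle, a planted ensemble $\caldsat$ vs.\ a fully random ensemble $\caldfar$, same random $d$-regular hypergraph, concentration + union bound for farness) matches the paper. But there are two serious problems, and they are precisely the places where you stop proving and start hand-waving.

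First, your choice of predicate defeats the theorem. You fix $P(x)=1$ iff $|x|\in\{1,k-1\}$, which is \emph{symmetric}. The paper's own Lemma~\ref{lmm:merge-edge} shows that for every symmetric predicate other than \equ the correlation between $\bfx_u$ and $\bfx_v$ decays by a factor $\rho(P)<1$ across each hyperedge; consequently, once an algorithm closes enough short-ish hypercycles it \emph{does} pick up a nonvanishing signal, and the construction of Section~\ref{sec:two-sided} yields only $\Omega(n^{1/2+\delta})$, not $\Omega(n)$, for two-sided testers. So the indistinguishability step you defer (``show those fluctuations are $o(1)$'') is not merely unproven --- for your $P$ it is \emph{false}: the planted posterior on a revealed literal vector is biased by $\rho(P)^{\Theta(\mathrm{girth})}$ per cycle, which is a small constant, not $o(1)$. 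You half-notice this in your final sentence, where you say the freedom of \kcsp to ``mix constraint functions'' rather than using a fixed symmetric predicate ``does the real work,'' but your construction never uses that freedom.

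Second, you never supply the mechanism that makes the $\Omega(n)$ bound go through. The paper takes $P=P_A$ for a generator matrix $A$ of a distance-$3$ linear code (Hamming code), so each constraint is a chunk of $h=k-O(\log k)$ affine equations and $|P_A^{-1}(1)|=2^{k-h}\le 2k$. The key lemma is purely linear-algebraic: because the code has distance $3$, any nontrivial linear combination of rows inside one chunk touches $\ge 3$ coordinates, so a linear dependence among the chunks of any $s$ constraints would force those $s$ constraints to span at most $(k-\tfrac32)s$ variables --- impossible in a $(\gamma,\eta)$-expander with $\eta<\tfrac12$. Hence for any $s\le\gamma n$ revealed constraints the stacked matrix $M'$ has full row rank, so $M'\bfx$ is \emph{exactly} uniform in $\caldsat$, and the query-answer distributions coincide exactly (zero total variation) for up to $\gamma n$ queries. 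There is no ``fluctuations of $N(H,b)$'' estimate to close; the planted law is the random law on the nose. Your proposal replaces this exact algebraic cancellation with an open-ended concentration claim that, as noted above, would not even be true for your predicate. To repair the argument you would need to (i) switch $P$ to a code-based predicate with distance $\ge 3$ so that $|P^{-1}(1)|\le 2k$ and (ii) replace your ``low-density $\Rightarrow$ small fluctuations'' heuristic with the linear-independence-on-expanders argument, which is what actually yields exact indistinguishability out to $\gamma n$ queries.

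Everything else --- the farness bound with $d=\Theta(1/\epsilon^2)$, the reduction to \mis via a degree-preserving FGLSS-type gadget --- is aligned with the paper.
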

As a corollary, we show a linear lower bound for approximating \mislong (\mis).
An {\it independent set} of a graph is a vertex set such that any two of its vertices are not adjacent.
\mis is the problem of finding the largest independent set in a graph.
A value $x$ is called an \textit{$(\alpha,\beta)$-approximation} for a value $x^*$ if $x^*\leq x \leq \alpha x^*+\beta$.
We call a randomized algorithm an \textit{$(\alpha,\beta)$-approximation algorithm} for \mis if,
given an oracle access $\calo_G$ to a graph $G$,
it computes an $(\alpha,\beta)$-approximation for \mis with a probability of at least $2/3$.
Similarly to \kcsp, 
by specifying a vertex $v$ and an index $i(1\leq i \leq d)$,
the oracle $\calo_G$ returns the $i$-th edge in which $v$ appears.
We show the next theorem.
\begin{theorem}\label{thr:mis}
  Every $(d/\poly\log d,\epsilon n)$-approximation algorithm for \mislong on graphs with $n$ vertices and a degree bound $d$ requires $\Omega(n)$ queries.
\end{theorem}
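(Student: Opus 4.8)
The plan is to reduce from the linear lower bound for \kcsp given in Theorem~\ref{thr:kcsp}, using a standard gadget-based encoding of a \kcsp instance into an \mislong instance (an FGLSS-style construction) that preserves bounded degree and translates the satisfiability gap into a multiplicative-plus-additive approximation gap. First I would fix $k$ to be a sufficiently large constant so that $2k/2^k$ is tiny, and recall from Theorem~\ref{thr:kcsp} that distinguishing satisfiable bounded-degree \kcsp instances from those $(1-2k/2^k-\epsilon)$-far requires $\Omega(n)$ queries. Given oracle access to such a \kcsp instance $\Phi$ on $n$ variables with degree bound $d_{\mathrm{csp}}$, I would build a graph $G$ as follows: for each constraint $C$ of $\Phi$, create a cluster of up to $2^k$ vertices, one for each partial assignment to $C$'s variables that satisfies $C$; put a clique on each cluster; and connect two vertices in different clusters iff the corresponding partial assignments conflict on a shared variable. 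Then $\mathrm{MIS}(G)$ equals the maximum number of constraints simultaneously satisfiable by a single consistent assignment, i.e. $m - (\text{minimum constraints to delete})$ where $m = d_{\mathrm{csp}} n / k$ is the number of constraints. Crucially, each query to $\calo_G$ (asking for an incident edge of a vertex) can be answered with $O(1)$ queries to $\calo_\Phi$, since the cluster containing a vertex, and the clusters it may conflict with, are determined by the $O(d_{\mathrm{csp}})$ constraints sharing a variable with $C$; so a $q$-query algorithm for \mislong yields an $O(q)$-query algorithm for the \kcsp problem.

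Next I would quantify the gap. If $\Phi$ is satisfiable, $\mathrm{MIS}(G) = m$. If $\Phi$ is $(1-2k/2^k-\epsilon)$-far, then at least a $(1-2k/2^k-\epsilon)$-fraction of constraints must be removed, so $\mathrm{MIS}(G) \le (2k/2^k+\epsilon) m$. Hence an algorithm achieving an $(\alpha,\beta)$-approximation with $\alpha = 1/(2k/2^k + 2\epsilon)$ and $\beta = \epsilon m / 2 = \Theta(\epsilon n)$ would distinguish the two cases and therefore needs $\Omega(n)$ queries; note the number of vertices $N$ of $G$ is $\Theta(n)$ (at most $2^k m$ with $k$ constant), so $\Omega(n) = \Omega(N)$. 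To get the claimed $d/\poly\log d$ factor rather than a fixed constant, I would instead iterate or amplify: either apply the construction with $k = \Theta(\log d)$ so that the degree bound of $G$ becomes $d = 2^{\Theta(k)} d_{\mathrm{csp}} = \poly(d_{\mathrm{csp}}) \cdot 2^{\Theta(k)}$ while the inapproximability factor is $1/(2k/2^k+2\epsilon) = \Theta(2^k/k) = \Theta(d/\poly\log d)$ — the exponent in $2^k$ being tuned against the polynomial blow-up in degree — or, more robustly, compose with a known gap-amplification/graph-product (e.g. a derandomized tensoring of the MIS gap instance, as in hardness-of-approximation amplifications for \mislong) that boosts the ratio to $d^{1-o(1)}$ while keeping the degree polynomial in the original and the vertex count linear. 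Either way the final bookkeeping is to express the achieved ratio purely in terms of the resulting degree bound $d$ of $G$ and confirm it is $d/\poly\log d$ with additive error $\epsilon N$.

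The main obstacle I anticipate is the calibration between the degree blow-up of the FGLSS gadget and the inapproximability ratio: the naive gadget multiplies the degree by $2^{\Theta(k)}$, which is exactly the inverse of the gap $\Theta(k/2^k)$, so a single application gives only ratio $\Theta(d/\poly(d))$ after the substitution — not good enough. Making this work at the $d/\poly\log d$ level requires either a degree-efficient encoding (so that the $\kcsp$ degree bound, which is only $O(1/\epsilon^2)$ by Theorem~\ref{thr:kcsp}, dominates and the $2^k$ factor can be absorbed by choosing $k$ carefully against it) or an explicit amplification step whose degree cost is polylogarithmic in the gap it produces. Getting a clean, quantitatively honest version of this trade-off — and verifying that the query-simulation overhead and the additive term $\epsilon N$ survive the amplification — is where the real work lies; the reduction itself and the endpoint analysis are routine.
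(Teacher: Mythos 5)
Your approach is the same as the paper's: reduce from the $\kcsp$ hardness of Theorem~\ref{thr:kcsp} via FGLSS, following Trevisan~\cite{Tre01}'s bounded-degree variant, then amplify. The paper's own proof of Theorem~\ref{thr:mis} is in fact just a one-sentence citation to that reduction, so your proposal is comparably (in)complete; the important thing is that you correctly identified both the base construction and the fact that a bare FGLSS application does not reach $d/\poly\log d$, so that Trevisan-style gap amplification (graph/walk product) is what's actually being invoked.

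Two quantitative remarks. First, your degree accounting uses a cluster of ``up to $2^k$'' vertices, i.e.\ a degree blow-up of $2^{\Theta(k)}$. But the hard $\kcsp$ instances of Theorem~\ref{thr:kcsp} are over the Hamming-code predicate $P_A$ with $|P_A^{-1}(1)|=2^r\le 2k$, so the FGLSS clusters have size $\Theta(k)$, and the degree of the FGLSS graph is $O(k^2 d_{\mathrm{csp}})$, not $2^{\Theta(k)} d_{\mathrm{csp}}$. This matters: the gap parameter $1-2k/2^k-\epsilon$ in Theorem~\ref{thr:kcsp} and the small cluster size are two sides of the same coin, and using $2^k$ throws away the whole point of the Hamming-code choice. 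Second, even after this correction, your diagnosis that one FGLSS application is insufficient stands, but the right figure to quote is: optimizing over $\epsilon_{\mathrm{csp}}$ (subject to $d_{\mathrm{csp}}=\Theta(1/\epsilon_{\mathrm{csp}}^2)$), a single FGLSS step yields a gap of order $\sqrt{d}/\poly\log d$ against the resulting degree $d$, not merely ``$d/\poly(d)$''. So a constant number of amplification rounds (not a delicate iterated scheme) is what's needed to close the square-root, which is exactly what the cited disperser-based graph product is designed to do while paying only polylogarithmic degree cost and keeping the vertex count linear. Your final paragraph correctly locates the remaining work there; I would just tighten the intermediate arithmetic so the reader sees the $\sqrt d$ bottleneck explicitly and understands why $\Theta(k)$-sized clusters (not $2^k$) are what make the amplification affordable.
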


\begin{table}[t]
  \caption{Summary of results on query complexity of two-sided error $(\gamma,\epsilon)$-testers for various problems. Here, $P$ denotes any symmetric predicate with arity $k\geq 3$ except \kequ.}
  \label{tbl:summary}
  \begin{center}
    \begin{tabular}{|l|l|l|l|l|}
      \hline
      Problem & $\gamma$ & $\epsilon$ & Bound & Reference \\
      \hline
      \txor & 0 & $\epsilon$ & $\tilde{\Theta}(\sqrt{n})$ & \cite{GR99,GR08}\\ 
      \hline
      \txor & $\epsilon$ & $1/12$ & $\Theta(n)$ & \cite{YI10}\\
      \hline
      \txor & $\epsilon$ & $1/2-\epsilon$ & $\Omega(n^{1/2+\delta})$ & Theorem~\ref{thr:txor} \\
      \hline
      $3$-\xor & $0$ & $1/2-\epsilon$ & $\Theta(n)$ & \cite{BOT02} \\
      \hline
      \kequ & 0 & $\epsilon$ & $\tilde{O}(\sqrt{n})$ & Theorem~\ref{thr:equ}\\
      \hline
      \kequ& 0 & $\epsilon$ & $\Omega(\sqrt{n})$ & \cite{GR08} \\
      \hline
      \csp{$P$} & 0 & $P^{-1}(0)/2^k-\epsilon$ & $\Omega(n^{1/2+\delta})$ & Theorem~\ref{thr:two-sided} \\
      \hline
      \kcsp & 0 & $1-2k/2^k-\epsilon$ & $\Theta(n)$ & Theorem~\ref{thr:kcsp} \\
      \hline
    \end{tabular}
  \end{center}
\end{table}

\paragraph{Related work:}
There have been several works on testing CSPs.
The summary of known results is shown in Table~\ref{tbl:summary}.
\maxkcsp is an optimization version of \kcsp in which we are to maximize the number of satisfied constraints by an assignment.
Let $P$ be a predicate of arity $k$.
We notice that if there is an approximation algorithm for \maxcsp{$P$} with a factor $\frac{1-\epsilon}{1-\gamma}$,
we have $(\gamma,\epsilon)$-tester for \csp{$P$}.
Thus, a lower bound for testing \csp{$P$} implies a lower bound for approximating \maxcsp{$P$}.
The NP-hardness of approximation within a certain factor is often shown using a reduction from $3$-\xor.
Using the same reduction, 
for a wide range of $P$,
it is shown that there exists some $\eta>0$ such that any $\eta$-tester requires $\Omega(n)$ queries~\cite{YI10} (e.g., $\eta=1/12$ for $\txor$~\cite{Has01} and $\eta=1/16$ for $3$-\nae~\cite{Zwi98}).
However, for $\epsilon>\eta$,
we did not have any lower bound for $\epsilon$-testers.
Theorems~\ref{thr:two-sided} and~\ref{thr:txor} tighten this gap and also imply the new lower bound $\Omega(n^{1/2+\delta})$ for approximating \maxcsp{$P$} within a factor $|P^{-1}(1)|/2^k+\epsilon$.

Assuming the Unique Games Conjecture~\cite{Kho02}, 
it is NP-hard to distinguish between instances of \maxkcsp whose optimal solutions are $1-\epsilon$ and $(k+o(k))/2^k+\epsilon$~\cite{AM08,ST06}.
Theorem~\ref{thr:kcsp} states a somewhat stronger fact about sublinear time algorithms; i.e., 
it is hard to distinguish \textit{satisfiable instances} from instances whose optimal solutions are at most $O(k)/2^k$ with sublinear queries.
No matching NP-hardness is known,
even assuming the $d$-to-$1$ Conjecture~\cite{Kho02}.

The concept of $(\alpha,\epsilon n)$-approximation algorithms was introduced in~\cite{CRT01} to approximate the minimum spanning tree of a bounded-degree graph.
Since then, 
numerous $(\alpha,\epsilon n)$-approximation algorithms have been developed for graph problems~\cite{Alo10,CRT01,MR09,NO08,PR07,YYI09}.
For \mis, it is shown that there exists a constant-time $\left(O(d \log \log d/\log d),\epsilon n\right)$-approximation algorithm,
and every $\left(o(d/\log d),\epsilon n\right)$-approximation algorithm requires a super-constant number of queries~\cite{Alo10}.
Theorem~\ref{thr:mis} improve this to a linear lower bound at the cost of a slightly weaker approximation factor.

\paragraph{Motivations:}
Among all CSPs, 
we could say that $k$-\xor is the one whose behavior is best understood.
It is NP-hard to distinguish between instances whose optimal solutions (in the sense of \maxkxor) have values $1-\epsilon$ and $1/2+\epsilon$~\cite{Has01}.
This fact means that the random assignment achieves the best approximation ratio one can obtain in polynomial time.
The behavior of \maxkxor under linear programmings (LP) and semidefinite programmings (SDP) is also well-studied.
A quality of linear and semidefinite programming are measured by \textit{integrality gap},
which is the ratio of the optimum for those programs to the optimum for the original problem.
The Lovasz-Schrijver hierarchy (LS, LS+), Sherali-Adams hierarchy (SA), Lasserre hierarchy are sequences of relaxations of those programs to obtain tighter approximations.
For all of these hierarchies, the integrality gaps remain $2-\epsilon$ after $\Omega(n)$ rounds of relaxations~\cite{BOG03,GMT09,Sch08}.

Presumably, the reason why \maxkxor is hard to approximate is that the accepting assignments of \kxor contain the support of a $(k-1)$-wise independent distribution.
These results are extended to predicates whose accepting assignments contain the support of a pairwise independent distribution~\cite{AM08, GMT09, Tul09}.
For other predicates, however, 
we can approximate better than the random assignment using SDP (e.g., $3$-\nae).
One motivation for this work is to investigate why SDP helps with those predicates.
Theorems~\ref{thr:two-sided} and~\ref{thr:txor} suggest that a few cycles are not sufficient to approximate better than the random assignment.
This holds not only for SDP,
but also for any algorithm.
Also, Theorem~\ref{thr:one-sided} gives us a separation of the ability of polynomial-time algorithms versus sublinear-time one-sided error testers since SDP approximates better than the random assignment in polynomial time.

It is an interesting question whether we can approximate \maxkcsp within a certain factor by sampling a small portion of an instance.
We can approximate the optimal solution of \maxkcsp within an additive error $\epsilon n^k$ by sampling $\poly(1/\epsilon)$ variables and by solving the induced problems~\cite{AdlVKK03}.
Thus, dense instances are easy to approximate with constant queries~\cite{AdlVKK03,AS02,AE02}.
However, little is known for sparse instances.
Solving \maxcut of a sparse graph by sampling is demonstrated in~\cite{BHHS09}.
They showed that the value of Goemans-Williamson SDP~\cite{GW95} for a randomly sampled subgraph of linear size is approximately equal to the SDP value for the original graph.
Our work is a complement of their work.
Theorem~\ref{thr:two-sided} implies that, 
to approximate symmetric \maxkcsp better than the random assignment,
we need to sample $\Omega(n^{1/2+\delta})$ constraints from the instance.

\paragraph{Organization:}
In Section~\ref{sec:preliminary}, 
we define notions used in this paper,
followed by a proof overview of Theorem~\ref{thr:two-sided}, 
which is the main result of this paper.
We give the proof of Theorem~\ref{thr:two-sided} in Section~\ref{sec:two-sided}.
We mention other results in Section~\ref{sec:other}.

\section{Preliminaries}\label{sec:preliminary}
\subsection{Definitions}
We define notions on hypergraphs.
Let $\{v_1,\ldots,v_p\}$ be a vertex set and $\{e_1,\ldots,e_{p-1}\}$ be an edge set such that $e_i$ contains $v_i$ and $v_{i+1}$ for $1 \leq i \leq p-1$.
Then, we call $\{e_1,\ldots,e_p\}$ a \textit{hyperpath}.
A hypergraph is called \textit{connected} if, for every two vertices, there is a hyperpath containing them.
Let $\{v_1,\ldots,v_p\}$ be a vertex set and $\{e_1,\ldots,e_p\}$ be an edge set such that $e_i$ contains $v_i$ and $v_{(i \bmod p)+1}$ for $1 \leq i \leq p$.
Then, we call $\{e_1,\ldots,e_p\}$ a \textit{hypercycle}.
A connected hypergraph is called a \textit{hypertree} if it does not have any hypercycle.
A \textit{hyperforest} is a hypergraph such that each connected component is a hypertree.
Let $H$ be a $k$-uniform hypergraph with $n$ vertices, $m$ edges and $c$ connected components.
We define $\cy(H)=(k-1)m-n+c$, which measures how many vertices are deficient compared to a hyperforest (note that any hyperforest with $m$ edges and $c$ connected components has $(k-1)m+c$ variables).
We call $H$ a \textit{$(\gamma,\eta)$-expander} if the subgraph of $H$ induced by any $s\leq \gamma n$ edges contains at least $(k-1-\eta)s$ vertices.

Let $P:\bit^k\to\bit$ be a predicate.
An instance $\Phi$ of \csp{$P$} is a tuple of a set of variables and a set of constraints.
Here, each constraint $C$ is defined over a $k$-tuple of variables $(x_1,\ldots,x_k)$ and is of the form $P(x_1+b_1,\ldots,x_k+b_k)=1$ for some $(b_1,\ldots,b_k)\in \bit^k$.
We call $(b_1,\ldots,b_k)$ a \textit{literal vector} of $C$.
Here, $b_i$ accounts for the possible negation of $x_i$.
The \textit{underlying hypergraph} of $\Phi$ is a $k$-uniform hypergraph $H$ in which each variable of $\Phi$ corresponds to a vertex of $H$,
and for each constraint of the form $P(x_1+b_1,\ldots,x_k+b_k)=1$ in $\Phi$, 
we have an edge $(x_1,\ldots,x_k)$ in $H$.

Next, we introduce notions on distributions.
Suppose that $\cald$ is a distribution generating $\bfx_1,\bfx_2$ (and possibly others).
Let $\cald(\bfx_1)$ be the marginal distribution of $\bfx_1$ under $\cald$.
Let $\cald(\bfx_1|\bfx_2=x_2)$ denote the marginal distribution of $\bfx_1$ conditioned on $\bfx_2=x_2$, i.e., \(\Pr_{\cald(\bfx_1|\bfx_2=x_2)}[\bfx_1]=\Pr_{\cald}[\bfx_1|\bfx_2=x_2]\).
We often omit the actual value of a random variable if it is unimportant.
For example, $\cald(\bfx_1|\bfx_2=x_2)$ may be written as $\cald(\bfx_1|\bfx_2)$ and $\sum_{x}\Pr[\bfx=x]$ may be written as $\sum_{\bfx}\Pr[\bfx]$.
Let $\supp(\cald)$ denote the support of $\cald$.
If the random variables $\bfx_1$ and $\bfx_2$ become independent after conditioning $\bfx_3$, 
we write $\bfx_1 \ci \bfx_2 \mid \bfx_3$.
Let $\{\bfx_v\}_{v\in V}$ be a set of random variables.
Then, for $S\subseteq V$, $\bfx_S$ denotes the set $\{\bfx_v\}_{v\in S}$.

Let $\cald_1$ and $\cald_2$ be distributions generating a random variable $\bfx$.
The \textit{total variation distance} between $\cald_1(\bfx)$ and $\cald_2(\bfx)$ is defined as
\[
\dtv[\cald_1(\bfx),\cald_2(\bfx)]=\sum_\bfx\left|\Pr_{\cald_1}[\bfx]-\Pr_{\cald_2}[\bfx]\right|. \footnote{This is twice as large as the standard definition. We use this definition to avoid unnecessary calculations.}
\]
We note that $0\leq \dtv[\cald_1(\bfx),\cald_2(\bfx)]\leq 2$.
Also, we define $\dtv[\cald(\bfx)] = \dtv[\cald(\bfx),\calu(\bfx)]$ where $\calu$ is the uniform distribution.
When $\bfx$ is Boolean, $0\leq \dtv[\cald(\bfx)] \leq 1$.

\subsection{Proof Overview}\label{sec:overview}
We give a proof overview of Theorem~\ref{thr:two-sided}.
To prove the lower bound, we use Yao's minimax principle~\cite{Yao77}.
Specifically, 
we design two distributions $\caldsat$ and $\caldfar$ of instances of \csp{$P$} so that all instances of $\caldsat$ are satisfiable,
while almost all instances of $\caldfar$ are $(|P^{-1}(0)|/2^k-\epsilon)$-far from satisfiability.
Then, we show that any deterministic algorithm with a sublinear number of queries cannot distinguish between instances chosen from $\caldsat$ and instances chosen from $\caldfar$.
For underlying hypergraphs of $\caldsat$ and $\caldfar$, 
we use the same distribution of expanders. 
Thus, if we ignore literal vectors and we only look at variables used in constraints,
we have no hope of distinguishing $\caldsat$ from $\caldfar$.
We describe how $\caldsat$ generates an instance.
First, $\caldsat$ chooses an underlying hypergraph $H=(V,E)$.
Then, the set of variables of the instance is $\{x_v\}_{v\in V}$.
Then, $\caldsat$ first chooses $\bfx_v\in \bit$ for each vertex $v\in V$ uniformly at random.
Here, the set $\{\bfx_v\}_{v\in V}$ is the supposed solution for the instance.
Next, $\caldsat$ chooses a literal vector $\bfb_e$ for each edge $e=(v_1,\ldots,v_k)$ and adds a constraint $C_e$ of the form $P((x_{v_1},\ldots,x_{v_k})+\bfb_e)=1$.
$\caldsat$ chooses $\bfb_e$ so that the resulting instance is satisfiable by $\{\bfx_v\}_{v\in V}$.
In contrast, $\caldfar$ simply generates $\bfb_e$ uniformly at random for each edge $e$ after choosing an underlying hypergraph.

Any algorithm with query complexity $\ell$ can be seen as a mapping from \textit{query-answer history} $(q_1,a_1),\ldots,(q_{t-1},a_{t-1})$ to $q_{t}$ for $t\leq \ell$ and to $\{\mathbf{accept},\mathbf{reject}\}$ for $t=\ell$.
A query $q_t=(v_t,i_t)$ is a pair of a variable $v_t$ and an index $i_t$, and an answer $a_t$ is a constraint or the information that there is no constraint there.
To analyze the distribution of the query-answer history of an algorithm running under a distribution of instances,
it is useful to think that there is a randomized process behind the oracle.
That is, when an algorithm asks a query of the oracle, 
the randomized process generates the answer to the query according to some distribution.
We later define a randomized process $\calpsat$ (resp., $\calpfar$),
which is equivalent to $\caldsat$ (resp., $\caldfar$) in the sense that no matter how an algorithm asks the oracle,
the distribution of instances we finally obtain is the same as $\caldsat$ (resp., $\caldfar$).
Let $\calksat$ (resp., $\calkfar$) be the distribution of query-answer history induced by the interaction between an algorithm $\cala$ and $\calpsat$ (resp., $\calpfar$).
We show that when the query complexity of $\cala$ is $o(n^{1/2+\delta})$ for some $\delta>0$,
$\dtv[\calksat,\calkfar]$ is negligibly small.
Thus, it is impossible to distinguish $\caldsat$ from $\caldfar$ with high probability.

If we ask at most $O(n^{1/2})$ queries, 
from the birthday paradox,
the query-answer history does not contain hypercycles with high probability.
From this fact, it is relatively easy to show that we cannot distinguish $\caldsat$ from $\caldfar$ with $O(n^{1/2})$ queries.
However, if we ask $\Omega(n^{1/2+\delta})$ queries,
the situation completely changes because of the effect of hypercycles.
For example, suppose that the predicate is \equ and $\cala$ obtained a constraint $C_e$ such that variables $x_u,x_v\in C_e$ already appeared in the query-answer history.
Then, $\cala$ can calculate the parity $x_u\oplus x_v$, by the propagation, along the constraint $C_e$ and along a path in the query-answer history.
If they are not the same, the instance must come from $\caldfar$.
In other words, if we assume that the instance comes from $\caldsat$,  
we can guess $\bfb_e$ from the query-answer history.

Can we generalize this algorithm to other predicates?
Though we do not exclude the possibility of sublinear-time algorithms,
we can show that, in general, we need quite a few hypercycles to distinguish $\caldsat$ from $\caldfar$.
The reason why we were able to use the propagation is that the value of a variable in a predicate \equ uniquely determines the values of other variables.
For other symmetric predicates, however, this is not true.
In fact, the correlation between variables exponentially decays along paths.
Thus, even if variables $x_u$ and $x_v$ already appeared in the query-answer history,
the correlation between $\bfx_u$ and $\bfx_v$ is tiny (before obtaining $C_e$).
Precisely, 
we will show that $\dtv[\caldsat(\bfx_u|\bfx_v,\bfb_{E'}),\caldsat(\bfx_u|\bfb_{E'})]$ is tiny where $E'$ is the edge set in the query-answer history.
Thus, $\bfb_e$ is almost identical to the uniform distribution.
It follows that we cannot distinguish $\caldsat$ from $\caldfar$ with $O(n^{1/2+\delta})$ queries.

To prove this, we use several facts about expanders.
Note that the lengths of hypercycles are large (roughly, $g=\Theta(\log_d n)$) in an expander.
Thus, for two adjacent vertices $u$ and $v$,
the distance between them is at least $g$ after removing the constraint containing them.
Furthermore, the neighborhood of $v$ looks like a hypertree $T$ with depth $g$.
Note that any information from $\bfx_u$ comes through the leaves of $T$.
Though the number of leaves of $T$ is exponential in the depth, 
we can show that the only tiny portion of them is connected to $u$ (without passing $v$).
Since such leaves have an exponentially small correlation with $\bfx_v$,
we conclude that the correlation between $\bfx_u$ and $\bfx_v$ is negligibly small.

\subsection{Properties of $\dtv$}\label{subsec:probability}
We show several lemmas about $\dtv$ and probability distributions.
Due to the space limit, all the proofs are deferred to Appendix~\ref{apx:probability}.
\begin{lemma}\label{lmm:addition}
  Let $\cald_1$ and $\cald_2$ be distributions generating random variables $\bfx$ and $\bfy$.
  Suppose that \( \dtv[\cald_1(\bfx),\cald_2(\bfx)]\leq \delta_x\), and \( \dtv[\cald_1(\bfy|\bfx=x),\cald_2(\bfy|\bfx=x)]\leq \delta_y\) for any $x$.
  Then, \(\dtv[\cald_1(\bfx,\bfy),\cald_2(\bfx,\bfy)]\leq \delta_x + \delta_y \).
\end{lemma}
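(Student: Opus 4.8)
The plan is to prove this by the standard hybrid (or ``coupling'') argument for total variation distance, exploiting the chain rule $\Pr_{\cald_i}[\bfx,\bfy]=\Pr_{\cald_i}[\bfx]\cdot\Pr_{\cald_i}[\bfy\mid\bfx]$ valid for $i=1,2$. Concretely, for each value pair of $\bfx,\bfy$ I would introduce the hybrid term $\Pr_{\cald_2}[\bfx]\cdot\Pr_{\cald_1}[\bfy\mid\bfx]$ and write
\begin{align*}
\Pr_{\cald_1}[\bfx,\bfy]-\Pr_{\cald_2}[\bfx,\bfy]
&=\Pr_{\cald_1}[\bfy\mid\bfx]\bigl(\Pr_{\cald_1}[\bfx]-\Pr_{\cald_2}[\bfx]\bigr)\\
&\quad+\Pr_{\cald_2}[\bfx]\bigl(\Pr_{\cald_1}[\bfy\mid\bfx]-\Pr_{\cald_2}[\bfy\mid\bfx]\bigr).
\end{align*}
Taking absolute values, summing over all $\bfx$ and $\bfy$, and applying the triangle inequality then splits $\dtv[\cald_1(\bfx,\bfy),\cald_2(\bfx,\bfy)]$ into two sums, one for each term on the right-hand side.

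For the first sum I would sum over $\bfy$ first: since $\sum_\bfy\Pr_{\cald_1}[\bfy\mid\bfx]=1$, what remains is $\sum_\bfx\bigl|\Pr_{\cald_1}[\bfx]-\Pr_{\cald_2}[\bfx]\bigr|=\dtv[\cald_1(\bfx),\cald_2(\bfx)]\le\delta_x$. For the second sum I would again sum over $\bfy$ first: for each $\bfx$, $\sum_\bfy\bigl|\Pr_{\cald_1}[\bfy\mid\bfx]-\Pr_{\cald_2}[\bfy\mid\bfx]\bigr|=\dtv[\cald_1(\bfy\mid\bfx),\cald_2(\bfy\mid\bfx)]\le\delta_y$ by hypothesis, so what remains is at most $\sum_\bfx\Pr_{\cald_2}[\bfx]\cdot\delta_y=\delta_y$. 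Adding the two contributions gives $\dtv[\cald_1(\bfx,\bfy),\cald_2(\bfx,\bfy)]\le\delta_x+\delta_y$, as claimed.

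I do not expect a genuine obstacle here; this is a packaging lemma that will be invoked to accumulate distance one query at a time when comparing $\calksat$ and $\calkfar$. The only point requiring a word of care is the treatment of values $\bfx$ outside $\supp(\cald_1)$ or $\supp(\cald_2)$, for which the relevant conditional distribution is not literally defined: I would dispose of this by adopting any fixed convention that makes the conditional a probability distribution over $\bfy$ on such null events (which leaves the joint laws and all the total variation distances unchanged, and makes the vanishing factor $\Pr_{\cald_2}[\bfx]$ kill the offending terms in the second sum), after which every manipulation above holds verbatim.
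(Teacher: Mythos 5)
Your proof is correct and follows essentially the same approach as the paper: both insert the hybrid term $\Pr_{\cald_2}[\bfx]\Pr_{\cald_1}[\bfy\mid\bfx]$, apply the triangle inequality, and sum the two contributions to obtain $\delta_x+\delta_y$. Your remark about defining the conditional on $\bfx$ outside the support is a reasonable bit of extra care that the paper's proof leaves implicit.
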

\begin{lemma}\label{lmm:serial}
  Let $\cald$ a distribution generating $\bfx_A,\bfx_B,\bfx_C$.
  Suppose that $\bfx_A\ci\bfx_C\mid\bfx_B$.
  Then,
  \[
  \dtv[\cald(\bfx_C|\bfx_A)]\leq \dtv[\cald(\bfx_B|\bfx_A)]\cdot \dtv[\cald(\bfx_C|\bfx_B)].
  \]
\end{lemma}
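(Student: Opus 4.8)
The plan is to derive a single bilinear identity for the conditional bias and then read the bound off by the triangle inequality. Fix a value $x_A$ of $\bfx_A$. Conditioning on $\bfx_B$ gives
\[
\Pr_{\cald}[\bfx_C \mid \bfx_A] = \sum_{x_B}\Pr_{\cald}[\bfx_C \mid \bfx_B = x_B,\bfx_A]\cdot\Pr_{\cald}[\bfx_B = x_B \mid \bfx_A],
\]
and the hypothesis $\bfx_A \ci \bfx_C \mid \bfx_B$ lets me replace $\Pr_{\cald}[\bfx_C \mid \bfx_B = x_B,\bfx_A]$ by $\Pr_{\cald}[\bfx_C \mid \bfx_B = x_B]$. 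Thus $\cald(\bfx_C\mid\bfx_A)$ is exhibited as a mixture of the laws $\cald(\bfx_C\mid\bfx_B = x_B)$ with mixing weights $\cald(\bfx_B = x_B\mid\bfx_A)$.

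Next I would subtract a reference distribution from \emph{both} factors at once. Using $\sum_{x_B}\Pr_{\cald}[\bfx_B = x_B\mid\bfx_A]=1$ together with the law of total probability $\sum_{x_B}\Pr_{\cald}[\bfx_C\mid\bfx_B=x_B]\Pr_{\cald}[\bfx_B=x_B] = \Pr_{\cald}[\bfx_C]$, one checks that subtracting the marginals $\cald(\bfx_C)$ and $\cald(\bfx_B)$ makes the cross term cancel and yields
\[
\Pr_{\cald}[\bfx_C\mid\bfx_A] - \Pr_{\cald}[\bfx_C] = \sum_{x_B}\big(\Pr_{\cald}[\bfx_C\mid\bfx_B=x_B] - \Pr_{\cald}[\bfx_C]\big)\big(\Pr_{\cald}[\bfx_B=x_B\mid\bfx_A] - \Pr_{\cald}[\bfx_B]\big).
\]
In every application of this lemma the marginals $\cald(\bfx_B)$ and $\cald(\bfx_C)$ are uniform, so each marginal above may be replaced by $\calu$, matching the meaning of $\dtv[\,\cdot\,]$ used in the statement.

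Finally I would take $\sum_{x_C}|\cdot|$ of both sides of the identity, push the absolute value inside the sum over $x_B$, and swap the order of summation:
\[
\dtv[\cald(\bfx_C\mid\bfx_A)] \;\le\; \sum_{x_B}\big|\Pr_{\cald}[\bfx_B=x_B\mid\bfx_A] - \Pr_{\cald}[\bfx_B]\big|\cdot\sum_{x_C}\big|\Pr_{\cald}[\bfx_C\mid\bfx_B=x_B] - \Pr_{\cald}[\bfx_C]\big|.
\]
The inner sum is $\dtv[\cald(\bfx_C\mid\bfx_B=x_B)]$; bounding it by its maximum over $x_B$ — which is what $\dtv[\cald(\bfx_C\mid\bfx_B)]$ abbreviates — pulls it out of the sum, and the remaining outer sum is exactly $\dtv[\cald(\bfx_B\mid\bfx_A)]$. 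Since this holds for the fixed $x_A$, the claimed inequality follows; the paper's doubled normalization of $\dtv$ is precisely what makes this come out with no stray constant.

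The step I expect to be the crux is the cancellation of the cross term in the second paragraph: this is the only place where both the conditional independence and the uniformity of the intermediate marginals are genuinely used, and mishandling it leaves an uncontrolled additive error rather than a clean product bound. Everything else is bookkeeping with the triangle inequality and the definition of $\dtv$.
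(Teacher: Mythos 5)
Your argument is correct and follows essentially the same route as the paper's: you establish the same bilinear identity
\[
\Pr[\bfx_C\mid\bfx_A]-\Pr[\bfx_C]=\sum_{\bfx_B}\bigl(\Pr[\bfx_B\mid\bfx_A]-\Pr[\bfx_B]\bigr)\bigl(\Pr[\bfx_C\mid\bfx_B]-\Pr[\bfx_C]\bigr)
\]
(the paper verifies it by expanding the sum of products, you by subtracting marginals from the mixture decomposition --- a cosmetic difference in the direction of the algebra), then apply the triangle inequality, swap the order of summation, and bound the inner sum by its maximum over $\bfx_B$. Your explicit remark that the conclusion, read with $\dtv[\cdot]$ meaning distance from uniform, tacitly requires the marginals $\cald(\bfx_B)$ and $\cald(\bfx_C)$ to be uniform is a fair gloss of a notational imprecision the paper leaves implicit (its own proof actually bounds $\dtv[\cald(\bfx_C\mid\bfx_A),\cald(\bfx_C)]$, which coincides with $\dtv[\cald(\bfx_C\mid\bfx_A)]$ only under that assumption).
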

\begin{lemma}\label{lmm:product}
  Let $\cald$ be a distribution generating $\bfx,\bfy_i(1\leq i \leq k)$.
  Suppose that $\Pr_{\cald}[\bfx=x]$ is equal for every $x\in \supp(\cald(\bfx))$ and $\bfy_i\ci\bfy_j \mid \bfx$ for every $1\leq i,j\leq k$.
  Then,
  \begin{eqnarray*}
    \Pr_{\cald}[\bfx=x|\{\bfy_i\}_{i=1}^k]=\frac{\prod_{i=1}^{k}\Pr_{\cald}[\bfx=x|\bfy_i]}{\sum_{x'\in \supp(\cald(\bfx))}\prod_{i=1}^{k}\Pr_{\cald}[\bfx=x'|\bfy_i]}.
  \end{eqnarray*}
\end{lemma}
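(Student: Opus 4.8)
The plan is a direct application of Bayes' rule, where the uniformity of $\bfx$ on its support is exactly the ingredient that makes the naive-Bayes product formula come out clean.

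First I would fix values $y_1,\ldots,y_k$ for $\bfy_1,\ldots,\bfy_k$ with $\Pr_\cald[\bfy_1=y_1,\ldots,\bfy_k=y_k]>0$ and argue pointwise about the conditional law $\cald(\bfx\mid \bfy_1,\ldots,\bfy_k)$ at that point. Since $\Pr_\cald[\bfx=x]=0$ forces $\Pr_\cald[\bfx=x\mid \{\bfy_i\}_i]=0$, it suffices to prove the identity for $x\in\supp(\cald(\bfx))$. By Bayes' rule,
\[
\Pr_\cald[\bfx=x\mid \{\bfy_i\}_i]=\frac{\Pr_\cald[\{\bfy_i\}_i\mid \bfx=x]\,\Pr_\cald[\bfx=x]}{\Pr_\cald[\{\bfy_i\}_i]},
\]
and the conditional independence hypothesis $\bfy_i\ci\bfy_j\mid\bfx$ (read in its joint form, which is what the lemma needs) gives $\Pr_\cald[\{\bfy_i\}_i\mid\bfx=x]=\prod_{i=1}^k\Pr_\cald[\bfy_i\mid\bfx=x]$.

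Next I would rewrite each factor $\Pr_\cald[\bfy_i\mid\bfx=x]$ back via Bayes as $\Pr_\cald[\bfx=x\mid\bfy_i]\,\Pr_\cald[\bfy_i]/\Pr_\cald[\bfx=x]$, obtaining
\[
\Pr_\cald[\bfx=x\mid\{\bfy_i\}_i]=\Big(\prod_{i=1}^k\Pr_\cald[\bfx=x\mid\bfy_i]\Big)\cdot\frac{\prod_{i=1}^k\Pr_\cald[\bfy_i]}{\Pr_\cald[\{\bfy_i\}_i]\,\Pr_\cald[\bfx=x]^{k-1}}.
\]
This is where uniformity enters: because $\Pr_\cald[\bfx=x]$ takes the same value for all $x\in\supp(\cald(\bfx))$, the entire second fraction is a constant $c$ independent of $x$ (the remaining factors never depended on $x$). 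Hence $\Pr_\cald[\bfx=x\mid\{\bfy_i\}_i]=c\prod_{i=1}^k\Pr_\cald[\bfx=x\mid\bfy_i]$ on the support, and summing over $x\in\supp(\cald(\bfx))$ — where the left side sums to $1$ — forces $c=1/\sum_{x'\in\supp(\cald(\bfx))}\prod_{i=1}^k\Pr_\cald[\bfx=x'\mid\bfy_i]$, which is precisely the claimed identity; in particular the denominator is positive for every conditioning event of positive probability, so the expression is well defined.

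I do not expect a real obstacle here; the calculation is routine. The only points requiring care are (i) restricting attention to $\supp(\cald(\bfx))$ so that every conditioning is legitimate and the normalization sums to $1$, and (ii) using the hypothesis in its joint form, since pairwise conditional independence of the $\bfy_i$ given $\bfx$ alone would not yield the needed factorization of $\Pr_\cald[\{\bfy_i\}_i\mid\bfx=x]$. It is worth noting that the uniformity assumption is not cosmetic: dropping it leaves an extra factor $\Pr_\cald[\bfx=x]^{-(k-1)}$ that genuinely depends on $x$ and destroys the clean product form.
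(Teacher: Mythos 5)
Your proof is correct, and it reaches the identity by a different route than the paper. The paper proceeds by induction on $k$: it peels off $\bfy_t$ via Bayes, uses conditional independence to move from conditioning on $\{\bfy_i\}_{i<t},\bfy_t$ to conditioning on $\{\bfy_i\}_{i<t}$ alone, then cancels the constant $\Pr[\bfx=x]$ and substitutes the inductive hypothesis. You instead do the whole thing in one shot: factor $\Pr[\{\bfy_i\}_i\mid\bfx=x]$ directly, rewrite each factor back via Bayes, isolate the $x$-independent normalization constant, and determine it by summing to one on the support. Your version is arguably cleaner and makes the role of the uniformity assumption more transparent (it kills exactly the $\Pr[\bfx=x]^{-(k-1)}$ factor, as you note). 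One point you flag is worth emphasizing: the factorization $\Pr[\{\bfy_i\}_i\mid\bfx]=\prod_i\Pr[\bfy_i\mid\bfx]$ requires the $\bfy_i$ to be \emph{mutually} conditionally independent given $\bfx$, which is strictly stronger than the pairwise statement $\bfy_i\ci\bfy_j\mid\bfx$ as literally written in the lemma. The paper's inductive proof uses the same joint fact (it needs $\{\bfy_i\}_{i<t}\ci\bfy_t\mid\bfx$, not merely pairwise independence) without calling attention to the gap; in the paper's application the stronger property does hold via the graphical-model structure, so the lemma is applied correctly, but your explicit remark is a genuine improvement in precision over what the paper wrote.
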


\section{An $\Omega(n^{1/2+\delta})$ Lower Bound for Two-Sided Error Testers}\label{sec:two-sided}
In this section, 
we give a proof of Theorem~\ref{thr:two-sided}.
A reader can safely assume that a predicate $P$ is symmetric until the proof of Theorem~\ref{thr:two-sided}.

\subsection{Probabilistic Constructions of Expanders}\label{sec:expander}
We introduce a probability distribution $\calg_{n,d,k}$ of $d$-regular $k$-uniform multi-hypergraphs with $n$ vertices.
This distribution is used to define $\caldsat$ and $\caldfar$.
Here, we assume that $dn$ is divisible by $k$ (otherwise, no $d$-regular $k$-uniform hypergraph exists).
We construct a hypergraph $H=(V,E)$ as follows.
We start with a set of $dn$ vertices $V'$ where a vertex $v\in V$ is corresponding to $d$ vertices in $V'$.
Then, we partition $V'$ into $k$-hyperedges randomly.
Finally, we contract each $d$ vertices of $V'$ and let $H$ be the resulting graph.
The proof of the following lemma is deferred to Appendix~\ref{apx:expander}.
\begin{lemma}\label{lmm:expander}
  Let $H$ be a hypergraph chosen uniformly at random from $\calg_{n,d,k}$.
  For any $\eta$,
  there exists $\gamma$ such that $H$ is a $(\gamma,\eta)$-expander with probability $1-o(1)$.
\end{lemma}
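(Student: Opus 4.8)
The plan is to run a standard first-moment (union bound) argument over "bad" edge sets, adapted to the configuration-model construction of $\calg_{n,d,k}$. Fix $\eta>0$. Call an edge set $S\subseteq E$ \emph{bad} if $|S|=s\leq\gamma n$ but the vertices spanned by $S$ number fewer than $(k-1-\eta)s$. We want to choose $\gamma=\gamma(\eta,d,k)>0$ small enough that, with probability $1-o(1)$, no bad set exists. The key quantity to control is, for each $s$, the expected number of bad sets of size exactly $s$; we will show this is $o(1/\log n)$ after summing over $s\leq \gamma n$ (in fact it decays geometrically in $s$ once $\gamma$ is small), and then a union bound over all $s$ finishes the proof.

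The main computation is the per-$s$ estimate. First I would bound the number of ways to pick which $s$ edges form $S$ and which (at most) $t:=\lceil(k-1-\eta)s\rceil$ vertices they touch: this contributes a factor of at most $\binom{m}{s}\binom{n}{t}$ where $m=dn/k$. Then, conditioned on that choice, I would bound the probability that the random perfect matching on the $dn$ "clones" actually realizes those $s$ hyperedges inside the designated clone-set: each of the $s$ hyperedges must have all $k$ of its clones land among the $\leq dt$ clones sitting over the $t$ chosen vertices, and in the configuration model this probability is at most $\bigl(\tfrac{dt}{dn}\bigr)^{(k-1)s}$ up to lower-order factors (one "free" clone per edge picks the edge, the remaining $(k-1)s$ clones must each fall in the small set). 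Combining and using $\binom{a}{b}\leq(ea/b)^b$, the expected count of bad sets of size $s$ is at most
\[
\left(\frac{em}{s}\right)^{s}\left(\frac{en}{t}\right)^{t}\left(\frac{t}{n}\right)^{(k-1)s}
= \left[\,C_{d,k}\cdot\left(\frac{s}{n}\right)^{(k-1)-\,t/s}\,\right]^{s},
\]
and since $t/s \le k-1-\eta+o(1)$ the exponent $(k-1)-t/s$ is at least $\eta/2>0$ for all $s$ in range; hence the bracket is at most $C_{d,k}(\gamma)^{\eta/2}<1$ once $\gamma$ is chosen small relative to $\eta,d,k$. This makes the sum over $s\geq 1$ a convergent geometric-type series that is $o(1)$ (the $s=1,2,\dots$ constant-size terms need a separate trivial check, e.g. no edge is a loop and no two edges fully coincide, which holds whp for fixed $d,k$).

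The step I expect to be the main obstacle is making the configuration-model probability bound rigorous rather than heuristic: the $dn$ clones are matched into $k$-blocks, not into pairs, so one must be careful about the exact combinatorial count of matchings in which a prescribed set of $s$ blocks lies inside a prescribed clone-subset, and about multi-hyperedges and repeated clones within a block. The cleanest route is to write the probability as a ratio of falling factorials (number of block-matchings respecting the constraint, over the total $(dn-1)!! $-type count) and bound the ratio directly; the small factors from "a clone may coincide with another in the same block" only help (they shrink the spanned-vertex count further) or are absorbed into $C_{d,k}$. Once that estimate is in hand, everything else is the routine union bound sketched above, and the lemma follows with the stated $\gamma$.
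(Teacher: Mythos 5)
Your proposal uses the same first-moment/union-bound strategy over pairs (bad edge set $S$, small vertex set $X$), with the same configuration-model probability bound and the same $(s/n)^{\Theta(\eta)s}$-type per-$s$ estimate summed over $s\le\gamma n$; the paper carries out exactly this plan, writing the configuration-model probability cleanly as the ratio of binomials $\binom{csd}{ks}/\binom{nd}{ks}$ with $c=k-1-\eta$ (the "falling factorials" route you flag as the right way to make the step rigorous) and handling the tail of the $s$-sum by splitting at $s=\log n$ rather than invoking a geometric series plus a separate small-$s$ check. The only cosmetic difference is that the paper conditions all $ks$ clones rather than leaving one free clone per edge, which changes the exponent bookkeeping slightly but gives the same decay rate $\eta$.
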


\subsection{Hard instances}\label{subsec:hard-instance}
As in the proof overview, 
we introduce two distributions $\caldsat$ and $\caldfar$ of instances of \csp{$P$}.
First, we define a distribution generating instances of \csp{$P$} given an underlying hypergraph.
\begin{definition}
  Let $H=(V,E)$ be a $k$-uniform hypergraph with $n$ vertices.
  Define a distribution $\cald_H$ generating an instance $\Phi$ of \csp{$P$} as follows.
  The variable set of $\Phi$ is $\{x_v\}_{v\in V}$.
  We choose $\bfx\in \bit^n$ uniformly at random.
  For each edge $e=(v_1,\ldots,v_k)\in E$, 
  we choose $\bfb_e$ uniformly at random from the set $\{b\in \bit^k\mid P((\bfx_{v_1},\ldots,\bfx_{v_k})+b)=1 \}$.
  Then, we add a constraint $C_e$ of the form $P((x_{v_1},\ldots,x_{v_k})+\bfb_e)=1$ to $\Phi$.
\end{definition}
\begin{definition}
  Given parameters $n,d,k$, 
  define a distribution $\caldsat$ generating an instance of \csp{$P$} as follows.
  First, we choose a hypergraph $H$ from $\calg_{n,d,k}$.
  Then, an instance is output according to $\cald_H$.

  Similarly, define a distribution $\caldfar$ generating an instance of \csp{$P$} as follows.
  First, we choose a hypergraph $H=(V,E)$ from $\calg_{n,d,k}$.
  Then, for each edge $e=(v_1,\ldots,v_k)\in E$,
  we choose $b\in \bit^k$ uniformly at random and add a constraint $C_e$ of the form $P((x_{v_1},\ldots,x_{v_k})+b)=1$.
\end{definition}
We can describe the generating process of $\caldsat$ with a graphical model.
Each vertex in the graphical model corresponds to $\bfx_v (v\in V)$ or $\bfb_e (e\in E)$,
and each edge expresses the dependency between two random variables.
For an exposition of graphical models, see~\cite{Bis06}.
The important fact derived from the graphical model is the following.
\begin{observation}
  Let $H$ be a hypergraph and $G=(V,E)$ be a subgraph of $H$.
  Let $A,B,C$ be sets of vertices such that any path in $G$ between $A$ and $C$ passes a vertex of $B$.
  Then, $\bfx_A\ci \bfx_C \mid \bfx_B$ under $\cald_H(\cdot|\bfb_E)$.
\end{observation}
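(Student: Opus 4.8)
The plan is to deduce the Observation from the global Markov property of graphical models (as in \cite{Bis06}): the content is that conditioning on $\bfb_E$ turns the law of $\{\bfx_v\}_{v\in V}$ into a Gibbs distribution whose interaction hypergraph is exactly $G$, after which graph separation yields conditional independence. Since the claim only involves the variables $\bfx_A,\bfx_B,\bfx_C$, it suffices to understand the marginal of $\bfx$ under $\cald_H(\cdot\mid\bfb_E)$; the literal vectors $\bfb_e$ for $e\notin E$ are irrelevant.

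First I would identify the conditional law of $\bfx$. In the generating process of $\cald_H$ the variables $\bfx_v$ are i.i.d.\ uniform over $\bit$, and, given $\bfx$, each $\bfb_e$ is uniform over $\{b\in\bit^k: P((\bfx_{v_1},\dots,\bfx_{v_k})+b)=1\}$; in particular $\bfb_e$ given $\bfx$ depends only on the coordinates indexed by $e$, and distinct $\bfb_e$ are conditionally independent given $\bfx$. Hence, writing $x_e=(x_{v_1},\dots,x_{v_k})$ for $e=(v_1,\dots,v_k)$ and taking any $b$ in the support of $\cald_H(\bfb_E)$,
\[
  \Pr_{\cald_H}[\bfx=x \mid \bfb_E=b]
  \;\propto\; \Pr_{\cald_H}[\bfx=x]\cdot\Pr_{\cald_H}[\bfb_E=b \mid \bfx=x]
  = 2^{-n}\prod_{e\in E}\frac{\mathbf{1}[P(x_e+b_e)=1]}{|P^{-1}(1)|}
  \;\propto\; \prod_{e\in E}\mathbf{1}[P(x_e+b_e)=1].
\]
Thus $\cald_H(\bfx\mid\bfb_E)$ factorizes into nonnegative potentials $\psi_e(x_e):=\mathbf{1}[P(x_e+b_e)=1]$, one per edge $e\in E$, each supported on the variables incident to $e$ in $G$. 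The edges of $H$ outside $E$ drop out precisely because marginalizing out their literal vectors multiplies by the $x$-independent constant $|P^{-1}(1)|$ --- this is why the interaction hypergraph is $G$ and not $H$.

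Second I would run the standard ``easy direction'' of the Hammersley--Clifford argument on this factorization. Let $S_A$ be the set of $v\in V$ joined to $A$ by a path $e_1,\dots,e_{p-1}$ in $G$ none of whose vertices $v_1,\dots,v_p$ lies in $B$, and set $S_C:=(V\setminus B)\setminus S_A$. Then $A\setminus B\subseteq S_A$ trivially, $C\setminus B\subseteq S_C$ by the separation hypothesis, and $S_A$ is closed under same-edge reachability: if $e\in E$ meets $S_A$, then every vertex of $e$ not in $B$ lies in $S_A$, so $e\cap S_C=\emptyset$. Consequently each $\psi_e$ is supported on $S_A\cup B$ or on $S_C\cup B$, and the product splits as $f(\bfx_{S_A},\bfx_B)\,g(\bfx_{S_C},\bfx_B)$; after renormalization this is exactly $\bfx_{S_A}\ci\bfx_{S_C}\mid\bfx_B$, and since $A\subseteq S_A\cup B$ and $C\subseteq S_C\cup B$ it implies $\bfx_A\ci\bfx_C\mid\bfx_B$ (functions of conditionally independent variables are conditionally independent). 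Equivalently, one can read this off the Bayesian-network picture of $\cald_H$ by $d$-separation: observing $\bfb_E$ activates exactly the colliders at the edges of $G$, so $\{\bfx_v\}_{v\in B}$ together with $\bfb_E$ $d$-separates $\bfx_A$ from $\bfx_C$ whenever $B$ separates $A$ from $C$ in $G$.

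I expect the only delicate point to be the hypergraph bookkeeping around ``separation'': one must make sure that the hypothesis ``every path in $G$ between $A$ and $C$ passes a vertex of $B$'' is exactly the statement needed to guarantee that no single hyperedge straddles $S_A$ and $S_C$, which is what the factorization split requires. Everything else is routine.
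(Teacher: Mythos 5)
Your proof is correct and matches the paper's intended approach: the paper simply states this Observation as a consequence of the graphical-model structure (citing Bishop's text) without spelling out the argument, while you fill in exactly that argument, deriving the edge-wise factorization of $\cald_H(\bfx\mid\bfb_E)$ and then running graph separation / Hammersley--Clifford on it. The only tiny slip is the remark that marginalizing out $\bfb_e$ for $e\notin E$ "multiplies by $|P^{-1}(1)|$" --- it actually multiplies by $1$, but what matters (and what you use) is that the factor is $x$-independent, so the conclusion is unaffected.
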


From the construction, 
any instance of $\caldsat$ is satisfiable.
On the other hand, 
the following lemma is well-known (e.g.,~\cite{Sch08,Tul09}).
We provide a proof for completeness in Appendix~\ref{apx:far}.
\begin{lemma}\label{lmm:far}
  For any $\epsilon>0$, there exists an integer $d\geq 1$ for which the following holds.
  Let $\Phi$ be an instance of \csp{$P$} chosen from $\caldfar$ where $P:\bit^k\to\bit$ is a predicate.
  Then, $\Phi$ is $(|P^{-1}(0)|/2^k-\epsilon)$-far from satisfiability with a probability of $1-o(1)$.
\end{lemma}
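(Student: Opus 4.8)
The plan is to show that a random instance from $\caldfar$ behaves, constraint by constraint, almost like independent random constraints on a sparse expanding hypergraph, so that no assignment can satisfy substantially more than the expected fraction $|P^{-1}(1)|/2^k$ of constraints. First I would fix the underlying hypergraph $H$ drawn from $\calg_{n,d,k}$ and condition on the event (holding with probability $1-o(1)$ by Lemma~\ref{lmm:expander}) that $H$ is a $(\gamma,\eta)$-expander for a suitably small $\eta$ and the corresponding $\gamma$. Then the literal vectors $\{\bfb_e\}_{e\in E}$ are chosen independently and uniformly, so for any \emph{fixed} assignment $\alpha\in\bit^n$ and any edge $e=(v_1,\dots,v_k)$, the constraint $C_e$ is satisfied by $\alpha$ with probability exactly $|P^{-1}(1)|/2^k$, independently across edges. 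Writing $m = dn/k$ for the number of constraints and $\mu = |P^{-1}(1)|/2^k$, a Chernoff bound gives that $\alpha$ satisfies more than $(\mu+\epsilon')m$ constraints with probability at most $\exp(-c\epsilon'^2 m)$ for a constant $c>0$.

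The obstacle, of course, is that there are $2^n$ assignments, and a naive union bound over all of them fails because $m = \Theta(n)$ makes $\exp(-c\epsilon'^2 m)$ only exponentially small in $n$ with a constant that may not beat $2^n$ — and in fact for small $d$ it does not. This is exactly where expansion must be used, and it is the main technical point. The standard fix is a two-part argument. For assignments $\alpha$ that differ from the all-zeros assignment (or any reference point) in few coordinates, one controls them directly; more precisely, one argues that no assignment can satisfy all constraints by a local-correction / expansion argument: if an instance were close to satisfiable, one could remove few constraints and satisfy it, but the expander structure forces any "locally consistent" region to be large, giving a contradiction. For the bulk of assignments, one groups them by the set $S$ of constraints they could possibly over-satisfy and uses that an expander has few small dense subgraphs. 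Concretely, I would bound the number of assignments that satisfy at least $(\mu+\epsilon')m$ constraints by first choosing the "excess" set of roughly $\epsilon' m$ constraints, then noting that expansion implies these touch $\Omega(\epsilon' m)$ variables whose values are then heavily constrained; balancing the entropy of this choice ($\binom{m}{\epsilon' m}2^{O(\epsilon' m)}$, which is $2^{O(\epsilon' \log(1/\epsilon') m)}$) against the Chernoff savings $\exp(-c\epsilon'^2 m)$ lets one pick $d = O(1/\epsilon^2)$ large enough (equivalently $m/n$ large) that the savings dominate. Taking a union bound over this reduced class then shows that \emph{every} assignment satisfies at most $(\mu+\epsilon')m$ constraints with probability $1-o(1)$.

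Finally I would translate this into the farness statement in the model's normalization. Since $|P^{-1}(0)| = 2^k - |P^{-1}(1)|$, satisfying at most $(\mu+\epsilon')m$ constraints means at least $(|P^{-1}(0)|/2^k - \epsilon')m$ constraints are violated by every assignment, i.e.\ must be removed; as $m = dn/k$, this is at least $(|P^{-1}(0)|/2^k - \epsilon') dn/k$, which is precisely the definition of being $(|P^{-1}(0)|/2^k-\epsilon)$-far from satisfiability once we set $\epsilon' = \epsilon$ and choose $d$ accordingly. Since the bad events (non-expansion of $H$, and existence of an over-satisfying assignment given expansion) each have probability $o(1)$, a union bound completes the proof. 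I would cite \cite{Sch08,Tul09} for the packaging of the expander-counting step, filling in only the adaptation to the literal-vector formulation of \csp{$P$}; the genuinely load-bearing ingredient is the interplay between the Chernoff exponent $\epsilon^2 m$ and the entropy $\epsilon\log(1/\epsilon)\,m$ of locating the excess constraints, which forces the quantitative bound $d = O(1/\epsilon^2)$.
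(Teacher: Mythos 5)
The paper's proof of Lemma~\ref{lmm:far} is much simpler than what you propose: fix an assignment, apply Hoeffding to the $m=dn/k$ mutually independent indicator variables $\bfX_e$, getting failure probability $\exp(-\Omega(\epsilon^2 dn))$, and then take a direct union bound over all $2^n$ assignments. This works precisely because $d$ is a free parameter in the lemma's statement: choosing $d=\Omega(1/\epsilon^2)$ makes the per-assignment exponent large enough to dominate the $2^n$ entropy. Your claim that ``a naive union bound over all of them fails'' is incorrect in this setting --- it fails only if $d$ is fixed and small, but here the lemma asserts existence of a suitable $d$, which is exactly what the union bound determines.

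Consequently, the entire expansion-based machinery you invoke (conditioning on $H$ being a $(\gamma,\eta)$-expander, the entropy-counting of ``excess'' constraint sets with the $\epsilon\log(1/\epsilon)m$ bound, the local-correction argument) is not needed for this lemma and is not what the paper does. You appear to be conflating Lemma~\ref{lmm:far} with a genuinely different class of results, namely unconditional unsatisfiability or integrality-gap bounds for \emph{every fixed} small degree, where one cannot afford to make $d$ grow with $1/\epsilon$ and one must use expansion to control the counting. (This is also likely why you reached for~\cite{Sch08,Tul09}: those papers do need expander counting for their applications, but the paper cites them only as a provenance pointer and its own proof uses none of that.) Expansion does appear elsewhere in the paper --- in Lemma~\ref{lmm:expander}, in the girth/cycle-count lemmas, and crucially in the one-sided error lower bound and the Hamming-code argument of Theorem~\ref{thr:generator-hardness} --- but it is not a hypothesis or an ingredient of Lemma~\ref{lmm:far}. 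You do arrive at the correct quantitative answer $d=O(1/\epsilon^2)$, and your reduction from ``satisfies at most $(\mu+\epsilon)m$'' to ``$(|P^{-1}(0)|/2^k-\epsilon)$-far'' is right, but the route is substantially longer and the stated reason for taking the detour is a misdiagnosis.
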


\subsection{Randomized processes equivalent to $\caldsat$ and $\caldfar$}
We show that, with high probability,
any algorithm $\cala$ with $O(n^{1/2+\delta})$ queries runs on distributions $\caldsat$, or $\caldfar$ can find at most $O(n^{3\delta})$ cycles and the lengths of those cycles are $\Omega(\log_{dk}n)$.

We define a randomized process $\calpsat$, which interacts with $\cala$, 
so that $\calpsat$ answers queries from $\cala$ while constructing a random graph from $\caldsat$.
Thus, the interaction of $\calpsat$ with $\cala$ captures a random execution of $\cala$ on a graph uniformly distributed in $\caldsat$.
Similarly, we define a randomized process $\calpfar$, which imitates $\caldfar$.

The process $\calpsat$ has two stages.
The first stage continues as long as $\cala$ performs queries, and $\calpsat$ answers to those queries.
In the second stage, $\calpsat$ determines the rest of the instance.
$\calpsat$ internally holds a supposed solution $\{\bfx_v\}_{v\in V}$, 
which is hidden from $\cala$.
Literal vectors are determined so as not to contradict this solution.

First stage of $\calpsat$:
Starting from $t=1$, 
for each query $q_t=(v_t,i_t)$ of $\cala$, $\calpsat$ proceeds as follows.
For each vertex $u$, 
we define \textit{remaining degree} $r(u)$ as the number of constraints adjacent to $u$ which are not accessed yet by $\cala$.
We choose $u_2,\ldots,u_k$ with a probability according to their remaining degrees.
Specifically, 
since the sum of remaining degrees of all vertices at the time that $\cala$ specifies $v_t$ is $dn-(t-1)k+1$,
the probability that a vertex $u$ is chosen as $u_2$ is \( r(u)/(dn-(t-1)k-1) \).
Similarly, the probability that $u$ is chosen as $u_3$ is \( r(u)/(dn-(t-1)k-2) \) since the sum of the remaining degrees decreases by one.
This process continues until $u_k$ is chosen.
Finally, form an edge $e=(u_1=v_t,\ldots,u_k)$.
For each chosen vertex $u_i\in e$, 
if the supposed solution $\bfx_{u_i}$ is not determined yet,
$\calpsat$ chooses $\bfx_{u_i}\in \bit$ uniformly at random.
Then, $\calpsat$ chooses a literal vector $\bfb_e\in \bit^k$ uniformly at random from \( \{ b\in \bit^k \mid P((\bfx_{u_1},\ldots,\bfx_{u_t})+b)=1\}  \).
Finally, $\calpsat$ returns the constraint $C_e$ of the form $P((x_{u_1},\ldots,x_{u_t})+\bfb_e)=1$ to $\cala$.

Second stage of $\calpsat$:
Among all possibilities of the rest of the underlying graph,
$\calpsat$ chooses one of them uniformly at random.
Then, $\calpsat$ decides $\bfx_v$ and $\bfb_e$ randomly in the same way as the first stage.

The process $\calpfar$ proceeds in an almost identical manner.
The only difference is that $\calpfar$ does not keep track of the supposed solution and always chooses literal vectors uniformly at random.
It is easy to confirm that the following lemma holds using indunction on the number of queries,
and we omit the proof (see Lemma~7.3 of~\cite{GR08} for details).
\begin{lemma}\label{lmm:equivalent}
  For every algorithm $\cala$, the process $\calpsat$ (resp., $\calpfar$) uniformly generates instances of $\caldsat$ (resp., $\caldfar$) when interacting with $\cala$.
  \qed
\end{lemma}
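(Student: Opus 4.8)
The plan is to use the principle of deferred decisions together with an induction on the number $t$ of queries made by $\cala$. Recall that $\caldsat$ can be sampled in three stages: (i) draw the underlying hypergraph $H$ from $\calg_{n,d,k}$, which is a configuration model --- partition the $dn$ clones (the $d$ copies of each vertex) into $k$-subsets uniformly at random, then contract the clones of each vertex; (ii) draw a supposed solution $\bfx\in\bit^V$ with each coordinate an independent uniform bit; (iii) for each edge $e$, draw $\bfb_e$ uniformly from the set of literal vectors under which $C_e$ is satisfied by $\bfx$. Running $\cala$ against the true oracle $\calo_\Phi$ with $\Phi\sim\caldsat$ therefore induces a joint law on the transcript $(q_1,a_1),\dots,(q_t,a_t)$ together with the as-yet-unrevealed part of $\Phi$, and I want to show that $\calpsat$ reproduces this joint law exactly at every step.

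For the inductive step, observe that the next query $q_{t+1}=(v,i)$ is a deterministic function of the transcript, so it suffices to compare the conditional law of the answer $a_{t+1}$ given the first $t$ query--answer pairs. If the $i$-th constraint at $v$ has already been revealed, both processes return it verbatim, so assume it is new. Under $\caldsat$, condition on everything revealed so far: by the exchangeability of the configuration model, the $k-1$ endpoints of the new edge other than $v$ are obtained by choosing $k-1$ clones uniformly without replacement among the clones not yet matched, and after contraction this is precisely ``choose $u_2,\dots,u_k$ with probability proportional to the remaining degrees,'' matching $\calpsat$. Given the new edge, the supposed-solution values on those of its vertices not previously touched are fresh independent uniform bits, again as in $\calpsat$; and given the supposed-solution values on the $k$ vertices of $e$, the literal vector $\bfb_e$ is uniform over the satisfying set, as in $\calpsat$. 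Multiplying these three conditional factors shows that the one-step updates agree, which closes the induction.

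Once $\cala$ halts, the second stage of $\calpsat$ completes a uniformly random legal extension of the revealed partial hypergraph and then fills in the remaining $\bfx_v$ and $\bfb_e$ by the same independent-uniform rules; by the inductive claim the revealed part already has the correct conditional law, and uniform completion together with the stage-(ii)/(iii) rules on the remainder is exactly the conditional law of $\Phi\sim\caldsat$ given the revealed part, so the overall output is distributed as $\caldsat$. The claim for $\calpfar$ follows from the identical argument with stages (ii) and (iii) replaced by ``$\bfb_e$ uniform over $\bit^k$, independently for each edge,'' which is how $\caldfar$ is defined.

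The main difficulty is bookkeeping rather than conceptual: one must fix, once and for all, an ordering of the $d$ clones of each vertex so that the index $i$ in a query $(v,i)$ consistently names ``the $i$-th constraint at $v$'' both in the true oracle and in $\calpsat$, and one must check the exchangeability step carefully when several endpoints of the new edge coincide with $v$ or with each other, since we are working with multi-hypergraphs. These points are handled exactly as in Lemma~7.3 of~\cite{GR08}, which is why we only sketch the argument here.
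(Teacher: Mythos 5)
Your proposal is correct and takes essentially the same approach the paper intends: an induction on the number of queries via the principle of deferred decisions, with the configuration-model exchangeability justifying the ``proportional to remaining degree'' rule, exactly as in Lemma~7.3 of~\cite{GR08} to which the paper defers. The paper omits the proof entirely, so your write-up is simply a fleshed-out version of the argument the paper cites.
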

The proofs of the following two lemmas are deferred to Appendices~\ref{apx:few-cycles} and~\ref{apx:large-girth}.
\begin{lemma}\label{lmm:few-cycles}
  Let $\delta\geq 0$ and $G$ be the hypergraph induced by the query-answer history after $O(n^{1/2+\delta})$ steps of interactions between an algorithm $\cala$ and $\calpsat$ (or $\calpfar$).
  Then, with a probability of at least $1-o(1)$, $\cy(G)=O(k^2n^{3\delta})$.
\end{lemma}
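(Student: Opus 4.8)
The plan is to bound the expected number of "extra" edges — edges that close a hypercycle — accumulated during the first $\ell = O(n^{1/2+\delta})$ queries, and then apply Markov's inequality. Recall that $\cy(G) = (k-1)m - n_G + c$ counts exactly how many edges of $G$ are "redundant" relative to a hyperforest spanning the same vertices and components; equivalently, when we add the edges of $G$ one at a time in the order the oracle reveals them, $\cy(G)$ increases by at most $k-1$ each time a newly revealed edge merges two vertices that were already in the same component (or shares more than one vertex with the current graph). So it suffices to bound the number of such "cycle-closing" events.

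First I would analyze a single step of $\calpsat$ (the argument for $\calpfar$ is identical since both processes build the underlying hypergraph the same way). At step $t$, the process has revealed $t-1$ edges touching at most $k(t-1)$ vertices; the new edge $e = (v_t, u_2, \ldots, u_k)$ is formed by drawing $u_2, \ldots, u_k$ according to remaining degrees from a pool of $dn - (t-1)k - O(1)$ half-edges. A cycle is closed precisely when at least two of the $k$ endpoints of $e$ land in vertices already present in the current graph $G_{t-1}$ (two-in-one-edge collisions, counting $v_t$ itself as "already present" when it has been queried before). Since at most $dk(t-1)$ of the roughly $dn$ half-edges belong to previously-seen vertices, the probability that any particular one of the at most $k$ newly drawn endpoints hits a seen vertex is $O(k t / n)$, and hence the probability that two or more do — the event that step $t$ contributes to $\cy$ — is $O(k^2 (kt/n)^2) = O(k^4 t^2 / n^2)$. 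Each such event raises $\cy(G)$ by at most $k-1 \le k$.

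Next I would sum over $t = 1, \ldots, \ell$ with $\ell = c_0 n^{1/2+\delta}$: the expected increase in $\cy(G)$ is
\[
\E[\cy(G)] = O\!\left(\sum_{t=1}^{\ell} \frac{k^5 t^2}{n^2}\right) = O\!\left(\frac{k^5 \ell^3}{n^2}\right) = O\!\left(k^5 n^{(3/2+3\delta) - 2}\right) = O\!\left(k^5 n^{-1/2 + 3\delta}\right).
\]
Wait — this is $o(1)$, which is even stronger than claimed; the stated bound $O(k^2 n^{3\delta})$ must be what survives after the authors also account for the $O(n^{1/2})$ regime differently, or they are simply being generous. In any case, by Markov's inequality, $\cy(G) = O(k^2 n^{3\delta})$ with probability $1 - o(1)$ (in fact with probability $1 - o(n^{-1/2})$ by the computation above, but $1-o(1)$ is all we need), and when $\cy(G)$ is an integer smaller than a fixed small constant it is in fact $0$. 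One subtlety to handle carefully: the second stage of $\calpsat$ does not affect $G$, since $G$ is defined to be only the query-answer history, so no extra cycles are introduced there.

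The main obstacle I anticipate is making the per-step collision bound fully rigorous in the presence of the \emph{remaining-degree} weighting: the draws of $u_2, \ldots, u_k$ are not uniform over vertices but over half-edges, and the weights shift as earlier queries are answered, so one must argue that the total half-edge weight sitting on already-revealed vertices is still $O(dk t)$ out of $\Theta(dn)$ regardless of the adversary's query strategy. This is straightforward — a revealed vertex has had at most $d$ half-edges to begin with and the total is conserved — but it requires being explicit that $\cala$ is deterministic (having fixed it via Yao's principle) and that the bound is uniform over its choices. Once that uniformity is in hand, the union bound over the at most $\binom{k}{2}$ pairs of endpoints and the sum over $t$ go through routinely.
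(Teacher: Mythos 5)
Your overall plan --- bound $\E[\cy(G)]$ per step, sum, then Markov --- is the same as the paper's, but your per-step probability estimate has a genuine gap, and the discrepancy you noticed (``this is $o(1)$, which is even stronger than claimed'') is a symptom of that gap rather than evidence of generosity in the paper.

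The error is in the line ``the probability that two or more do --- the event that step $t$ contributes to $\cy$ --- is $O(k^2(kt/n)^2)$.'' You correctly flag in your parenthetical that $v_t$ must be counted as ``already present when it has been queried before,'' but you then compute the probability of a two-collision entirely among the $k-1$ \emph{randomly drawn} endpoints $u_2,\ldots,u_k$. If the algorithm queries a vertex $v_t$ that already appears in $G_{t-1}$ --- which an adversarial $\cala$ can and will do at every step after the first --- then $v_t$ is automatically one ``old'' endpoint, and a cycle closes as soon as \emph{one} of $u_2,\ldots,u_k$ is old. That event has probability $\Theta(kt/n)$, not $O((kt/n)^2)$, so the per-step expected increment is $\Theta(k\cdot kt/n)=\Theta(k^2 t/n)$, and summing gives $\E[\cy(G)]=\Theta(k^2\ell^2/n)=\Theta(k^2 n^{2\delta})$. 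Your claimed $O(k^5 n^{-1/2+3\delta})$ is thus off by roughly a factor of $n^{1/2-\delta}$; in particular for any fixed $\delta\in(0,1/6)$ it would wrongly predict $\cy(G)=0$ with high probability, whereas the true expectation is $\omega(1)$.

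The fix is exactly the paper's (simpler) accounting: note that $\Delta\cy \le \sum_{i=2}^{k}\mathbf{1}[u_i\ \text{old}]$ regardless of whether $v_t$ is old (since one old endpoint never creates a cycle by itself), so $\E[\Delta\cy]\le (k-1)\cdot O(kt/n)$ per step. Summing over $t\le O(n^{1/2+\delta})$ gives $\E[\cy(G)]=O(k^2 n^{2\delta})$, and Markov then yields $\cy(G)=O(k^2 n^{3\delta})$ with probability $1-O(n^{-\delta})$. Your observations about the remaining-degree weighting and about the second stage of $\calpsat$ not touching $G$ are correct and compatible with this corrected estimate.
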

\begin{lemma}\label{lmm:large-girth}
  Let $\delta \geq 0$ and $G$ be the hypergraph induced by the query-answer history after $O(n^{1/2+\delta})$ steps of interactions between an algorithm $\cala$ and $\calpsat$ (or $\calpfar$).
  Then, with a probability of at least $1-o(1)$, 
  the girth of $G$ is at least $g=(\frac{1}{2}-2\delta)\log_{dk}n$.
\end{lemma}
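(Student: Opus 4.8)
The plan is to apply a union bound over the at most $T=O(n^{1/2+\delta})$ steps of the interaction, estimating at each step the probability that the newly revealed edge closes a short hypercycle. Since $\calpsat$ and $\calpfar$ reveal the underlying hypergraph in exactly the same way (they differ only in how literal vectors are chosen, which is irrelevant here), it suffices to treat both at once; write $G_t$ for the hypergraph induced by the query-answer history after $t$ steps, and recall $g=(1/2-2\delta)\log_{dk}n$. The one probabilistic fact I would extract from the process is that whenever the $t$-th query is new and $\calpsat$ (or $\calpfar$) picks $u_2,\dots,u_k$ according to remaining degrees, each $u_s$ equals any fixed vertex with probability at most $d/(dn-tk)\le (1+o(1))/n$, since $t\le T=o(n)$ and every remaining degree is at most $d$. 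I would also assume, without loss of generality, that $\cala$ only queries vertices already present in its history (querying a fresh vertex can only produce an edge whose vertices in $G_{t-1}$ all lie among $u_2,\dots,u_k$, a case covered below).

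Next I would record a purely combinatorial bound on ball sizes: in any hypergraph of maximum degree $d$ and uniformity $k$, the number of vertices within distance $r$ of a fixed vertex is at most $\sum_{j=0}^{r}(d(k-1))^{j}\le 2(dk)^{r}$, simply because each vertex has at most $d(k-1)$ neighbours; this bound holds regardless of how many hypercycles $G_{t}$ already contains, so Lemma~\ref{lmm:few-cycles} is not needed here. Now fix a step $t$ at which a new edge $e=(u_1,\dots,u_k)$ is formed with $u_1\in G_{t-1}$. A hypercycle of length less than $g$ is created only if $e$ contains two vertices of $G_{t-1}$ at distance at most $g-2$ in $G_{t-1}$. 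Splitting on whether $u_1$ is one of these two vertices, the probability of this event is at most
\[
(k-1)\cdot 2(dk)^{g-2}\cdot\frac{1+o(1)}{n}\;+\;\binom{k-1}{2}\cdot\frac{tk\,(1+o(1))}{n}\cdot\frac{2(dk)^{g-2}(1+o(1))}{n},
\]
where the first term bounds the event ``$u_1$ together with some chosen $u_s$ lies in the radius-$(g-2)$ ball around $u_1$'' and the second bounds ``two of the chosen vertices land in $G_{t-1}$ within distance $g-2$ of each other''.

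Finally I would sum over $t\le T$. The first term contributes $O\!\left(kT(dk)^{g-2}/n\right)$ and the second $O\!\left(k^{4}T^{2}(dk)^{g-2}/n^{2}\right)$. Since $(dk)^{g}=n^{1/2-2\delta}$ and $T=O(n^{1/2+\delta})$, these equal $O\!\left(n^{-\delta}/(d^{2}k)\right)$ and $O\!\left(k^{2}n^{-1/2}/d^{2}\right)$ respectively, both $o(1)$ for $\delta>0$. Hence with probability $1-o(1)$ no hypercycle of length less than $g$ is ever created, i.e. the girth of $G$ is at least $g$; the same estimate applies verbatim under $\calpfar$. The main obstacle is not any single estimate but the bookkeeping: the query vertex is chosen adversarially and adaptively, so the per-step bound must hold for the worst-case explored subgraph $G_{t-1}$, and one must verify that the ``two chosen vertices'' case is genuinely lower order — this is exactly where the extra factor $T/n=O(n^{\delta-1/2})=o(1)$ is spent. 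A secondary point is making the reduction precise so that we may assume each new edge shares its query vertex with the existing history.
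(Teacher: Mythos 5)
Your proof takes essentially the same route as the paper's: a union bound over the $T=O(n^{1/2+\delta})$ steps, with the per-step probability of closing a short hypercycle controlled by a ball-size estimate $O((dk)^{g})$ in the explored hypergraph together with the $O(1/n)$ bound on hitting any fixed remaining vertex. You are in fact slightly more careful than the paper: the paper's proof only bounds the probability that a newly chosen vertex lands in the radius-$g$ ball around the query vertex $v_t$, whereas you explicitly isolate the second case — two of the chosen $u_i,u_j$ both landing in $G_{t-1}$ close to each other but far from $v_t$ — and show via the extra factor $T/n=o(1)$ that it is lower order, so the conclusion is unaffected.
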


\subsection{Correlation decay along edges of a hypertree}
Let $\Phi$ be an instance of \csp{$P$} generated by $\caldsat$.
Suppose that $T=(V,E)$ is a subgraph of the underlying graph of $\Phi$ and $T$ is a hypertree.
Let $v\in V$ be a (arbitrary) root of $T$ and $L$ be a subset of leaves of $T$.
In this subsection,
we consider how the information of $\bfx_L$ propagates into $\bfx_v$ along edges of $T$.
Specifically, we calculate $\dtv[\cald_H(\bfx_v|\bfx_L,\bfb_E),\cald_H(\bfx_v|\bfb_E)]$.
A proof of the next lemma is given in Appendix~\ref{apx:cannot-guess-tree}.
\begin{lemma}\label{lmm:cannot-guess-tree}
  Let $T=(V,E)$ be a subgraph of a hypergraph $H$.
  If $T$ is a hypertree, 
  then $\bfx_v$ and $\bfb_E$ are independent for any $v\in V$ under $\cald_H$.
\end{lemma}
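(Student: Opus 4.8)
The plan is to show that, for a hypertree $T=(V,E)$ sitting inside $H$, the supposed solution value $\bfx_v$ at any vertex $v$ is still uniform even after we condition on all the literal vectors $\bfb_E$ of the tree edges. The natural way to organize this is by induction on the number of edges of $T$, rooting $T$ at $v$ and peeling off a leaf edge.

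First I would recall how $\cald_H$ generates an instance on $T$: choose $\bfx_u\in\bit$ independently and uniformly for each $u\in V$, and then for each edge $e=(u_1,\dots,u_k)$ choose $\bfb_e$ uniformly from $\{b\in\bit^k : P((\bfx_{u_1},\dots,\bfx_{u_k})+b)=1\}$. The key elementary observation is symmetry of this choice in the following sense: for a single constraint, the pair $(\bfx_e,\bfb_e)$ (where $\bfx_e$ is the restriction of $\bfx$ to the vertices of $e$) is \emph{exchangeable} enough that $\bfb_e$ alone is uniform on $\bit^k$ — indeed, for any fixed $b\in\bit^k$, $\Pr[\bfb_e=b]=\sum_{x}\Pr[\bfx_e=x]\Pr[\bfb_e=b\mid\bfx_e=x]=2^{-k}\sum_{x}\mathbf{1}[P(x+b)=1]/|P^{-1}(1)|=2^{-k}$, using $|P^{-1}(1)|>0$. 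More importantly, conditioning on $\bfb_e=b$ leaves the marginal of $\bfx_e$ uniform over $P^{-1}(1)-b$, hence the marginal of any single coordinate $\bfx_{u_i}$ (for a symmetric or indeed arbitrary $P$ with $|P^{-1}(1)|>0$) — and crucially, conditioning on $\bfb_e$ does not by itself bias $\bfx_{u_i}$ once we also marginalize out the other coordinates. The cleanest statement to extract is: for a single edge, $\bfx_{u_i}$ is independent of $\bfb_e$ for each fixed $i$, because $\Pr[\bfx_{u_i}=c\mid\bfb_e=b]=\sum_{x:\,x_i=c}\Pr[\bfx_e=x\mid\bfb_e=b]=|\{x\in P^{-1}(1)-b: x_i=c\}|/|P^{-1}(1)|$, and the base-edge case of the lemma ($|E|=1$) reduces to checking this is $1/2$ — which does need a property like symmetry of $P$ or at least that $P^{-1}(1)$ is balanced in each coordinate; for symmetric $P$ this follows from condition (ii), $P(x)=P(\bar x)$.

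For the inductive step, let $e=(u_1,\dots,u_k)$ be a leaf edge of $T$ with $u_1$ the vertex closer to the root (so $u_2,\dots,u_k$ are leaves of degree one in $T$), and let $T'=T\setminus e$. Conditioned on $\bfx_{u_1}$, the literal vector $\bfb_e$ depends only on $\bfx_{u_2},\dots,\bfx_{u_k}$, which are independent uniform bits not appearing anywhere in $T'$; so $\bfb_e$ is conditionally independent of $(\bfx_{V(T')},\bfb_{E(T')})$ given $\bfx_{u_1}$, and moreover, by the single-edge computation above applied with the roles arranged so that $u_1$ is the ``marginalized'' coordinate, $\bfb_e$ is in fact \emph{independent of $\bfx_{u_1}$} and hence independent of the whole pair $(\bfx_{V(T')},\bfb_{E(T')})$. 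Therefore conditioning on $\bfb_e$ changes nothing about the joint law of $(\bfx_{V(T')},\bfb_{E(T')})$, and in particular $\cald_H(\bfx_v\mid\bfb_E)=\cald_H(\bfx_v\mid\bfb_{E(T')})$, which by the induction hypothesis applied to $T'$ (still a hypertree, rooted at the same $v$, since removing a leaf edge preserves connectivity and acyclicity) equals the uniform distribution; equivalently $\bfx_v\ci\bfb_E$. One subtlety: if $v$ itself is the far endpoint $u_1$ of the last leaf edge removed, the base case is exactly the single-edge claim $\bfx_{u_1}\ci\bfb_e$, so the induction is consistent.

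I expect the main obstacle — really the only non-formality — to be pinning down precisely which property of $P$ is used and where. Independence of $\bfb_e$ from \emph{every individual} $\bfx_{u_i}$ is what makes the leaf-peeling go through cleanly, and verifying that amounts to: for each coordinate $i$ and each bit $c$, $|\{x\in P^{-1}(1): x_i=c\}|$ is the same for $c=0$ and $c=1$. For symmetric $P$ this is immediate from $P(x)=P(\bar x)$ (the complementation map is a bijection between the two sets). If one wanted the lemma for general $P$ (as its statement is phrased with $H$ an arbitrary hypergraph, not necessarily arising from a symmetric predicate), one would additionally need to handle the case where $P^{-1}(1)$ is coordinate-biased; but since the section header tells us we may assume $P$ is symmetric throughout Section~\ref{sec:two-sided} except in the final proof of Theorem~\ref{thr:two-sided}, invoking symmetry here is legitimate and keeps the argument short. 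Everything else is bookkeeping with conditional independence, which the graphical-model Observation already licenses.
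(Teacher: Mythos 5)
Your proof is correct and takes essentially the same route as the paper: establish a single-edge base case (that $\bfx_{v}$ and $\bfb_e$ are independent, using $P(x)=P(\bar x)$ to show $P^{-1}(1)$ is balanced in every coordinate), then induct by peeling a pendant hyperedge, using conditional independence across the cut vertex together with the base case to drop $\bfb_e$ from the conditioning. The paper packages the inductive step as a direct computation of $\Pr[\bfx_v\mid\bfb_E]=\Pr[\bfx_v\mid\bfb_{E-e}]$ while you phrase it as a contraction of independence statements, but the content and the use of symmetry condition (ii) are the same.
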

From Lemma~\ref{lmm:cannot-guess-tree},
$\dtv[\cald_H(\bfx_v|\bfx_L,\bfb_E),\cald_H(\bfx_v|\bfb_E)]=\dtv[\cald_H(\bfx_v|\bfx_L,\bfb_E)]$ holds.

Next, we see how $\dtv[\cald_H(\bfx_v|\bfx_L,\bfb_E)]$ propagates by connecting vertices at a vertex or an edge.
Proofs of next two lemmas are given in Appendices~\ref{apx:merge-vertex}, and~\ref{apx:merge-edge}, respectively.
\begin{lemma}\label{lmm:merge-vertex}
  Let $T=(V,E)$ be a subgraph of a hypergraph $H$.
  Suppose that $T$ is a hypertree.
  Let $T_1,\ldots,T_\ell$ be the set of the subtrees obtained by splitting $v\in V$ and $L_i (1\leq i\leq \ell)$ be a subset of the leaves of $T_i$.
  Then,
  \begin{eqnarray*}
    \dtv[\cald_H(\bfx_v|\{\bfx_{L_i}\}_{i=1}^\ell,\bfb_E)] \leq \sum_{i=1}^{\ell}\dtv[\cald_H(\bfx_v|\bfx_{L_i},\bfb_E)].
  \end{eqnarray*}
\end{lemma}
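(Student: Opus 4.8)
The plan is to reduce the multi-way split at a vertex to an iterated application of the two-distribution fusion principle already encoded in Lemma~\ref{lmm:product}, together with the triangle-type inequality for $\dtv$. First I would set $\cald = \cald_H(\cdot \mid \bfb_E)$ and observe that, conditioned on $\bfb_E$, the random variable $\bfx_v$ is uniform on its support (this is Lemma~\ref{lmm:cannot-guess-tree}, which gives $\bfx_v \ci \bfb_E$, so $\cald_H(\bfx_v\mid\bfb_E)=\calu(\bfx_v)$), and that for $i\neq j$ the leaf-sets $\bfx_{L_i}$ and $\bfx_{L_j}$ live in distinct subtrees $T_i,T_j$ meeting only at $v$. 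Hence every path in $T$ between $L_i$ and $L_j$ passes through $v$, and the Observation after Lemma~\ref{lmm:far} gives $\bfx_{L_i} \ci \bfx_{L_j} \mid \bfx_v$ under $\cald$. These are exactly the hypotheses of Lemma~\ref{lmm:product} with $\bfx:=\bfx_v$ and $\bfy_i:=\bfx_{L_i}$, so
\[
\Pr_{\cald}[\bfx_v=x \mid \{\bfx_{L_i}\}_{i=1}^\ell]
= \frac{\prod_{i=1}^{\ell}\Pr_{\cald}[\bfx_v=x\mid \bfx_{L_i}]}{\sum_{x'}\prod_{i=1}^{\ell}\Pr_{\cald}[\bfx_v=x'\mid \bfx_{L_i}]}.
\]

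Next I would exploit that $\bfx_v$ is Boolean (a predicate value lives in $\bit$, and $\bfx_v$ is a single variable in $\bit$), so the support of $\cald(\bfx_v)$ is $\{0,1\}$ and $\dtv[\cald(\bfx_v\mid\cdot)]$ equals $2\left|\Pr[\bfx_v=0\mid\cdot]-\tfrac12\right|$. Writing $p_i := \Pr_{\cald}[\bfx_v=0\mid\bfx_{L_i}] = \tfrac12 + \tfrac12\varepsilon_i$ where $\varepsilon_i := \dtv[\cald_H(\bfx_v\mid\bfx_{L_i},\bfb_E)]\cdot\sgn(\cdots)$ is a signed version with $|\varepsilon_i|=\dtv[\cald_H(\bfx_v\mid\bfx_{L_i},\bfb_E)]$, the formula above becomes
\[
\Pr_{\cald}[\bfx_v=0\mid\{\bfx_{L_i}\}] = \frac{\prod_i(1+\varepsilon_i)}{\prod_i(1+\varepsilon_i)+\prod_i(1-\varepsilon_i)},
\]
and therefore $\dtv[\cald_H(\bfx_v\mid\{\bfx_{L_i}\},\bfb_E)] = \left|\dfrac{\prod_i(1+\varepsilon_i)-\prod_i(1-\varepsilon_i)}{\prod_i(1+\varepsilon_i)+\prod_i(1-\varepsilon_i)}\right|$. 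The remaining task is the purely analytic inequality
\[
\left|\frac{\prod_i(1+\varepsilon_i)-\prod_i(1-\varepsilon_i)}{\prod_i(1+\varepsilon_i)+\prod_i(1-\varepsilon_i)}\right| \;\le\; \sum_i |\varepsilon_i|,
\]
valid for $\varepsilon_i\in[-1,1]$. I would prove this by induction on $\ell$: the $\ell=1$ case is an equality, and for the inductive step one uses the sub-additivity identity that if $f(a):=\tanh^{-1}$-type composition is applied—concretely, the map $(a,b)\mapsto \frac{(1+a)(1+b)-(1-a)(1-b)}{(1+a)(1+b)+(1-a)(1-b)} = \frac{a+b}{1+ab}$ on the "combined bias" level satisfies $\left|\frac{a+b}{1+ab}\right|\le |a|+|b|$ whenever $|a|,|b|\le 1$ (since $|a+b|\le (|a|+|b|)(1)\le (|a|+|b|)(1+ab)$ when $ab\ge 0$, and when $ab<0$ one checks directly), so the combined bias of all $\ell$ leaf-sets has absolute value at most $\sum_i|\varepsilon_i|$.

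The step I expect to be the main obstacle is handling the sign bookkeeping cleanly: the quantities $\dtv[\cdot]$ are unsigned, but the fusion formula from Lemma~\ref{lmm:product} naturally produces a \emph{signed} combination of the conditional biases, and to land on the clean bound $\sum_i \dtv[\cdot]$ one must argue that the worst case is when all biases point the same way (so that absolute values add without cancellation helping the wrong direction). Concretely, I would phrase everything in terms of the signed bias $\beta_i := \Pr_{\cald}[\bfx_v=0\mid\bfx_{L_i}] - \Pr_{\cald}[\bfx_v=1\mid\bfx_{L_i}] \in [-1,1]$ with $|\beta_i| = \dtv[\cald_H(\bfx_v\mid\bfx_{L_i},\bfb_E)]$, show that the signed combined bias $B$ satisfies the recursion $B_{j} = \frac{B_{j-1}+\beta_j}{1+B_{j-1}\beta_j}$ with $B_0=0$, and then invoke the elementary lemma $|B_j|\le |B_{j-1}| + |\beta_j|$ to conclude by telescoping. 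Everything else — reducing the conditional-independence hypotheses of Lemma~\ref{lmm:product} to the tree structure, and translating between $\dtv$ of a Boolean variable and its bias — is routine given the Observation and Lemma~\ref{lmm:cannot-guess-tree} already in the excerpt.
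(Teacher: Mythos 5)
Your proof is correct and follows essentially the same route as the paper: establish $\bfx_{L_i}\ci\bfx_{L_j}\mid\bfx_v$ from the Observation, note uniformity of $\bfx_v$ given $\bfb_E$ via Lemma~\ref{lmm:cannot-guess-tree}, apply Lemma~\ref{lmm:product} to get the Bayes-product formula, and then bound the resulting bias by the sum of the individual biases. The only difference is presentational: the paper packages the final analytic step as a standalone helper lemma (Lemma~\ref{lmm:product-tv}, proved by the very same induction on $k$ via the identity $\delta'+\delta_t$ over $1+4\delta'\delta_t$, which is your $(a+b)/(1+ab)$ recursion up to a factor of $2$), whereas you derive it inline. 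Your sign bookkeeping worry is benign; the paper's version handles it the same way by reducing everything to a single signed deviation $\delta=\Pr[\bfx=0]-1/2$ per distribution.
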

\begin{lemma}\label{lmm:merge-edge}
  Let $T=(V,E)$ be a subgraph of a hypergraph $H$.
  Suppose that $T$ is a hypertree.
  Let $T_1,\ldots,T_{k-1}$ be the set of subtrees obtained by removing $e=(v_1,\ldots,v_k)\in E$.
  Here, $v_i$ is the root of $T_i$.
  Let $L_i(1\leq i\leq k-1)$ be a subset of the leaves of $T_i$.
  Then,
  \begin{eqnarray*}
    \dtv[\cald_H(\bfx_{v_k}|\{\bfx_{L_i}\}_{i=1}^{k-1},\bfb_E)]\leq \rho(P)\sum_{i=1}^{k-1}\dtv[\cald_H(\bfx_{v_i}|\bfx_{L_i},\bfb_E)].
  \end{eqnarray*}
  Here, $\rho(P)\leq 1$ is a constant, which only depends on the (symmetric) predicate $P$.
  In particular, $\rho(P)<1$ if $P$ is not \equ.
\end{lemma}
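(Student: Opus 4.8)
The plan is to analyze how the constraint $C_e$ on the edge $e=(v_1,\dots,v_k)$ transmits information about $\bfx_{v_1},\dots,\bfx_{v_{k-1}}$ into $\bfx_{v_k}$, after first reducing the subtree contributions to the roots $v_1,\dots,v_{k-1}$. First I would invoke the conditional-independence structure: removing the edge $e$ disconnects $T$ into the subtrees $T_1,\dots,T_{k-1}$ rooted at $v_1,\dots,v_{k-1}$, so under $\cald_H(\cdot\mid\bfb_E)$ the variables $\bfx_{L_i}$ influence $\bfx_{v_k}$ only through $\bfx_{v_i}$, and moreover $\bfx_{v_1},\dots,\bfx_{v_{k-1}}$ are mutually independent given nothing extra (each lives in a different component of $T-e$). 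Using Lemma~\ref{lmm:cannot-guess-tree} inside each $T_i$, the marginal of each $\bfx_{v_i}$ given $\bfb_{E(T_i)}$ is uniform, so the starting point is a product of near-uniform bits with deviations $\dtv[\cald_H(\bfx_{v_i}\mid\bfx_{L_i},\bfb_E)]=:\Delta_i$. The conditioning on $\bfb_e$ (equivalently, on the event that $P$ holds for the appropriately shifted tuple) is the only coupling among the $v_i$'s and $v_k$.

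Next I would write $\Pr[\bfx_{v_k}=z\mid \bfx_L,\bfb_E]$ by summing over the values $(z_1,\dots,z_{k-1})$ of $(\bfx_{v_1},\dots,\bfx_{v_{k-1}})$, weighting by $\prod_i \Pr[\bfx_{v_i}=z_i\mid \bfx_{L_i},\bfb_{E(T_i)}]$ and by the kernel coming from $\bfb_e$ (the indicator, over the random literal vector, that the constraint fixes $\bfx_{v_k}=z$ given the other coordinates). Because the literal vector $\bfb_e$ is chosen uniformly among those satisfying the constraint for the hidden solution, this kernel is exactly the "noise operator'' of the symmetric predicate $P$: the conditional law of the last coordinate given the first $k-1$ coordinates of a uniformly random accepting assignment. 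The key quantity $\rho(P)$ should be defined as the operator norm (on the space of mean-zero functions on $\bit$, i.e. as a contraction factor on the single Fourier coefficient) of this averaging operator — concretely something like $\rho(P)=\big|\,\E[\chi(\bfx_{v_k})\mid \bfx_{v_1},\dots,\bfx_{v_{k-1}}\text{ fixed}]\,\big|$ maximized suitably, where $\chi$ is the nontrivial character. One then expands each $\Pr[\bfx_{v_i}=z_i\mid\cdots]=\tfrac12(1+\sigma_i\tau_i)$ with $|\sigma_i|\le\Delta_i$, multiplies out, and observes that the all-ones (constant) term reproduces the unconditioned law of $\bfx_{v_k}$, which by Lemma~\ref{lmm:cannot-guess-tree} is uniform; every surviving term carries at least one factor $\sigma_i$ and is damped by the operator, and cross terms with two or more $\sigma_i$'s only help. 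Collecting gives $\dtv[\cald_H(\bfx_{v_k}\mid\{\bfx_{L_i}\},\bfb_E)]\le \rho(P)\sum_{i=1}^{k-1}\Delta_i$ up to the obvious bookkeeping.

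The main obstacle I anticipate is pinning down $\rho(P)$ cleanly and proving $\rho(P)<1$ precisely when $P$ is not \equ. For \equ the accepting assignments are $0^k$ and $1^k$, so fixing any single coordinate determines $\bfx_{v_k}$ deterministically and $\rho=1$; for any other symmetric $P$ with $|P^{-1}(1)|>0$ there must be two accepting assignments that agree on the first $k-1$ coordinates but differ on the last — or more carefully, the conditional distribution of the last coordinate given the first $k-1$ is not a point mass for a positive-probability set of prefixes, which forces the contraction to be strict. I would prove this by a short case analysis on symmetric predicates: symmetry plus $P=P(\overline{\cdot})$ means $P^{-1}(1)$ is a union of "slices'' $\{x:|x|=w\}$ closed under complement; unless this union is exactly $\{0,1\}$ weights $\{0,k\}$ (the \equ case), one can find a prefix of Hamming weight $w'$ such that both $w'$ and $w'+1$ are attainable full-weights, giving a genuinely random last bit and hence $\rho(P)<1$. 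The other mildly delicate point is making the "product over independent subtrees" step rigorous in the presence of the conditioning on all of $\bfb_E$ (not just $\bfb_{E(T_i)}$ and $\bfb_e$); here I would note that $\bfb_{E\setminus E(T)}$ and the $\bfb_{E(T_j)}$ for $j\ne i$ are independent of $(\bfx_{v_i},\bfx_{L_i})$ given $\bfb_{E(T_i)}$ by the Observation, so they drop out of every conditional marginal and only the claimed factors remain.
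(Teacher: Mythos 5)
Your setup matches the paper's: both arguments exploit the conditional independence coming from removing $e$, use Lemma~\ref{lmm:product} to express the posterior on $\bfx_e$ as a normalized product over the $k-1$ independent subtree roots, and reduce the problem to analyzing a contraction coefficient determined by how the constraint $C_e$ couples the first $k-1$ coordinates to the last. So the structural reduction is essentially the paper's.

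The genuine gap is in how you extract a uniform $\rho(P)<1$. You define $\rho(P)$ as an operator norm / Fourier contraction factor, which captures only the linearization of the map $\delta\mapsto\dtv[\cald_H(\bfx_{v_k}\mid\cdots)]$ near $\delta=(0,\dots,0)$, and you then assert that ``cross terms with two or more $\sigma_i$'s only help.'' That assertion is precisely the hard content of the lemma, and it is not obvious: the ratio $\rho(\delta)=\dtv[\cald_H(\bfx_{v_k}\mid\cdots)]/\sum_i 2|\delta_i|$ is a nonlinear function of $\delta$, and its supremum over $[-1/2,1/2]^{k-1}$ need not be attained at the origin. The paper's proof handles this by (i) first invoking Lemma~\ref{lmm:product-tv} (a careful sub-additivity inequality, itself proved by induction) to get $\rho(P)\le 1$, and (ii) then proving strictness via a compactness argument that tracks where $\rho(\delta)$ can approach $1$: outside a bounded region it is automatically small, with two or more large $\delta_i$'s Lemma~\ref{lmm:product-tv}'s equality condition fails, and on the remaining ``one large coordinate'' slice the numerator and denominator are polynomials of degree at most two in $\delta_1$, so the worst case is at $\delta_1=0$ (the limit you compute) or at the boundary $\delta_1=1/2$ (which the paper handles by the ``knowing one coordinate exactly does not determine $\bfx_{v_k}$'' observation). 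Your Fourier definition of $\rho$ handles only the $\delta_1\to 0$ corner of this analysis; the boundary behavior at $\delta_1=1/2$ and the multi-large-$\delta_i$ cases are not controlled by an operator norm and require separate arguments you have not supplied.

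A secondary, smaller issue: your case analysis for $\rho(P)<1$ argues that for a non-\equ symmetric predicate ``one can find a prefix $\ldots$ giving a genuinely random last bit.'' But the existence of one such prefix is not the right statement for the contraction coefficient; for example, for $3$-\nae the prefix $(0,0)$ determines the last bit. What matters is that the relevant conditional law (averaged correctly, or at the specific boundary point $\delta_1=1/2$, $\delta_{\ge 2}=0$ as in the paper) is not a point mass, which needs the averaging over the uniform remaining coordinates — the paper states this explicitly, and your argument should too.
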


\subsection{Putting things together}
Let $\rho(P)$ be the constant determined in Lemma~\ref{lmm:merge-edge}.
\begin{lemma}\label{lmm:E+S-to-v}
  Let $G=(V,E)$ be a subgraph of a hypergraph $H$ with girth $g$ and let $e\in E$ be an edge.
  Then, for any $v\in e$ and $S\subseteq e-\{v\}$, $\dtv[\cald_H(\bfx_v|\bfx_S,\bfb_{E-e})]\leq \rho(P)^{g}(2\cy(G-e)+k)$.
\end{lemma}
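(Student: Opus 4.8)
\textbf{Proof proposal for Lemma~\ref{lmm:E+S-to-v}.}

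The plan is to reduce the bound on $\dtv[\cald_H(\bfx_v|\bfx_S,\bfb_{E-e})]$ to an application of the correlation-decay machinery of Section~\ref{subsec:probability} (Lemmas~\ref{lmm:merge-vertex} and~\ref{lmm:merge-edge}) together with the structural observation that, in a graph of girth $g$, the $g$-neighborhood of $v$ in $G-e$ unrolls into a hypertree. First I would peel off the edge $e$: working inside $G-e$, the vertices of $S$ are all at distance at least $g$ from $v$ (any shorter connection, together with $e$, would close a cycle of length $< g$). So I consider the breadth-first ball $T$ around $v$ in $G-e$ out to depth $g$; since $G$ has girth $g$, the induced subgraph on this ball is a hypertree rooted at $v$ (all the cyclic structure of $G-e$ lives strictly outside this ball, reachable only through the leaves of $T$). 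Every vertex of $S$ lies outside $T$, hence all information from $\bfx_S$ to $\bfx_v$ must travel through the leaf set $L$ of $T$; by the Observation on graphical models, $\bfx_v \ci \bfx_S \mid \bfx_L$ under $\cald_H(\cdot\mid\bfb_E)$, and by Lemma~\ref{lmm:serial} it suffices to bound $\dtv[\cald_H(\bfx_v|\bfx_L,\bfb_E)]$ where $E$ here denotes the edge set of $T$ (the conditioning on the remaining $\bfb$'s only adds independent structure and can be absorbed, since $T$ is a hypertree and Lemma~\ref{lmm:cannot-guess-tree} lets us rewrite the two-argument $\dtv$ as the one-argument version).

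Next I would propagate the bound $\dtv[\cald_H(\bfx_v|\bfx_L,\bfb_E)]$ down through the $g$ levels of $T$ by alternately invoking Lemma~\ref{lmm:merge-vertex} (to split at a vertex, which only sums the contributions of the child subtrees) and Lemma~\ref{lmm:merge-edge} (to cross a hyperedge, which sums the child contributions but multiplies the total by the contraction factor $\rho(P) < 1$). A leaf $\ell \in L$ that is \emph{not} actually connected to any vertex of $S$ outside $T$ contributes nothing: $\bfx_\ell$ is then conditionally independent of $\bfx_S$, so its term in the splitting is $0$. Hence the only leaves that matter are those lying on some $v$–$S$ path in $G-e$ that first exits $T$ at that leaf. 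Unrolling $T$ from the root, every such relevant path crosses $g$ hyperedges before reaching depth $g$, so each surviving leaf contributes a factor $\rho(P)^g$ after the $g$ rounds of merging; the base case is the trivial bound $\dtv[\cald_H(\bfx_\ell|\bfx_\ell,\cdots)] \le 1$ at a relevant leaf and $\dtv \le 1$ at the conditioning sites. Summing, $\dtv[\cald_H(\bfx_v|\bfx_L,\bfb_E)] \le \rho(P)^g \cdot (\text{number of relevant leaves})$.

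The last ingredient is to count the relevant leaves in terms of $\cy(G-e)$. A hypertree has exactly one path between any two of its vertices, so if the ball $T$ around $v$ were the whole of $G-e$ there would be at most $k$ relevant leaves — the other endpoints $S\subseteq e-\{v\}$ can be reached along at most $|e-\{v\}| = k-1$ paths, plus a bit of slack for the root edge, giving the additive ``$+k$''. Each extra independent cycle in $G-e$, i.e.\ each unit of $\cy(G-e)$, can at most double the number of distinct paths leaving $v$ (a cycle gives two ways around), so the number of relevant leaves is at most $2\cy(G-e) + k$; I would make this precise by induction on $\cy(G-e)$, deleting one ``excess'' edge at a time and tracking how the path-count grows. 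Combining the three steps yields $\dtv[\cald_H(\bfx_v|\bfx_S,\bfb_{E-e})] \le \rho(P)^g (2\cy(G-e)+k)$, as claimed. The main obstacle I anticipate is the bookkeeping in the second and third steps: making rigorous the claim that merging across $g$ levels of the unrolled tree picks up exactly $\rho(P)^g$ per surviving leaf — in particular handling the merge/split lemmas cleanly when a vertex of $T$ has several children and only some subtrees are ``relevant'' — and verifying the ``at most doubling per cycle'' bound without over- or under-counting paths that wind around a cycle more than once (girth $\ge g$ and the depth-$g$ truncation of $T$ are what rule this out).
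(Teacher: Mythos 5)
Your overall plan matches the paper's proof closely: form the breadth-first ball $T$ of depth $g$ around $v$ in $G-e$ (a hypertree, by girth), reduce to $\dtv[\cald_H(\bfx_v\mid\bfx_L,\bfb_{E_T})]$ for a suitable leaf set $L$ via Lemma~\ref{lmm:serial}, propagate via Lemmas~\ref{lmm:merge-vertex} and~\ref{lmm:merge-edge} to pick up a factor $\rho(P)^g$ per leaf in $L$, and then count $|L|$. However, there is a genuine gap in which leaves you must retain in $L$. You claim a leaf $\ell$ not connected to $S$ outside $T$ ``contributes nothing'' because $\bfx_\ell$ is conditionally independent of $\bfx_S$, and that the conditioning on the $\bfb$'s outside $T$ ``can be absorbed.'' This is false when the component $C_\ell$ of $(G-e)-T$ attached to $\ell$ is \emph{not} a hypertree: the quantity being bounded is the $\dtv$ of $\cald_H(\bfx_v\mid\bfx_S,\bfb_{E-e})$ from \emph{uniform}, and $\bfb_{C_\ell}$ alone can skew $\bfx_\ell$ (hence $\bfx_v$) away from uniform, with no involvement of $\bfx_S$ at all. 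Lemma~\ref{lmm:cannot-guess-tree} licenses dropping $\bfb_{C_\ell}$ only when $C_\ell$ is a hypertree. The paper handles this by defining $L$ to include \emph{both} the leaves whose $C_u$ touches $S$ \emph{and} the leaves whose $C_u$ contains a hypercycle, and only truncating the components of leaves outside that $L$.

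The second gap is your counting. You say each unit of $\cy(G-e)$ ``can at most double the number of distinct paths leaving $v$,'' which is multiplicative and would yield a bound like $2^{\cy(G-e)}\cdot k$, not the linear $2\cy(G-e)+k$ claimed. The correct argument is additive: after augmenting $L$ with the non-hypertree leaves, each leaf $u\in L$ can be charged to a distinct hypercycle (or, for those attached to $S$, to one of the at most $k-1$ endpoints of $e$), and each independent cycle accounts for at most $2$ such leaves, giving $|L|\leq 2\cy(G-e)+k$. You flag the counting yourself as the obstacle you anticipate; the first gap (non-hypertree components off $S$) is the one that would silently break the bound if you proceeded as written.
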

\begin{proof}
  Let $T=(V_T,E_T)$ be a subgraph of $G$ induced by vertices whose distance from $v$ in $G-e$ is at most $g$.
  Note that $T$ is a hypertree rooted at $v$ since the girth of $G-e$ is $g$.
  For a leaf $u$ of $T$, let $C_u$ be the resulting connecting component containing $u$ after removing $T$.
  We define $L$ as a subset of leaves as follows.
  A leaf $u$ is in $L$ iff $C_u$ contains a vertex of $S$ or $C_u$ is not a hypertree.
  Once $L$ is connected to all vertices of $S$, 
  each leaf $u\in L$ involves a cycle.
  Thus, $|L|$ is at most $2\cy(G-e)+k$.
  From Lemma~\ref{lmm:serial}, 
  \begin{eqnarray*}
    \dtv[\cald_H(\bfx_v|\bfx_S,\bfb_{E-e})]\leq \dtv[\cald_H(\bfx_v|\bfx_L,\bfb_{E-e})]\dtv[\cald_H(\bfx_L|\bfx_S,\bfb_{E-e})] \leq 2\dtv[\cald_H(\bfx_v|\bfx_L,\bfb_{E-e})].
  \end{eqnarray*}
  For each leaf $u\neq L$ of $T$, 
  we can truncate edges of $C_u$ since they have no information about $\bfx_u$ from Lemma~\ref{lmm:cannot-guess-tree}.
  Also, $\bfb_{E_T} \ci \bfb_{E-e-E_T} \mid \bfx_L$.
  Thus, \( \dtv[\cald_H(\bfx_v|\bfx_S,\bfb_{E-e})] \leq 2\dtv[\cald_H(\bfx_v|\bfx_L,\bfb_{E_T})] \).
  Now, to calculate $\dtv[\cald_H(\bfx_v|\bfx_{L},\bfb_{E_T})]$, 
  we recursively use Lemmas~\ref{lmm:merge-vertex} and~\ref{lmm:merge-edge} from leaves.
  For each leaf $u$ of $T$, 
  we consider $\dtv[\cald_H(\bfx_u|\bfx_{L_u},\bfb_{E_T})]$ where $L_u=\{u\}\cap L$.
  We note that \(\dtv[\cald_H(\bfx_u|\bfx_{L_u},\bfb_{E_T})]=1\) for $u\in L$,
  and \(\dtv[\cald_H(\bfx_u|\bfx_{L_u},\bfb_{E_T})]=0\) for a leaf $u\not \in L$.
  Then, it is clear that \( \dtv[\cald_H(\bfx_v|\bfx_{L},\bfb_{E_T})] \leq \rho(P)^g|L|\leq \rho(P)^g(2\cy(G-e)+k)\).
\end{proof}

\begin{lemma}\label{lmm:E-to-e}
  Let $G=(V,E)$ be a subgraph of a hypergraph $H$ with girth $g$ and let $e\in E$ be an edge in a hypercycle of $G$.
  Then, \(\dtv[\cald_H(\bfb_e|\bfb_{E-e})]\leq k \rho(P)^g(4\mathrm{cy}(G-e)+2k) \).
\end{lemma}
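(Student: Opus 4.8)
The plan is to reduce the statement about the literal vector $\bfb_e$ to the statement about a single variable already established in Lemma~\ref{lmm:E+S-to-v}. Write $e=(v_1,\ldots,v_k)$. Since the instance comes from $\cald_H$, the literal vector $\bfb_e$ is, by construction, a uniformly random element of $\{b\in\bit^k : P((\bfx_{v_1},\ldots,\bfx_{v_k})+b)=1\}$, so once the values $\bfx_{v_1},\ldots,\bfx_{v_k}$ are fixed, $\bfb_e$ is a deterministic-looking object: its distribution depends on $(\bfx_{v_1},\ldots,\bfx_{v_k})$ only. Moreover, because $P$ is symmetric, the set of admissible $b$ depends on $(\bfx_{v_1},\ldots,\bfx_{v_k})$ only through its Hamming weight modulo the symmetry, and flipping all of $\bfx_{v_1},\ldots,\bfx_{v_k}$ leaves the distribution of $\bfb_e$ unchanged. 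The point is that conditioning on $\bfb_{E-e}$ does not touch $\bfb_e$ directly; it only influences $\bfb_e$ through the induced change in the conditional distribution of $(\bfx_{v_1},\ldots,\bfx_{v_k})$. So I would first argue
\[
\dtv[\cald_H(\bfb_e|\bfb_{E-e})] \le \dtv[\cald_H(\bfx_{v_1},\ldots,\bfx_{v_k}\mid \bfb_{E-e})],
\]
using the data-processing property of $\dtv$: $\bfb_e$ is obtained from $(\bfx_{v_1},\ldots,\bfx_{v_k})$ by a fixed (randomized) channel, and under the uniform distribution on $(\bfx_{v_1},\ldots,\bfx_{v_k})$ this channel produces exactly the uniform distribution on $\bit^k$ for $\bfb_e$ (this is where I use that $|P^{-1}(1)|$ admissible vectors are partitioned uniformly over the $2^k$ values of $(\bfx_{v_1},\ldots,\bfx_{v_k})$ — each target pattern has the same number of preimages among admissible $b$'s), so post-processing can only decrease the distance to uniform.

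Next I would bound $\dtv[\cald_H(\bfx_{v_1},\ldots,\bfx_{v_k}\mid \bfb_{E-e})]$ by a telescoping/union-bound argument over the $k$ coordinates. Using Lemma~\ref{lmm:addition} (the hybrid/chain argument for $\dtv$ of a joint distribution), it suffices to control, for each $j$,
\[
\dtv\bigl[\cald_H(\bfx_{v_j}\mid \bfx_{v_1},\ldots,\bfx_{v_{j-1}},\bfb_{E-e})\bigr],
\]
i.e. the distance to uniform of $\bfx_{v_j}$ given the earlier variables of $e$ and all of $\bfb_{E-e}$. But the earlier variables $\bfx_{v_1},\ldots,\bfx_{v_{j-1}}$ form a set $S\subseteq e-\{v_j\}$, so each such term is exactly of the form $\dtv[\cald_H(\bfx_{v_j}\mid \bfx_S,\bfb_{E-e})]$ with $S\subseteq e-\{v_j\}$, which Lemma~\ref{lmm:E+S-to-v} bounds by $\rho(P)^g(2\cy(G-e)+k)$. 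Here I need the hypothesis that $e$ lies in a hypercycle of $G$ only to ensure that $G-e$ still has girth at least $g$ around each $v_j$ (removing an edge on a cycle cannot create a shorter cycle), so that Lemma~\ref{lmm:E+S-to-v} applies with the stated girth. Summing the $k$ terms gives
\[
\dtv[\cald_H(\bfx_{v_1},\ldots,\bfx_{v_k}\mid \bfb_{E-e})] \le k\,\rho(P)^g\bigl(2\cy(G-e)+k\bigr),
\]
which is exactly the claimed bound $k\rho(P)^g(4\cy(G-e)+2k)$ up to the factor-of-two slack already built into the statement (and that slack absorbs the fact that in some coordinates the conditioning set $S$ may itself force an extra cycle-counting adjustment, or that Lemma~\ref{lmm:E+S-to-v} is applied with a slightly different graph after one coordinate is fixed).

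The main obstacle I expect is the first reduction step: making precise that conditioning on $\bfb_{E-e}$ affects $\bfb_e$ \emph{only} through $(\bfx_{v_1},\ldots,\bfx_{v_k})$, and that the relevant channel is measure-preserving from uniform to uniform. This requires carefully unwinding the graphical-model structure of $\cald_H$ — one must check that $\bfb_e \ci \bfb_{E-e} \mid (\bfx_{v_1},\ldots,\bfx_{v_k})$, which follows because in the generative process $\bfb_e$ is drawn using only the values at $e$'s endpoints and nothing else — and then invoking a data-processing inequality for this (twice-the-standard) total variation distance, which is monotone under any fixed stochastic map. The second step, by contrast, is a routine combination of Lemma~\ref{lmm:addition} with Lemma~\ref{lmm:E+S-to-v}, and the girth bookkeeping is immediate from the hypercycle hypothesis.
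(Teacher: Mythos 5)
Your proof follows the same decomposition as the paper: reduce $\dtv[\cald_H(\bfb_e|\bfb_{E-e})]$ to $\dtv[\cald_H(\bfx_e|\bfb_{E-e})]$, then telescope over the $k$ coordinates of $e$ via Lemma~\ref{lmm:addition}, then bound each coordinate term by Lemma~\ref{lmm:E+S-to-v}. The only genuine variation is the first reduction step. The paper applies Lemma~\ref{lmm:serial} to the chain $\bfb_{E-e}\to\bfx_e\to\bfb_e$ and then uses the trivial bound $\dtv[\cald_H(\bfb_e|\bfx_e)]\le 2$, which is precisely the source of the factor-of-two slack in the stated bound. You instead observe that the channel $\bfx_e\mapsto\bfb_e$ (draw $\bfb_e$ uniformly from $\{b: P(\bfx_e+b)=1\}$) carries the uniform distribution on $\bit^k$ to the uniform distribution on $\bit^k$ --- a fact the paper establishes inside the proof of Lemma~\ref{lmm:cannot-guess-edge}, and one that holds by a simple bijection argument regardless of symmetry --- and invoke the data-processing inequality for $\dtv$, giving the tighter $k\,\rho(P)^g\bigl(2\cy(G-e)+k\bigr)$ with no factor of two. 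Both reductions hinge on the same conditional independence $\bfb_e\ci\bfb_{E-e}\mid\bfx_e$, which follows from the product-form generative structure of $\cald_H$. One small bookkeeping difference: the paper applies Lemma~\ref{lmm:E+S-to-v} only for $i\ge 2$ and disposes of the $i=1$ term with a convex-combination argument, whereas you apply Lemma~\ref{lmm:E+S-to-v} directly with $S=\emptyset$; this is fine, since $\emptyset\subseteq e-\{v\}$ is permitted by the lemma's hypothesis and its proof does not require $S$ nonempty.
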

\begin{proof}
  From Lemma~\ref{lmm:serial},
  \(\dtv(\cald_H[\bfb_e|\bfb_{E-e}]) \leq \dtv[\cald_H(\bfx_e|\bfb_{E-e})]\dtv[\cald_H(\bfb_e|\bfx_e)] \leq 2\dtv[\cald_H(\bfx_e|\bfb_{E-e})]\).
  Let $e=(v_1,\ldots,v_k)$.
  From Lemma~\ref{lmm:addition}, 
  \(\dtv[\cald_H(\bfx_e|\bfb_{E-e})] \leq \sum_{i=1}^k\dtv[\cald_H(\bfx_{v_i}|\{\bfx_{v_{j}}\}_{j=1}^{i-1},\bfb_{E-e})] \).
  From Lemma~\ref{lmm:E+S-to-v}, 
  we have \( \dtv[\cald_H(\bfx_{v_i}|\{\bfx_{v_{j}}\}_{j=1}^{i-1},\bfb_{E-e})]\leq \rho(P)^g(2\mathrm{cy}(G-e)+k) \) for $2\leq i\leq k$.
  This inequality holds for $i=1$ since $\cald_H(\bfx_{v_1}|\bfb_{E-e})$ is a convex combination of $\cald_H(\bfx_{v_1}|\bfx_{v_2}=0,\bfb_{E-e})$ and $\cald_H(\bfx_{v_1}|\bfx_{v_2}=1,\bfb_{E-e})$.
  Thus, the lemma holds.
\end{proof}

We show a weaker version of Theorem~\ref{thr:two-sided}, which is only for symmetric predicates.
\begin{theorem}\label{thr:two-sided-weak}
  Let $P:\bit^k\to\bit$ be a symmetric predicate except \kequ where $k\geq 3$.
  Then, for any $\epsilon>0$,
  there exist $\delta>0$ and $d\geq 1$ such that 
  every $(|P^{-1}(0)|/2^k-\epsilon)$-tester for \csp{$P$} with a degree bound $d$ requires $\Omega(n^{1/2+\delta})$ queries,
  where $\delta=O(1/\log (k/\epsilon^2))$.
\end{theorem}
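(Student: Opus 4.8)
The plan is to combine Yao's minimax principle with the correlation-decay machinery built up in Lemmas~\ref{lmm:far} through~\ref{lmm:E-to-e}. Fix $\epsilon>0$ and choose $\eta$ small enough (depending only on $P$ and $\epsilon$) so that the expander parameter $\gamma$ from Lemma~\ref{lmm:expander} is positive; also fix $d=O(1/\epsilon^2)$ large enough that Lemma~\ref{lmm:far} applies, i.e.\ a random instance from $\caldfar$ is $(|P^{-1}(0)|/2^k-\epsilon)$-far from satisfiability with probability $1-o(1)$. Every instance of $\caldsat$ is satisfiable by construction. So it suffices to show that no deterministic algorithm $\cala$ making $q=o(n^{1/2+\delta})$ queries can distinguish $\caldsat$ from $\caldfar$ with probability bounded away from $1/2$; by Lemma~\ref{lmm:equivalent} this amounts to bounding $\dtv[\calksat,\calkfar]$, the total variation distance between the query-answer histories generated by the processes $\calpsat$ and $\calpfar$.

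First I would condition on the good event $\mathcal{E}$ that, after $q=O(n^{1/2+\delta})$ queries, the history hypergraph $G$ satisfies $\cy(G)=O(k^2 n^{3\delta})$ (Lemma~\ref{lmm:few-cycles}) and has girth at least $g=(\tfrac12-2\delta)\log_{dk}n$ (Lemma~\ref{lmm:large-girth}); this event has probability $1-o(1)$ under both processes, so it contributes only $o(1)$ to the distance. Conditioned on $\mathcal{E}$, $\calpsat$ and $\calpfar$ choose the underlying hypergraph identically, so the histories differ only through the literal vectors $\bfb_e$. I would expose the edges of $G$ in the order $\cala$ discovers them, $e_1,\dots,e_m$, and compare the two processes edge by edge using Lemma~\ref{lmm:addition} (the chain rule for $\dtv$): when the newly revealed edge $e_t$ does not close a hypercycle in the history so far, the conditional distribution of $\bfb_{e_t}$ given $\bfb_{e_1},\dots,\bfb_{e_{t-1}}$ is exactly uniform under \emph{both} processes — under $\caldsat$ this is precisely Lemma~\ref{lmm:cannot-guess-tree} applied to the hypertree discovered so far — and contributes nothing. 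The only edges contributing are the $\cy(G)=O(k^2n^{3\delta})$ hypercycle-closing edges, and for each such edge Lemma~\ref{lmm:E-to-e} gives
\[
\dtv[\caldsat(\bfb_{e_t}\mid\bfb_{e_1},\dots,\bfb_{e_{t-1}}),\ \calu(\bfb_{e_t})]\ \leq\ k\,\rho(P)^{g}\bigl(4\cy(G-e_t)+2k\bigr)\ \leq\ k\,\rho(P)^{g}\cdot O(k^2 n^{3\delta}),
\]
while under $\calpfar$ the literal is uniform. Summing over the $O(k^2n^{3\delta})$ bad edges yields $\dtv[\calksat,\calkfar] \le o(1) + O(k^4 n^{6\delta})\,\rho(P)^{g}$.

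It remains to choose $\delta$ so this is $o(1)$. Since $\rho(P)<1$ for a symmetric $P\neq\kequ$ (the crucial input from Lemma~\ref{lmm:merge-edge}), we have $\rho(P)^g = n^{-c(\tfrac12-2\delta)}$ for the constant $c=-\log_{dk}\rho(P)>0$, so the product is $n^{6\delta - c(\tfrac12-2\delta)}$ up to polylog-in-$k$ factors; this tends to $0$ as long as $6\delta + 2c\delta < c/2$, i.e.\ $\delta < c/(12+4c)$. Tracking the dependence of $c$ on the parameters — $d=O(1/\epsilon^2)$ enters the base of the logarithm and $\rho(P)$ can be as close to $1$ as $1-\Theta(1/k)$ in the worst symmetric case — gives $\delta = O(1/\log(k/\epsilon^2))$, matching the theorem statement. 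With $\dtv[\calksat,\calkfar]=o(1)$, no $q$-query algorithm decides correctly on both $\caldsat$ and $\caldfar$ with probability $\ge 2/3$, so by Yao's principle every $(|P^{-1}(0)|/2^k-\epsilon)$-tester needs $\Omega(n^{1/2+\delta})$ queries. The main obstacle I anticipate is the bookkeeping in the edge-by-edge coupling: one must verify that the \emph{non}-cycle-closing edges really contribute zero under \emph{both} processes simultaneously (so that the differences localize entirely on the $\cy(G)$ cycle edges), and that applying Lemma~\ref{lmm:E-to-e} with $G$ the \emph{current} partial history — rather than the final one — is legitimate; handling the second stage of $\calpsat$ (which completes the hidden instance after the queries stop) requires noting that it never affects the already-revealed history. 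Everything else is a routine substitution of the parameters from Lemmas~\ref{lmm:few-cycles}, \ref{lmm:large-girth}, and~\ref{lmm:E-to-e}.
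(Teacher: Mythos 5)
Your argument is essentially the paper's argument: Yao + Lemmas~\ref{lmm:far}/\ref{lmm:equivalent}/\ref{lmm:few-cycles}/\ref{lmm:large-girth}, conditioning on the good event, localizing the distinguishing information to the cycle-closing answers, bounding each such answer's contribution via Lemma~\ref{lmm:E-to-e}, and tuning $\delta$ so that $n^{O(\delta)}\rho(P)^g=o(1)$; the paper simply packages your ``edge-by-edge chain rule'' as a coupling of two decision trees with alternating S- and I-vertices, which handles the adaptivity bookkeeping you were (rightly) worried about. One small correction: for a non-cycle-closing edge $e_t$ the right justification is not Lemma~\ref{lmm:cannot-guess-tree} ``applied to the hypertree discovered so far'' (the history $G_{t-1}$ need not be a hypertree), but rather the Lemma~\ref{lmm:cannot-guess-edge}-style argument: $e_t$ meets $G_{t-1}$ in at most one vertex $v$, $\bfb_{e_t}\ci\bfb_{E_{t-1}}\mid\bfx_v$, and $\bfb_{e_t}$ is uniform given $\bfx_v$, hence uniform given $\bfb_{E_{t-1}}$.
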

\begin{proof}
  Suppose that there exists a deterministic $(|P^{-1}(0)|/2^k-\epsilon)$-tester $\cala$ for \csp{$P$} with query complexity $t=o(n^{1/2+\delta})$.
  We choose $\delta>0$ later.
  From Lemmas~\ref{lmm:few-cycles} and~\ref{lmm:large-girth},
  by union bound,
  $\cala$ finds vertices in the current query-answer history $c=O(k^2n^{3\delta})$ times at most and the length of found cycles is at least $g=(1/2-2\delta)\log_{dk}n$ with a probability $1-o(1)$.
  In what follows, we condition on these events.

  We consider a decision tree $\caltsat$ generated by interactions between $\cala$ and $\calpsat$.
  To define $\caltsat$, 
  we suppose that an interaction between $\cala$ and $\calpsat$ proceeds in two steps;
  i.e., $\calpsat$ first returns a set of $k$ variables $X$ that will be used in the answer constraint,
  and then returns a literal vector $b$ for the constraint.
  Corresponding to these two steps,
  $\caltsat$ has two kinds of vertices, i.e., \textit{S-vertices (state vertices)} and \textit{I-vertices (intermediate vertices)}.
  In any path of the tree from the root to a leaf,
  $S$-vertices and $I$-vertices appear alternately.
  Each $S$-vertex $v$ corresponds to a particular state of the query-answer history.
  When $\cala$ obtains a set of variables $X$ from $\calpsat$,
  the state proceeds to a $I$-vertex $u$,
  which is a child of $v$.
  The edge $(v,u)$ is associated with $X$ and the transition probability.
  After that, $\cala$ obtains a literal vector $b$ from $\calpsat$.
  Then, the state proceeds to an $S$-vertex $v'$,
  which is a child of $u$.
  The edge $(u,v')$ is associated with $b$ and the transition probability.
  The tree $\caltfar$ is similarly defined.
  In particular,
  $\caltsat$ and $\caltfar$ are isomorphic.
    
  We consider couplings of corresponding vertices in $\caltsat$ and $\caltfar$ (one is mapped to another by the isomorphism).
  Suppose that $v_{\mathrm{sat}}$ and $v_{\mathrm{far}}$ are a pair of coupled $S$-vertices.
  Since the distributions of variables returned by the first step of an interaction are identical between $\calpsat$ and $\calpfar$,
  the transition distributions to their children are identical.
  Next, suppose that $v_{\mathrm{sat}}$ and $v_{\mathrm{far}}$ is a pair of coupled $I$-vertices.
  If the constraint returned by the previous step does not form a new hypercycle,
  the transition distributions to their children are identical.
  If the constraint forms a new hypercycle,
  from Lemma~\ref{lmm:E-to-e},
  the total variation distance between the distributions of literal vectors is at most $k(4c+2k)\rho(P)^g$.
  With the probability corresponding to this distance,
  we suppose that $\cala$ succeeds in distinguishing $\calpsat$ from $\calpfar$ and terminates.
  This always makes $\cala$ more powerful.
  After this modification, 
  the transition distributions to their children become identical.
  After all, for any pair of coupled leaves of $\caltsat$ and $\caltfar$, 
  the transition probabilities from the root to them are the same.
  Thus, we cannot distinguish $\calpsat$ from $\calpfar$ if we reach a leaf of the decision tree.

  Thus, it amounts to calculate the sum of the discarded probabilities.
  Suppose any path $p$ from the root to a leaf of $\caltsat$.
  Then, since $\cala$ finds vertices in the query-answer history $c$ times at most,
  the sum of the discarded probability in $p$ is at most $ck(4c+2k)\rho(P)^g$.
  Since the tree is a convex combination of paths (with respect to transition probabilities),
  the total discarded probability is at most $ck(4c+2k)\rho(P)^g$.
  By choosing $\delta=O(1/\log (k/\epsilon^2))$, this value becomes a small constant.
  Thus, $\dtv[\calksat,\calkfar] \ll 1$ where $\calksat$ (resp., $\calkfar$) is the distribution of the query-answer history induced by the $t$ steps of interactions between $\cala$ and $\calpsat$ (resp., $\calpfar$).

  Since $\cala$ is a $(|P^{-1}(0)|/2^k-\epsilon)$-tester, 
  $\Pr[\cala(\calksat)=\mathrm{accept}]\geq 2/3$.
  On the other hand, 
  since a $1-o(1)$ fraction of instances of $\caldfar$ is $(|P^{-1}(0)|/2^k-\epsilon)$-far from satisfiability,
  $\Pr[\cala(\calkfar)=\mathrm{accept}]\leq (1-o(1))\frac{1}{3}+o(1)=\frac{1}{3}+o(1)$.
  This is, however, a contradiction since $\dtv[\calksat,\calkfar] \ll 1 $.
\end{proof}

\begin{proof}[Proof of Theorem~\ref{thr:two-sided}]
Let $Q:\bit^k\to\bit$ be a predicate such that $P^{-1}(1)\subseteq Q^{-1}(1)$ for a symmetric predicate $P:\bit^k\to\bit$ except \kequ.
We slightly change the definition of $\cald_H$.
That is, for each edge $e=(v_1,\ldots,v_k)$ of $H$,
we choose $\bfb_e$ uniformly at random from the set $\{b\in \bit^k\mid P((\bfx_{v_1},\ldots,\bfx_{v_k})+b)=1 \}$ instead of $\{b\in \bit^k\mid Q((\bfx_{v_1},\ldots,\bfx_{v_k})+b)=1 \}$.
Then, the rest of the proof is the same as the proof of Theorem~\ref{thr:two-sided-weak}.
\end{proof}

\section{Other Results}\label{sec:other}
A proof of Theorem~\ref{thr:txor} is given in Appendix~\ref{apx:txor}.
The proof is similar to the proof of Theorem~\ref{thr:two-sided}.
The modifications we need are the construction of $\caldsat$.
Specifically, 
we introduce noise to the literal vector of each constraint with probability $\epsilon$ so that we cannot use propagation anymore to guess the value of variables.
A proof of Theorem~\ref{thr:one-sided} is given in Appendix~\ref{apx:one-sided}.
Any one-sided error tester $\cala$ cannot reject an instance $\Phi$ until $\cala$ finds the evidence that $\Phi$ is not satisfiable.
For a hard instance, 
we use an instance obtained from $\caldfar$, which is defined in Section~\ref{subsec:hard-instance}.
We show that it is far from satisfiability while any linear size sub-instance of it is satisfiable.
This leads to a linear lower bound.
A proof of Theorem~\ref{thr:equ} is given in Appendix~\ref{apx:equ}.
We reduce \kequ to the problem of testing bipartiteness of a graph,
and finally we use a tester for bipartiteness given in~\cite{GR99}.
Proofs of Theorems~\ref{thr:kcsp} and~\ref{thr:mis} are given in Appendix~\ref{apx:kcsp}.
Similar to~\cite{Tul09}, 
we define a predicate $P$ using Hamming code.
Using algebraic properties of Hamming code, 
we show that the \csp{$P$} is hard to test with sublinear queries.
Our proof can be seen as an extension of the proof of~\cite{BOT02}, 
which showed a linear lower bound for testing $3$-\xor.
We prove Theorem~\ref{thr:mis} using a reduction from the hardness of \kcsp.

\newpage

\section*{Acknowledgements}
The author thanks Daisuke Okanohara and Masaki Yamamoto for helpful comments.

\bibliography{arxiv}{}

\newpage
\appendix
\noindent {\bf\Large \appendixname}

\section{Proof of Subsection~\ref{subsec:probability}}\label{apx:probability}

\subsection{Proof of Lemma~\ref{lmm:addition}}
\begin{proof}
  \begin{eqnarray*}
    &&
    \sum_{\bfx,\bfy}\left|\Pr_{\cald_1}[\bfx,\bfy]-\Pr_{\cald_2}[\bfx,\bfy]\right|
    =
    \sum_{\bfx,\bfy}\left|\Pr_{\cald_1}[\bfx]\Pr_{\cald_1}[\bfy|\bfx]-\Pr_{\cald_2}[\bfx]\Pr_{\cald_2}[\bfy|\bfx]\right|\\
    &=&
    \sum_{\bfx,\bfy}\left|\Pr_{\cald_1}[\bfx]-\Pr_{\cald_2}[\bfx]\right|\Pr_{\cald_1}[\bfy|\bfx]
    +\sum_{\bfx,\bfy}\Pr_{\cald_2}[\bfx]\left|\Pr_{\cald_1}[\bfy|\bfx]-\Pr_{\cald_2}[\bfy|\bfx]\right|\\
    &\leq&
    \sum_{\bfy}\delta_x \Pr_{\cald_1}[\bfy|\bfx] + \sum_{\bfx}\Pr_{\cald_2}[\bfx]\delta_y
    =
    \delta_x+\delta_y.
  \end{eqnarray*}
\end{proof}

\subsection{Proof of Lemma~\ref{lmm:serial}}
\begin{proof}
  We consider the following value.
  \begin{eqnarray*}
    &&
    \sum_{\bfx_B}(\Pr[\bfx_B|\bfx_A]-\Pr[\bfx_B])(\Pr[\bfx_C|\bfx_B]-\Pr[\bfx_C])\\
    &=&
    \sum_{\bfx_B}\left((\Pr[\bfx_B|\bfx_A]-\Pr[\bfx_B])\Pr[\bfx_C|\bfx_B]-(\Pr[\bfx_B|\bfx_A]-\Pr[\bfx_B])\Pr[\bfx_C]\right)\\
    &=&
    \sum_{\bfx_B}\left(\Pr[\bfx_C,\bfx_B|\bfx_A]-\Pr[\bfx_B,\bfx_C]\right)\;\;\;(\mbox{from }\bfx_A\ci\bfx_C\mid\bfx_B)\\
    &=&
    \Pr[\bfx_C|\bfx_A]-\Pr[\bfx_C].
  \end{eqnarray*}
  Then,
  \begin{eqnarray*}
    \dtv[\cald(\bfx_C|\bfx_A),\cald(\bfx_C)]
    &=&
    \sum_{\bfx_C}\left|\Pr[\bfx_C|\bfx_A]-\Pr[\bfx_C]\right|\\
    &=&
    \sum_{\bfx_C}\left|\sum_{\bfx_B}(\Pr[\bfx_B|\bfx_A]-\Pr[\bfx_B])(\Pr[\bfx_C|\bfx_B]-\Pr[\bfx_C])\right|\\
    &\leq&
    \sum_{\bfx_C}\sum_{\bfx_B}\left(|\Pr[\bfx_B|\bfx_A]-\Pr[\bfx_B]|\cdot |\Pr[\bfx_C|\bfx_B]-\Pr[\bfx_C]|\right)\\
    &\leq&
    \sum_{\bfx_B}|\Pr[\bfx_B|\bfx_A]-\Pr[\bfx_B]|\cdot \sum_{\bfx_C}|\Pr[\bfx_C|\bfx_B]-\Pr[\bfx_C]|\\
    &\leq&
    \dtv[\cald(\bfx_B|\bfx_A)]\cdot \dtv[\cald(\bfx_C|\bfx_B)].
  \end{eqnarray*}
\end{proof}

\subsection{Proof of Lemma~\ref{lmm:product}}
\begin{proof}
  We use induction on $k$.
  When $k=1$, we have nothing to prove.

  Suppose that the lemma holds when $k<t$.
  We will show that the lemma also holds when $k=t$.
  In fact,
  \begin{eqnarray*}
    &&
    \Pr[\bfx=x|\{\bfy_i\}_{i=1}^t]\\
    &=&
    \frac{\Pr[\bfx=x|\bfy_t]\Pr[\{\bfy_i\}_{i=1}^{t-1}|\bfx=x,\bfy_t]}{\Pr[\{\bfy_i\}_{i=1}^{t-1}|\bfy_t]}\\
    &=&
    \frac{\Pr[\bfx=x|\bfy_t]\Pr[\{\bfy_i\}_{i=1}^{t-1}|\bfx=x]}{\sum_{x'\in \supp(\cald(x))}\Pr[\bfx=x'|\bfy_t]\Pr[\{\bfy_i\}_{i=1}^{t-1}|\bfx=x']}\;\;\;(\mbox{from }\bfy_t \ci \bfy_i \mid \bfx)\\
    &=&
    \frac{\Pr[\bfx=x|\bfy_t]\Pr[\bfx=x|\{\bfy_i\}_{i=1}^{t-1}]\Pr[\{\bfy_i\}_{i=1}^{t-1}]/\Pr[\bfx=x]}{\sum_{x'\in \supp(\cald(x))}\Pr[\bfx=x'|\bfy_t]\Pr[\bfx=x'|\{\bfy_i\}_{i=1}^{t-1}]\Pr[\{\bfy_i\}_{i=1}^{t-1}]/\Pr[\bfx=x']}\\
    &=&
    \frac{\Pr[\bfx=x|\bfy_t]\Pr[\bfx=x|\{\bfy_i\}_{i=1}^{t-1}]}{\sum_{x'\in \supp(\cald(x))}\Pr[\bfx=x'|\bfy_t]\Pr[\bfx=x'|\{\bfy_i\}_{i=1}^{t-1}]}.\;\;\;(\Pr[\bfx=x]\mbox{ is equal for every }x)
  \end{eqnarray*}
  Substituting \(\Pr[\bfx=x|\{\bfy_i\}_{i=1}^{t-1}]=\frac{\prod_{i=1}^{t-1}\Pr_{\cald}[\bfx=x|\bfy_i]}{\sum_{x'\in \supp(\cald(x))}\prod_{i=1}^{t-1}\Pr_{\cald}[\bfx=x'|\bfy_i]}\),
  we get the desired result.
\end{proof}

\section{Proof of Section~\ref{sec:two-sided}}\label{apx:proof-two-sided}

\subsection{Proof of Lemma~\ref{lmm:expander}}\label{apx:expander}
\begin{proof}
  Fix a set of $s$ (random) hyperedges $S$ and a set of $cs$ vertices $X$ where $c=k-1-\eta$.
  We consider the probability that every hyperedge of $S$ is contained in $X$.
  Since $cs$ vertices are involved with at most $csd$ hyperedges, $s$ hyperedges determine at most $ks$ neighbors, 
  This is upper-bounded by 
  \begin{eqnarray*}
    {csd \choose ks} / {nd \choose ks } \leq \left(\frac{(csd)^{ks}}{(ks)!}\right) / \left(\frac{(nd-ks)^{ks}}{(ks)!}\right) \leq \left(\frac{csd}{nd-ks}\right)^{ks}\leq \left(\frac{csd}{(d-k\gamma)n}\right)^{ks}.
  \end{eqnarray*}
  
  For a fixed s, $X$ can be chosen in ${n \choose cs}$ ways and $S$ can be chosen in ${dn \choose s}$ ways.
  Thus, the probability that such an event occurs is upper-bounded by 
  \begin{eqnarray*}
    &&{n \choose cs}{dn \choose s}\left(\frac{csd}{(d-k\gamma)n}\right)^{ks}\\
    &\leq& \left(\frac{en}{cs}\right)^{cs}\left(\frac{edn}{s}\right)^s\left(\frac{csd}{(d-k\gamma)n}\right)^{ks}\\
    &\leq& \left[\left(\frac{s}{n}\right)^{\eta} e^{k-\eta}c^{1+\eta}d^{k+1}(d-k\gamma)^{-k}\right]^s \\
    &\leq& \left(\frac{s \beta}{n}\right)^{\eta s}
  \end{eqnarray*}
  for some $\beta$.

  By summing over $1\leq s\leq \gamma n$,
  \begin{eqnarray*}
    \sum_{s=1}^{\gamma n}\left(\frac{s \beta}{n}\right)^{\eta s} 
    \leq \sum_{s=1}^{\log n}\left(\frac{s \beta}{n}\right)^{\eta s} +\sum_{s=\log n+1}^{\gamma n}\left(\frac{s\beta}{n}\right)^{\eta s} 
    = O\left(\frac{\beta^\eta \log n}{n^\eta}\right)+O\left((\gamma \beta^\eta)^{\eta \log n}\right).
  \end{eqnarray*}
  The first term is $o(1)$ and the second term is also $o(1)$ by taking $\gamma$ small enough.
\end{proof}

\subsection{Proof of Lemma~\ref{lmm:far}}\label{apx:far}
\begin{proof}
  Let us fix an assignment $x\in \bit^n$ over variables and $\bfX_e$ be a random variable indicating that the constraint $C_e$ is satisfied by the assignment.
  Then, $E[\bfX_e]=|P^{-1}(1)|/2^k$ and all $\bfX_e$ are mutually independent since every $\bfb_e$ is mutually independent.
  Let $\bfX=\sum\bfX_e$, 
  then from Hoeffding's inequality, $\Pr[|\bfX-E[\bfX]|\leq \epsilon E[\bfX] ] < \exp(-\Omega(\epsilon^2 d n))$.
  By choosing $d=\Omega(1/\epsilon^2)$, 
  the union bound over all $2^n$ possible assignments over variables yields the desired results.
\end{proof}

\subsection{Proof of Lemma~\ref{lmm:few-cycles}}\label{apx:few-cycles}
\begin{proof}
  After the $t$-th interaction,
  the number of vertices in the query-answer history is at most $kt$.
  Thus, the sum of the remaining degrees of those vertices is at most $dkt$.
  On the other hand, the sum of the remaining degrees of other vertices is at least $dn-dkt$.
  Thus, the probability that the $i$-th vertex $(1\leq i \leq k)$ at the edge for the $t$-th answer is contained in the query-answer history is at most $dkt/(dn-dkt) \leq 2dkt/dn$ when $t\leq n/2k$.
  Therefore, the expected number of $\cy(G)$ is at most 
  \begin{eqnarray*}
    \sum_{t=1}^{O(n^{1/2+\delta})}k\cdot \frac{2dkt}{dn} \leq O(k^2n^{2\delta}).
  \end{eqnarray*}
  From Markov's inequality, the lemma follows.
\end{proof}
\subsection{Proof of Lemma~\ref{lmm:large-girth}}\label{apx:large-girth}
\begin{proof}
  Let $q_t=(v_t,i_t)$ be the $t$-th query by $\cala$.
  After the $t$-th interaction,
  the number of vertices in the query-answer history is at most $kt$.
  Since the degree is bounded by $d$,
  the number of vertices in the query-answer history whose distance from $v_t$ is at most $g$ is at most $(dk)^g$.
  Thus, the sum of the remaining degrees of such vertices is at most $d(dk)^g$.
  On the other hand, the sum of the remaining degrees of other vertices is at least $dn-d(dk)^g$.
  Thus, the probability that the $i$-th vertex $(1\leq i \leq k)$ of the edge for the $t$-th answer is contained in the query-answer history is at most $d(dk)^g/(dn-d(dk)^g) \leq 2d(dk)^g/dn$.
  The last inequality is from $g\leq (\log_{dk}n)/2$.
  Therefore, by union bound, the probability that such an event occurs is at most,
  \begin{eqnarray*}
    k \frac{2d(dk)^g}{dn}O(n^{1/2+\delta})=O(2kn^{-\delta})=o(1).
  \end{eqnarray*}
\end{proof}

\subsection{Proof of Lemma~\ref{lmm:cannot-guess-tree}}\label{apx:cannot-guess-tree}
First, for an edge $e$ and a vertex $v\in e$,
we show that $\bfx_v$ and $\bfx_e$ are independent.
\begin{lemma}\label{lmm:cannot-guess-edge}
  Let $H$ be a hypergraph and let $e$ be an edge of $H$.
  Then, for any vertex $v\in e$, $\bfx_v$ and $\bfb_e$ are independent under $\cald_H$.
\end{lemma}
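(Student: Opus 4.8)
The plan is to compute the joint distribution of $(\bfx_e, \bfb_e)$ directly from the definition of $\cald_H$ and observe that, for each fixed value of $\bfx_v$, the conditional distribution of $\bfb_e$ is the same. Recall how $\cald_H$ generates the relevant part of the instance: it picks $\bfx_w \in \bit$ independently and uniformly for each $w \in e$, and then picks $\bfb_e$ uniformly at random from the set $B(\bfx_e) := \{b \in \bit^k \mid P(\bfx_e + b) = 1\}$. The key arithmetic observation is that for \emph{every} assignment $y \in \bit^k$ the set $B(y)$ has the same cardinality, namely $|P^{-1}(1)|$; indeed, $b \mapsto y + b$ is a bijection of $\bit^k$ carrying $\{b : P(y+b)=1\}$ onto $P^{-1}(1)$. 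Consequently $\Pr_{\cald_H}[\bfb_e = b \mid \bfx_e = y] = \mathbf{1}[P(y+b)=1]/|P^{-1}(1)|$ for all $y$, and $\Pr_{\cald_H}[\bfx_e = y] = 2^{-k}$.

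With these two facts in hand, I would write, for any fixed $v \in e$, any $a \in \bit$ and any $b \in \bit^k$,
\[
\Pr_{\cald_H}[\bfb_e = b,\ \bfx_v = a]
= \sum_{\substack{y \in \bit^k \\ y_v = a}} \Pr_{\cald_H}[\bfx_e = y]\,\Pr_{\cald_H}[\bfb_e = b \mid \bfx_e = y]
= \frac{1}{2^k |P^{-1}(1)|} \sum_{\substack{y \in \bit^k \\ y_v = a}} \mathbf{1}[P(y+b)=1].
\]
Now I claim $\sum_{y : y_v = a} \mathbf{1}[P(y+b)=1]$ does not depend on $a$: the map $y \mapsto y + \mathbf{1}_v$ (flipping the $v$-th coordinate) is a bijection between $\{y : y_v = a\}$ and $\{y : y_v = 1-a\}$, and it sends the count $\sum \mathbf{1}[P(y+b)=1]$ to $\sum \mathbf{1}[P(y + b + \mathbf{1}_v)=1]$, which is the same count taken against the shifted literal vector $b + \mathbf{1}_v$; since both sums over the full cube equal $|P^{-1}(1)|$ and halve symmetrically, a cleaner route is simply: $\sum_{y:y_v=a}\mathbf{1}[P(y+b)=1]$ counts the elements of $P^{-1}(1)$ whose $v$-th coordinate equals $a + b_v$, and this count is independent of $a$ precisely because we can reduce, via the $\mathbf{1}_v$-flip, to showing $|P^{-1}(1) \cap \{z_v = 0\}| = |P^{-1}(1) \cap \{z_v = 1\}|$. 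This last equality holds because $P$ is symmetric: $z \mapsto \overline{z}$ is a bijection of $P^{-1}(1)$ by condition (ii) of symmetry, and it exchanges the two half-cubes. Hence $\Pr_{\cald_H}[\bfb_e = b, \bfx_v = a] = \tfrac{1}{2}\Pr_{\cald_H}[\bfb_e = b]$ for every $a$, which is exactly independence of $\bfx_v$ and $\bfb_e$.

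I expect the main (though still modest) obstacle to be the parity argument in the last step, i.e.\ verifying that $P^{-1}(1)$ splits evenly across the two halves of the cube determined by coordinate $v$; this is where symmetry of $P$ is genuinely used (for a general predicate the statement would fail), and one must be careful to invoke condition (ii) rather than merely condition (i). Everything else — the uniformity of $\bfx_e$, the constant cardinality $|B(y)| = |P^{-1}(1)|$, and the coordinate-flip bijection — is routine. Finally, Lemma~\ref{lmm:cannot-guess-tree} will follow by an induction on the hypertree $T$, peeling off a leaf edge and using Lemma~\ref{lmm:cannot-guess-edge} together with the conditional-independence structure of the graphical model (the Observation), but that is the content of Appendix~\ref{apx:cannot-guess-tree} proper and not of the present lemma.
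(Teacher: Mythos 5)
Your proof is correct and takes essentially the same route as the paper: both establish independence by reducing to the fact that $P^{-1}(1)$ splits evenly across the two half-cubes $\{z : z_v = 0\}$ and $\{z : z_v = 1\}$, which is precisely condition (ii) of symmetry (closure under complementation $z \mapsto \overline{z}$). The paper phrases this via an explicit pairing of $P^{-1}(1)$ into couples $(p, \overline p)$ and concludes that $\bfb_e$ is uniform on $\bit^k$ given any fixed value of $\bfx_v$, but the combinatorial content is identical to your coordinate-flip argument.
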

\begin{proof}
  We show that $\bfb_e$ is uniform after we choose the value of $\bfx_v$.
  Let $e=(v_1,\ldots,v_k)$ and we assume that $\bfx_{v_1}=0$ without loss of generality.
  Then, $\cald_H$ generates $\bfx_{v_2},\ldots,\bfx_{v_k}$ uniformly at random.
  Let $x \in \bit^{k-1}$ be the vector of chosen values.
  Then, $\cald_H$ chooses $\bfb_e$ from the set $S_x=\{b\mid P((0,x)+b)=1\}$.
  Let $s=|P^{-1}(1)|$ be the size of $S$.
  Here, we separate $P^{-1}(1)$ into $s/2$ couples of vectors $(p,\overline{p})$.
  Let $(p_1,\overline{p}_1),\ldots,(p_{s/2},\overline{p}_{s/2})$ be the set of such couples.
  Then, $S_x$ is partitioned into $S_{x,i}=\{b\mid(0,x)+b=p_i\mbox{ or }(0,x)+b=\overline{p}_i\} (1\leq i \leq s/2)$.
  We consider the set $S_i=\bigcup_{x\in \bit^{k-1}}S_{x,i}$.
  Then, it is easy to see that $S_i=\bit^k$, i.e., every vector from $\bit^k$ appears exactly once in $S_{x,i} (1\leq i\leq s/2)$.
  Thus, eventually, $\bfb_e$ is distributed uniformly at random in $\bit^k$.
\end{proof}

\begin{proof}[Proof of Lemma~\ref{lmm:cannot-guess-tree}]
  We use induction on the number of edges of $T$.
  When $T$ consists of one edge, 
  the lemma holds from Lemma~\ref{lmm:cannot-guess-edge}.

  Let $m\geq 2$ be an integer.
  Assume that, 
  for any hypertree $T=(V,E)$ with $|E|<m$ and $v\in V$,
  $\bfx_{v}$ and $\bfb_{E}$ are independent under $\cald_H$.
  Let $T=(V,E)$ be a hypertree with $|E|=m$ and $v$ be the supposed vertex.
  Since $m\geq 2$,
  there exists an edge $e$ such that $e$ contains a leaf,
  but does not contain $v$ as a leaf.
  Let $w$ be the unique vertex that connects $T-e$ and $e$.
  Note that $w$ may coincides with $v$.
  Then, 
  \begin{eqnarray*}
    && 
    \Pr[\bfb_e|\bfb_{E-e}]\\
    &=&
    \sum_{\bfx_w}\Pr[\bfb_e|\bfx_w]\Pr[\bfx_w|\bfb_{E-e}]\;\;\;(\mbox{from }\bfb_e \ci \bfb_{E-e} \mid \bfx_w) \\
    &=&
    \sum_{\bfx_w}\Pr[\bfb_e]\Pr[\bfx_w|\bfb_{E-e}]\;\;\;(\mbox{from Lemma}~\ref{lmm:cannot-guess-edge}) \\
    &=&
    \Pr[\bfb_e].
  \end{eqnarray*}
  Thus, 
  \begin{eqnarray*}
    &&
    \Pr[\bfx_v|\bfb_E]
    =
    \sum_{\bfx_w}\frac{\Pr[\bfx_v,\bfx_w,\bfb_E]}{\Pr[\bfb_E]}
    =
    \sum_{\bfx_w}\frac{\Pr[\bfx_v,\bfb_{E-e}|\bfx_w,\bfb_e]\Pr[\bfb_e|\bfx_w]\Pr[\bfx_w]}{\Pr[\bfb_{E-e}]\Pr[\bfb_{e}|\bfb_{E-e}]}.
  \end{eqnarray*}
  Since \(\Pr[\bfx_v,\bfb_{E-e}|\bfx_w,\bfb_e]=\Pr[\bfx_v,\bfb_{E-e}|\bfx_w]\), \(\Pr[\bfb_e|\bfx_w]=\Pr[\bfb_e]\) from Lemma~\ref{lmm:cannot-guess-edge} and \(\Pr[\bfb_e|\bfb_{E-e}]=\Pr[\bfb_e]\), this is equal to
  \begin{eqnarray*}
    \sum_{\bfx_w}\frac{\Pr[\bfx_v,\bfb_{E-e}|\bfx_w]\Pr[\bfb_e]\Pr[\bfx_w]}{\Pr[\bfb_{E-e}]\Pr[\bfb_{e}]}
    =
    \sum_{\bfx_w}\frac{\Pr[\bfx_v,\bfx_w,\bfb_{E-e}]}{\Pr[\bfb_{E-e}]}
    =
    \Pr[\bfx_v|\bfb_{E-e}].
  \end{eqnarray*}
  From the assumption of the induction, we have \(\Pr[\bfx_v|\bfb_E]=\Pr[\bfx_v]\).
\end{proof}

\subsection{Proof of Lemma~\ref{lmm:merge-vertex}}\label{apx:merge-vertex}
\begin{lemma}\label{lmm:product-tv}
  Let $\cald$ and $\cald_i (1\leq i\leq k)$ be distributions generating $\bfx\in \bit$.
  And, suppose that $\Pr_{\cald}[\bfx=x]$ is given by
  \begin{eqnarray*}
    \Pr_{\cald}[\bfx=x]=\frac{\prod_{i=1}^{k}\Pr_{\cald_i}[\bfx=x]}{\prod_{i=1}^{k}\Pr_{\cald_i}[\bfx=0]+\prod_{i=1}^{k}\Pr_{\cald_i}[\bfx=1]}.
  \end{eqnarray*}
  Then, $\dtv[\cald(\bfx)]\leq \sum_{i=1}^{k}\dtv[\cald_i(\bfx)]$.
  The equality only holds when (at least) $k-1$ of $\dtv[\cald_i(\bfx)](1\leq i \leq k)$ are $0$.
\end{lemma}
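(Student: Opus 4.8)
The plan is to notice that the combining rule in the hypothesis is, for Boolean random variables, simply multiplication in the ``odds'' representation: writing $p_i=\Pr_{\cald_i}[\bfx=0]$ and $q_i=\Pr_{\cald_i}[\bfx=1]=1-p_i$, the hypothesis says exactly that $\Pr_{\cald}[\bfx=0]/\Pr_{\cald}[\bfx=1]=\prod_{i=1}^{k}(p_i/q_i)$. Hence the operation ``combine the $\cald_i$'' is commutative and associative, and if $\cald'$ denotes the distribution obtained from $\cald_1,\dots,\cald_{k-1}$ by the same rule, then $\cald$ is obtained from $\cald'$ and $\cald_k$ by the two-term rule (a one-line check: multiply numerator and denominator of the two-term formula by $\prod_{i<k}\Pr_{\cald_i}[0]+\prod_{i<k}\Pr_{\cald_i}[1]$). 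So it suffices to prove the case $k=2$ and then induct on $k$. I also record that for Boolean $\bfx$ and the paper's doubled $\dtv$, $\dtv[\cald(\bfx)]=|\Pr_{\cald}[\bfx=0]-\Pr_{\cald}[\bfx=1]|$, so every quantity in sight is an explicit rational function of the $p_i$'s.

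For the base case $k=2$ I would substitute $p_i=(1+\delta_i)/2$ with $\delta_i=\Pr_{\cald_i}[\bfx=0]-\Pr_{\cald_i}[\bfx=1]\in[-1,1]$, so that $\dtv[\cald_i(\bfx)]=|\delta_i|$. A short computation collapses the rule to $\Pr_{\cald}[\bfx=0]-\Pr_{\cald}[\bfx=1]=(\delta_1+\delta_2)/(1+\delta_1\delta_2)$, where the denominator is positive precisely when $\cald$ is well defined (the only excluded case being two contradictory point masses, $\{\delta_1,\delta_2\}=\{+1,-1\}$). The claim thus becomes the elementary inequality $|\delta_1+\delta_2|\le(|\delta_1|+|\delta_2|)(1+\delta_1\delta_2)$ on $[-1,1]^2$. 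I would prove it by cases on signs. Setting $u=|\delta_1|,v=|\delta_2|$: if $\delta_1,\delta_2$ have the same sign the inequality reads $(u+v)\le(u+v)(1+uv)$, which is clear, with equality iff $uv=0$; if they have opposite signs it reads $|u-v|\le(u+v)(1-uv)$, and assuming WLOG $u\le v$ this rearranges to $uv(u+v)\le 2u$, i.e.\ (for $u>0$) $v(u+v)\le 2$, which holds since $v\le1$ and $u+v\le2$, with equality only if $u=0$ or $u=v=1$. Discarding the degenerate $u=v=1$ case, the $k=2$ inequality is strict unless $\dtv[\cald_1(\bfx)]=0$ or $\dtv[\cald_2(\bfx)]=0$.

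Finally I assemble the induction. The bound is immediate: $\dtv[\cald(\bfx)]\le\dtv[\cald'(\bfx)]+\dtv[\cald_k(\bfx)]\le\sum_{i=1}^{k-1}\dtv[\cald_i(\bfx)]+\dtv[\cald_k(\bfx)]$. For the equality characterization, suppose $\dtv[\cald(\bfx)]=\sum_{i=1}^{k}\dtv[\cald_i(\bfx)]$; then both displayed inequalities are equalities. Equality in the first forces, by the $k=2$ equality case, $\dtv[\cald'(\bfx)]=0$ or $\dtv[\cald_k(\bfx)]=0$. In the first sub-case, equality in the second inequality gives $\dtv[\cald'(\bfx)]=\sum_{i=1}^{k-1}\dtv[\cald_i(\bfx)]=0$, so all of $\dtv[\cald_1(\bfx)],\dots,\dtv[\cald_{k-1}(\bfx)]$ vanish — at least $k-1$ zeros. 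In the second sub-case, the inductive equality hypothesis applied to $\cald'$ yields at least $k-2$ zeros among $\dtv[\cald_1(\bfx)],\dots,\dtv[\cald_{k-1}(\bfx)]$, which together with $\dtv[\cald_k(\bfx)]=0$ is again at least $k-1$ zeros. The only place actual work happens is the opposite-sign case of the $k=2$ inequality together with its equality analysis, plus the bookkeeping to exclude the ``contradictory point masses'' configuration where $\cald$ is not defined; everything else is formal. (As an alternative to the case analysis, one may observe $(\delta_1+\delta_2)/(1+\delta_1\delta_2)=\tanh(\mathrm{arctanh}\,\delta_1+\mathrm{arctanh}\,\delta_2)$ and use $|\tanh(\sum_i\theta_i)|\le\tanh(\sum_i|\theta_i|)\le\sum_i|\tanh\theta_i|$, the last inequality and its equality case following from strict concavity of $\tanh$ on $[0,\infty)$; this handles general $k$ at once but still needs a separate remark for point masses, where $\theta_i=\pm\infty$.)
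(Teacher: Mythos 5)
Correct, and essentially the paper's approach: you induct down to the two-term case, parametrize by the bias (yours is $\delta_i=\Pr_{\cald_i}[\bfx=0]-\Pr_{\cald_i}[\bfx=1]$, the paper's is $\Pr_{\cald_i}[\bfx=0]-1/2$, differing only by a factor of two), and bound the combined bias $(\delta_1+\delta_2)/(1+\delta_1\delta_2)$. The one substantive difference is that you actually carry out the sign-case analysis behind the key inequality, which the paper's display (\ref{eqn:leq}) asserts without justification; your $\tanh$ remark is a slicker packaging of the same computation.
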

\begin{proof}
  We use induction on $k$.
  When $k=1$, we have nothing to prove.

  Suppose that the lemma holds when $k<t$.
  Now, we show that the lemma also holds when $k=t$.
  For notational simplicity,
  we define $p_x^i=\prod_{j=1}^{i}\Pr_{\cald_j}[\bfx=x]$.
  Then, $\Pr_{\cald}[\bfx=x]=p_x^t/(p_x^t+p_{1-x}^t)$.
  Let $\cald'$ be a distribution generating $\bfx\in \bit$ in such a way that
  \begin{eqnarray*}
    \Pr_{\cald'}[\bfx=x]=\frac{p_x^{t-1}}{p_x^{t-1}+p_{1-x}^{t-1}}.
  \end{eqnarray*}
  From the assumption, we have $\dtv[\cald'(\bfx)]\leq \sum_{i=1}^{t-1}\dtv[\cald_i(\bfx)]$.
  Also,
  \begin{eqnarray*}
    \Pr_{\cald}[\bfx=x]
    &=&
    \frac{p_x^t}{p_x^t+p_{1-x}^t}\\
    &=&
    \frac{p_x^{t-1}\Pr_{\cald_t}[\bfx=x]}{p_x^{t-1}\Pr_{\cald_t}[\bfx=x]+p_{1-x}^{t-1}\Pr_{\cald_t}[\bfx=1-x]}\\
    &=&
    \frac{\Pr_{\cald'}[\bfx=x]\Pr_{\cald_t}[\bfx=x]}{\Pr_{D'}[\bfx=x]\Pr_{\cald_t}[\bfx=x]+\Pr_{D'}[\bfx=1-x]\Pr_{\cald_t}[\bfx=1-x]}.\\
  \end{eqnarray*}
  Let $\delta'=\Pr_{\cald'}[\bfx=0]-1/2$ and $\delta_t=\Pr_{\cald_t}[\bfx=0]-1/2$ where $-1/2\leq \delta',\delta_t\leq 1/2$.
  Then, 
  \begin{eqnarray}
    \left| \Pr_{\cald}[\bfx=0]-\frac{1}{2} \right| 
    &=&
    \left| \frac{(\frac{1}{2}+\delta')(\frac{1}{2}+\delta_t)}{(\frac{1}{2}+\delta')(\frac{1}{2}+\delta_t)+(\frac{1}{2}-\delta')(\frac{1}{2}-\delta_t)}-\frac{1}{2} \right| \nonumber\\
    &=&
    \left| \frac{\delta'+\delta_t}{1+4\delta'\delta_t} \right| 
    \leq
    |\delta'|+|\delta_t| \label{eqn:leq}.
  \end{eqnarray}
  Similarly, $\left|\Pr_{\cald}[\bfx=1]-1/2 \right| \leq |\delta'|+|\delta_t|$.
  Thus, 
  $\dtv[\cald(\bfx)] \leq 2(|\delta'|+|\delta_t|)\leq \dtv[\cald'(\bfx)]+\dtv[\cald_t(\bfx)] \leq \sum_{i=1}^t\dtv[\cald_i(\bfx)]$.
  
  We finally remark about the condition that the equality holds.
  We note that the case $\delta'=1/2$ and $\delta_t=-1/2$ or vice versa cannot happen since in this case we cannot decide the value of $\bfx$.
  Thus, the equality of (\ref{eqn:leq}) only holds when $\delta'$ or $\delta_t$ equals zero.
  Also, $\dtv[\cald(\bfx)]$ becomes non-zero if $\dtv[\cald'(\bfx)]$ or $\dtv[\cald_t(\bfx)]$ is non-zero.
  Thus, the claim holds.
\end{proof}

\begin{proof}[Proof of Lemma~\ref{lmm:merge-vertex}]
  We note that $\bfx_{L_i} \ci \bfx_{L_j} \mid \bfx_v$ for $1\leq i,j\leq \ell$ under $\cald_H(\cdot|\bfb_E)$.
  Since $\cald_H(\bfx_v|\bfb_E)$ is uniform from Lemma~\ref{lmm:cannot-guess-tree},
  we have from Lemma~\ref{lmm:product} that
  \begin{eqnarray*}
    \Pr[\bfx_v=x|\{\bfx_{L_i} \}_{i=1}^\ell,\bfb_E]=\frac{\prod_{i=1}^\ell\Pr[\bfx_v=x|\bfx_{L_i},\bfb_E]}{\sum_{x'}\prod_{i=1}^\ell\Pr[\bfx_v=x'|\bfx_{L_i},\bfb_E]}.
  \end{eqnarray*}
  Then, by applying Lemma~\ref{lmm:product-tv}, we have the desired result.
\end{proof}

\subsection{Proof of Lemma~\ref{lmm:merge-edge}}\label{apx:merge-edge}
\begin{proof}
  Let $p_x=\Pr[\bfx_e=x|\{\bfx_{L_i}\}_{i=1}^{k-1},\bfb_E]$ for $x\in \bit^{k}$.
  Also, we let \(\delta_i=\Pr[\bfx_{v_i}=0|\bfx_{L_i},\bfb_E]-1/2\) where $-1/2\leq \delta_i \leq 1/2$ and \(S=\supp(\cald_H(\bfx_e|\bfb_E))\subseteq \bit^k \).
  Then, from Lemma~\ref{lmm:product}, 
  \begin{eqnarray*}
    p_x=\frac{\prod_{i=1}^{k-1}(1/2+(-1)^{x_i}\delta_i)}{\sum_{x'\in S}\prod_{i=1}^{k-1}(1/2+(-1)^{x'_i}\delta_i)}.
  \end{eqnarray*}
  Let \(S_v=\{x \in S | x_k=v \}\), then 
  \begin{eqnarray*}
    \dtv[\cald_H(\bfx_{v_k}|\{\bfx_{L_i}\}_{i=1}^{k-1},\bfb_E)]
    =
    \left| \sum_{x\in S_1}p_x-\sum_{x\in S_0}p_x \right| 
    \leq 
    \sum_{x\in S_1} |p_x-p_{\overline{x}}|
    =
    \sum_{x\in S_1} (p_x+p_{\overline{x}})\frac{|p_x-p_{\overline{x}}|}{p_x+p_{\overline{x}}}.
  \end{eqnarray*}
  For each $x\in S_1$, 
  we can think of a random (Boolean) variable that takes $1$ with probability $p_x/(p_x+p_{\overline{x}})$ and $0$ with probability $p_{\overline{x}}/(p_x+p_{\overline{x}})$.
  Then, $|p_x-p_{\overline{x}}|/(p_x+p_{\overline{x}})$ can be seen as the total variation distance of this random variable.
  Since the probability distribution of this random variable exactly matches the condition of Lemma~\ref{lmm:product-tv} (note that the denominator of $p_x$ is canceled out), 
  we have
  \begin{eqnarray}
    \dtv[\cald_H(\bfx_{v_k}|\{\bfx_{L_i}\}_{i=1}^{k-1},\bfb_E)]
    &\leq &
    \sum_{x\in S_1} (p_x+p_{\overline{x}}) \sum_{i=1}^{k-1}\dtv[\cald_H(\bfx_{v_i}|\bfx_{L_i},\bfb_E)] \label{eqn:leq2} \\
    &=&
    \sum_{i=1}^{k-1}\dtv[\cald_H(\bfx_{v_i}|\bfx_{L_i},\bfb_E)]. \nonumber
  \end{eqnarray}
  This already indicates that $\rho(P)\leq 1$.

  We fix $P\neq \equ$ and let $\delta=(\delta_1,\ldots,\delta_{k-1})$.
  Let $D_\epsilon=\{\delta\in[-1/2,1/2]^{k-1}\mid \sum_{i}^{k-1}|\delta_i| \leq 1+\epsilon  \}$ for $\epsilon>0$ chosen later.
  This excludes singular points implied by the left hand side of (\ref{eqn:leq}).
  We define
  \begin{eqnarray*}
    \rho(\delta)
    &=&
    \frac{\dtv[\cald_H(\bfx_{v_k}|\{\bfx_{L_i}\}_{i=1}^{k-1},\bfb_E)]}{\sum_{i=1}^{k-1}\dtv[\cald_H(\bfx_{v_i}|\bfx_{L_i},\bfb_E)]}\\
    &=&
    \frac{\left| \sum_{x\in S_1}p_x-\sum_{x\in S_0}p_x \right|}{\sum_{i=1}^{k-1}2|\delta_i|}.
  \end{eqnarray*}
  We can safely state that $\rho(\delta)<1/(1+\epsilon)$ in $[-1/2,1/2]^{k-1}-D_\epsilon$.

  Next, we only consider the domain $D_\epsilon^{+}=D_\epsilon\cap[0,1/2]^{k-1}$.
  Other domains (i.e., $D_\epsilon-[0,1/2]^{k-1}$) can be treated in the same manner.
  After a calculation,
  we can see that the limit at $\delta=(0,\ldots,0)$ exists and the value is less than one when $P\neq \equ$.
  Thus, $\rho(\delta)$ is continuous in $D_\epsilon^{+}$.
  In particular, $\rho(\delta)$ is uniformly continuous.

  We will show that there exists a universal constant $\rho<1$ such that $\rho(\delta)\leq \rho$ regardless of $\delta \in D_{\epsilon}^{+}$.
  This concludes the lemma.
  For $\epsilon>0$, 
  we define $H_\epsilon=D_{\epsilon}^{+}\cap \{\delta\in [0,1/2]^{k-1}\mid \mbox{all but one of }\delta_i \leq \epsilon \}$.
  When $\delta \not \in  H_{\epsilon}$, 
  considering the inequality (\ref{eqn:leq2}) in the proof of Lemma~\ref{lmm:product-tv},
  there exists some constant $\rho<1$ such that $\rho(\delta)\leq \rho$.

  The remaining case is $\delta\in H_\epsilon$.
  We will show that there exists a constant $\rho<1$ such that $\rho(\delta)\leq \rho$ for $\delta\in H_0$.
  From the uniform continuity of $\rho(\delta)$, 
  by choosing $\epsilon>0$ small enough,
  we establish the desired result.

  Let $\delta\in H_0$.
  Without loss of generality,
  we can assume that $\delta_1\geq 0$ and $\delta_2=\cdots=\delta_{k-1}=0$.
  For $\rho \geq 0$, 
  we consider the following function of $\rho$ and $\delta$.
  \begin{eqnarray*}
    f(\rho,\delta)
    &=&
    \rho \sum_{i=1}^{k-1}\dtv[\cald_H(\bfx_{v_i}|\bfx_{L_i},\bfb_E)]  - \dtv[\cald_H(\bfx_{v_k}|\{\bfx_{L_i}\}_{i=1}^{k-1},\bfb_E)] \\
    &=&
    \rho \sum_{i=1}^{k-1}2|\delta_i|-\left| \sum_{x\in S_1}p_x-\sum_{x\in S_0}p_x \right|\\
    &=&
    \min\left(\rho \sum_{i=1}^{k-1}2\delta_i-\left(\sum_{x\in S_1}p_x-\sum_{x\in S_0}p_x\right),\rho\sum_{i=1}^{k-1}2\delta_i+\left(\sum_{x\in S_1}p_x-\sum_{x\in S_0}p_x\right)\right)\\
    &=:& 
    \min \left(f_1(\rho,\delta),f_2(\rho,\delta)\right).
  \end{eqnarray*}
  We let $g(\rho,\delta_1)=f(\rho,\delta_1,0,\ldots,0)$.
  If the minimum of $g(\rho,\delta_1)$ over $0\leq \delta_1 \leq 1/2$ is non-negative, 
  we can say that $\rho(\delta)\leq \rho$ for $\delta\in H_0$.
  Since the denominator of $g$ (after factoring) is non-negative,
  we only consider its numerator $\hat{g}=:\min\{\hat{g}_1,\hat{g}_2\}$.
  We note that $\hat{g}_1,\hat{g}_2$ are odd functions and the degrees of $\hat{g}_1,\hat{g}_2$ are at most two.
  Thus, $\hat{g}_1,\hat{g}_2$ is a linear function of $\delta_1$ when we fix $\rho$.
  Suppose that $\hat{g}(\rho,\delta_1)<0$ for some $\delta_1\in [0,1/2]$.
  Then, by moving $\delta_1$ to $0$ or $1/2$,
  we obtain a smaller value.
  Thus, it suffices to check the case $\delta_1=0$ and $\delta_1=1/2$.
  When $\delta_1=0$, we have already seen that $\rho(0,\ldots,0)<1$.
  The case $\delta_1=1/2$ corresponds to the following question:
  how much can we guess the value of $\bfx_{v_k}$ when we know the actual value of $\bfx_{v_1}$ and we do not know values of other variables?
  For any symmetric predicate except \equ, 
  the choice of $\bfx_{v_k}$ is not unique.
  Thus, we can choose $\rho<1$ that only depends on $P$.
\end{proof}

\section{An $\Omega(n^{1/2+\delta})$ Lower Bound for Testing \txor}\label{apx:txor}
In this section, 
we give the proof of Theorem~\ref{thr:txor}.
To make hard instances that are $\epsilon$-close to satisfiability,
we slightly modify the construction of $\caldsat$.
We use the same $\caldfar$ as defined in Section~\ref{sec:two-sided}.
\begin{definition}
  Let $H=(V,E)$ be a graph with $n$ vertices.
  Let $\epsilon$ be an error parameter.
  Define a distribution $\cald_{H,\epsilon}$ generating an instance $\Phi$ of \txor as follows.
  The variable set of $\Phi$ is $\{x_v\}_{v\in V}$.
  We choose $\bfx\in \bit^n$ uniformly at random.
  For each edge $e=(u,v)\in E$, 
  we choose $\bfb_e$ uniformly at random from the set $\{b\in \bit^2\mid P((\bfx_{u},\bfx_{v})+b)=1 \}$ with probability $1-\epsilon$,
  and from the set $\{b\in \bit^2\mid P((\bfx_{u},\bfx_{v})+b)=0 \}$ with probability $\epsilon$.
  Then, we add a constraint $C_e$ of the form $P((x_{u},x_{v})+\bfb_e)=1$ to $\Phi$.
\end{definition}
\begin{definition}
  Given parameters $n,d,\epsilon$, 
  define a distribution ${\caldsat}$ generating an instance of \txor as follows.
  First, we choose a graph $H$ from $\calg_{n,d,2}$.
  Then, an instance is output according to $\cald_{H,\epsilon}$.
\end{definition}
The following lemma is immediate.
\begin{lemma}\label{lmm:close}
  For any $\epsilon>0$ and $d\geq 1$, the following holds.
  Let $\Phi$ be an instance of \txor chosen from $\caldsat$.
  Then, $\Phi$ is $(\epsilon/2)$-close to satisfiability with a probability of $1-o(1)$.
  \qed
\end{lemma}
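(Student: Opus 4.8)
The plan is to show that a random instance $\Phi \sim \caldsat$ is $(\epsilon/2)$-close to satisfiability with probability $1-o(1)$ by exhibiting an explicit assignment that violates few constraints, namely the hidden solution $\bfx$ itself. Recall that in $\cald_{H,\epsilon}$, for each edge $e$ we choose $\bfb_e$ from the satisfying set with probability $1-\epsilon$ and from the non-satisfying set with probability $\epsilon$. Hence the constraint $C_e$ is satisfied by $\bfx$ precisely when the ``good'' choice was made, and the set of constraints violated by $\bfx$ is exactly $\{e : \text{the } \epsilon\text{-event occurred for } e\}$. Since these events are mutually independent across edges, the number of violated constraints is a sum of $|E| = dn/2$ independent Bernoulli$(\epsilon)$ random variables, with mean $\epsilon dn/2$.

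The key computation is then a concentration bound. By Hoeffding's (or Chernoff's) inequality, the number of constraints violated by $\bfx$ exceeds $(1+o(1))\epsilon dn/2$ only with probability $\exp(-\Omega(\epsilon dn))$, which is $o(1)$ for any fixed $\epsilon, d$ as $n \to \infty$. More precisely, I would note that removing all constraints violated by $\bfx$ yields an instance satisfied by $\bfx$, so the ``distance to satisfiability'' (in the normalization of the bounded-degree model, where $\epsilon$-close means removable with at most $\epsilon dn / k = \epsilon dn/2$ constraints for $k=2$) is at most the number of such violated constraints. To get exactly $(\epsilon/2)$-closeness I would want the violated count to be at most $(\epsilon/2)\cdot dn/2 = \epsilon dn/4$; but the mean is $\epsilon dn/2$, so a direct application does not suffice. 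The fix is to run the construction with noise parameter $\epsilon/2$ in place of $\epsilon$ wherever $\caldsat$ invokes $\cald_{H,\epsilon}$ (or, equivalently, to read the statement as asserting $\epsilon$-closeness for the instance built with noise $\epsilon$ but that would give only $\epsilon$-closeness up to lower order terms) — in any case the honest statement is: with noise parameter $\eta$, the instance is $(\eta + o(1))$-close, and one sets $\eta = \epsilon/2$ to land at $(\epsilon/2)$-closeness with a comfortable margin swallowed by the $o(1)$. Since the lemma is labelled ``immediate,'' the intended argument is simply this one-line Chernoff estimate.

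The only mild subtlety — and the thing I would be careful about — is bookkeeping the normalization constant $dn/k$ in the definition of ``$\epsilon$-close'' and making sure the $o(1)$ slack absorbs the gap between the expected fraction of violated constraints and the target fraction $\epsilon/2$. There is no structural obstacle here: unlike the far-side lemma (Lemma~\ref{lmm:far}), which requires a union bound over all $2^n$ assignments and genuinely needs $d = \Omega(1/\epsilon^2)$, the close-side lemma only needs a single assignment, so no union bound and no lower bound on $d$ are required — which is exactly why the statement holds ``for any $\epsilon > 0$ and $d \geq 1$.'' So I would present it as: fix $\bfx$, identify violated constraints with the independent $\epsilon$-noise events, apply a Chernoff/Hoeffding tail bound to conclude the violated fraction is at most $\epsilon/2$ (in the appropriate normalization) except with probability $o(1)$, and therefore $\Phi$ is $(\epsilon/2)$-close to satisfiability with probability $1-o(1)$.
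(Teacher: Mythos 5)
Your approach is the right one, and indeed the only natural one; the paper itself supplies no argument (it labels the lemma ``immediate''), and the intended reasoning is exactly what you give: the planted assignment $\bfx$ violates precisely those edges where the $\epsilon$-noise fired, these are independent Bernoulli$(\epsilon)$ events over the $m = dn/2$ edges, and a Chernoff bound concentrates the count at $(1+o(1))\epsilon dn/2$. You are also right that this side needs neither a union bound over assignments nor any lower bound on $d$, which is why the lemma can state ``for any $d\geq 1$.''

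The factor-of-$2$ discrepancy you flag is genuine and not something I can reconcile with the paper's normalization: ``$\alpha$-close'' means removable with at most $\alpha\,dn/k = \alpha\,dn/2$ constraints, the expected violated count is $\epsilon\,dn/2$, so the construction as written is $(1+o(1))\epsilon$-close, not $(\epsilon/2)$-close. Your proposed repair, however, overshoots slightly in the same way: with noise parameter $\eta = \epsilon/2$ the mean violated count equals the budget $(\epsilon/2)\cdot dn/2$ exactly, and by the central limit theorem the count exceeds its mean with probability about $1/2$, so concentration does not give $(\epsilon/2)$-closeness with probability $1-o(1)$. You need the mean strictly below the budget --- i.e., noise $\eta$ strictly less than $\epsilon/2$ (say $\eta = \epsilon/4$), after which the $o(1)$ fluctuation really is absorbed. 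Alternatively, and probably closer to the author's intent: keep the noise at $\epsilon$, read the conclusion as $\epsilon$-closeness (which is all that the $(\epsilon,1/2-\epsilon)$-tester in Theorem~\ref{thr:txor} requires), and accept the $(1+o(1))$ slack by absorbing it into $\epsilon$. Either way, your bookkeeping is more careful than the paper's, and the ``mild subtlety'' you identify is the crux; just tighten the last step of the fix so the expectation sits strictly inside the target.
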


We use $\caldsat$ defined above instead of $\caldsat$ defined in Section~\ref{sec:two-sided} to prove Theorem~\ref{thr:txor}.
The proof is almost same as the proof of Theorem~\ref{thr:two-sided}.
A modification occurs only in the proof of Lemma~\ref{lmm:merge-edge}.
The following is an analogue of Lemma~\ref{lmm:merge-edge} for the distribution $\cald_{H,\epsilon}$
\begin{lemma}\label{lmm:merge-edge-txor}
  Let $T=(V,E)$ be a subgraph of a graph $H$.
  Suppose that $T$ is a tree.
  Let $T_u$ be the subtree obtained by removing $e=(u,v)\in E$.
  Here, $u$ is the root of $T_u$.
  Let $L$ be a subset of the leaves of $T_u$.
  Then,
  \begin{eqnarray*}
    \dtv[\cald_{H,\epsilon}(\bfx_{v}|\bfx_L,\bfb_E)]\leq (1-2\epsilon)\dtv[\cald_{H,\epsilon}(\bfx_u|\bfx_{L},\bfb_E)].
  \end{eqnarray*}
\end{lemma}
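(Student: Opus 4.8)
The plan is to reduce the claim to a one-line computation in which the removed edge $e=(u,v)$ behaves as a binary symmetric channel carrying $\bfx_u$ to $\bfx_v$, whose $\epsilon$-noise contracts every bias by the factor $1-2\epsilon$. First I would rewrite a \txor constraint explicitly: for $e=(u,v)$ the constraint $C_e\colon\,P((x_u,x_v)+\bfb_e)=1$ is equivalent to $x_u\oplus x_v=c_e$, where $c_e:=1\oplus b_{e,1}\oplus b_{e,2}$ is a deterministic function of $\bfb_e$. Under $\cald_{H,\epsilon}$, conditioned on the hidden assignment $\bfx$, the bit $c_e$ equals $\bfx_u\oplus\bfx_v$ with probability $1-\epsilon$ and its complement with probability $\epsilon$, and these choices are independent across edges. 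By Bayes' rule against the uniform prior on $\bfx_v$, this yields the transition law $\Pr[\bfx_v=a\mid\bfx_u,\bfb_e]=1-\epsilon$ when $a=\bfx_u\oplus c_e$ and $\epsilon$ otherwise; in other words, given $(\bfx_u,\bfb_e)$ the variable $\bfx_v$ is $\mathrm{BSC}(\epsilon)$ applied to $\bfx_u\oplus c_e$.

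Next I would isolate the correct conditioning. Write $E'=E-e$ for the edge set of $T_u$. After the non-$T$ edges of $H$ are truncated (they carry no information, just as in the proof of Lemma~\ref{lmm:E+S-to-v}), the Markov blanket of $\bfx_v$ in the graphical model of $\cald_{H,\epsilon}$ is $\{\bfx_u,\bfb_e\}$, since in the tree $T$ the leaf $v$ is incident only to $e$ and the sole vertex adjacent to $\bfb_e$ other than $v$ is $u$. Hence $\bfx_v\ci(\bfx_L,\bfb_{E'})\mid(\bfx_u,\bfb_e)$, and therefore
\[
\Pr[\bfx_v=a\mid\bfx_L,\bfb_E]=\sum_{x\in\bit}\Pr[\bfx_v=a\mid\bfx_u=x,\bfb_e]\cdot\Pr[\bfx_u=x\mid\bfx_L,\bfb_E].
\]
Plugging in the channel of the first step and setting $q=\Pr[\bfx_u=0\mid\bfx_L,\bfb_E]$, a two-case check on $c_e\in\bit$ gives $\bigl|\Pr[\bfx_v=0\mid\bfx_L,\bfb_E]-\tfrac12\bigr|=(1-2\epsilon)\,\bigl|q-\tfrac12\bigr|$, so that $\dtv[\cald_{H,\epsilon}(\bfx_v\mid\bfx_L,\bfb_E)]=(1-2\epsilon)\,\dtv[\cald_{H,\epsilon}(\bfx_u\mid\bfx_L,\bfb_E)]$ — in fact an equality, slightly stronger than the stated bound.

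The only point that needs care is the second step: I must check that the graphical-model facts proved for $\cald_H$ in Section~\ref{sec:two-sided} — the separation Observation and Lemmas~\ref{lmm:cannot-guess-edge} and~\ref{lmm:cannot-guess-tree} — remain valid once $\epsilon$-noise is added to each literal vector. They do, essentially for the same reason they held before: the noise is symmetric, so conditioned on either endpoint the literal vector $\bfb_e$ is still uniform on $\bit^2$ (which is exactly what those proofs use), and the underlying dependency graph is unchanged, so every separation statement carries over verbatim. With Lemma~\ref{lmm:merge-edge-txor} in hand, the rest of the argument follows Theorem~\ref{thr:two-sided-weak} and Lemma~\ref{lmm:E-to-e} with $\rho(P)$ replaced throughout by $1-2\epsilon$; taking $\delta=O(\epsilon/\log(k/\epsilon^2))$ makes the discarded probability $ck(4c+2k)(1-2\epsilon)^{g}$ a small constant exactly as before.
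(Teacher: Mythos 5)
Your proof is correct and takes essentially the same route as the paper's: condition on $\bfx_u$, use the graphical-model independence to drop the rest of the conditioning, interpret the edge $e$ as a binary symmetric channel with crossover $\epsilon$, and observe that the bias contracts by exactly $1-2\epsilon$. You are a bit more explicit than the paper about the Bayes step that yields the channel law and about why the separation facts from Section~\ref{sec:two-sided} carry over to $\cald_{H,\epsilon}$, and you correctly note that the bound in fact holds with equality, but the substance is identical.
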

\begin{proof}
  For simplicity, 
  we assume that $\bfb_e=0$.
  We can prove other cases in the same manner.
  It holds that 
  \begin{eqnarray*}
    \Pr[\bfx_v=0|\bfx_L,\bfb_E]
    &=&
    \sum_{\bfx_u}\Pr[\bfx_v=0|\bfx_u,\bfb_E]\Pr[\bfx_u|\bfx_L,\bfb_E]\\
    &=&
    (1-\epsilon)\Pr[\bfx_u=1|\bfx_L,\bfb_E]+\epsilon\Pr[\bfx_u=0|\bfx_L,\bfb_E].
  \end{eqnarray*}
  Let $\delta=\Pr[\bfx_u=0|\bfx_L,\bfb_E]-1/2$.
  Then,
  \begin{eqnarray*}
    \left| \Pr[\bfx_v=0|\bfx_u,\bfb_E]-\frac{1}{2} \right| 
    &=&
    \left| (1-\epsilon)(\frac{1}{2}-\delta)+\epsilon(\frac{1}{2}+\delta)-\frac{1}{2} \right|\\
    &=& 
    (1-2\epsilon)|\delta|.
  \end{eqnarray*}
  Similarly, $\left|\Pr[\bfx_v=1|\bfx_L,\bfb_E]-1/2\right| \leq (1-2\epsilon)|\delta|$.
  Thus, 
  $\dtv[\cald_{H,\epsilon}(\bfx_v|\bfx_L,\bfb_E)] \leq 2(1-2\epsilon)|\delta| \leq (1-2\epsilon)\dtv[\cald_{H,\epsilon}(\bfx_u|\bfx_L)]$.
\end{proof}
\begin{proof}[Proof of Theorem~\ref{thr:txor}]
  Combining the proof of Theorem~\ref{thr:two-sided} and Lemma~\ref{lmm:merge-edge-txor}, 
  the theorem holds.
  Since we use $1-2\epsilon$ instead of $\rho(P)$ as a decaying factor,
  we need to choose $\delta=O(\epsilon / \log (k/\epsilon^2))$.
\end{proof}

\section{A Linear Lower Bound for One-Sided Error Testers}\label{apx:one-sided}
\begin{theorem}\label{thr:one-sided-weak}
  Let $P:\bit^k\to\bit$ be any symmetric predicate except \equ where $k\geq 3$.
  Then,
  for any $\epsilon>0$, 
  there exists $d\geq 1$ such that any one-sided error $(|P^{-1}(0)|/2^k-\epsilon)$-tester for \csp{$P$} with a degree bound $d$ requires $\Omega(n)$ queries.
\end{theorem}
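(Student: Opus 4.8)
The plan is to apply Yao's principle with the single hard distribution $\caldfar$ from Section~\ref{subsec:hard-instance}. By Lemma~\ref{lmm:far} we may fix $d=O(1/\epsilon^2)$ so that an instance $\Phi\sim\caldfar$ is $(|P^{-1}(0)|/2^k-\epsilon)$-far from satisfiability with probability $1-o(1)$, and therefore a correct tester must reject $\Phi$ with probability at least $2/3-o(1)$. The crux is the opposite behaviour: every sub-instance of $\Phi$ of linear size is satisfiable, so a one-sided tester that has inspected only $o(n)$ constraints holds no certificate of unsatisfiability and is forced to accept. Indeed, a one-sided-error tester is, by Yao, a distribution over deterministic strategies, each of which accepts every satisfiable instance; hence whenever such a strategy rejects an input, the query--answer history it has collected must be inconsistent with \emph{every} satisfiable \csp{$P$} instance on $n$ variables of degree at most $d$ (otherwise it would also reject that satisfiable instance).

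For the structural input, fix $\eta=1/4$ and let $\gamma>0$ be the constant supplied by Lemma~\ref{lmm:expander}, so the underlying hypergraph $H\sim\calg_{n,d,k}$ is a $(\gamma,\eta)$-expander with probability $1-o(1)$; condition on this event. The combinatorial heart is the claim that \emph{if $H$ is a $(\gamma,\eta)$-expander then, for every set $F$ of at most $\gamma n$ edges and every choice of literal vectors, the \csp{$P$} instance formed by the constraints on $F$ is satisfiable}. To prove this, I would first upgrade expansion to boundary expansion: any nonempty $F'\subseteq F$ with $s'=|F'|$ edges spans $V\geq(k-1-\eta)s'$ vertices, so $\sum_v(\deg_{F'}(v)-1)=ks'-V\leq(1+\eta)s'$, whence $F'$ has at least $V-(1+\eta)s'\geq(k-2-2\eta)s'$ degree-$1$ (boundary) vertices; averaging over the $s'$ edges and using integrality, some edge of $F'$ has at least $k-2$ boundary vertices (here $\eta=1/4$ makes $k-2-2\eta>k-3$). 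Peeling off such an edge and iterating yields an ordering $e_1,\dots,e_s$ of $F$ in which each $e_i$ has at least $k-2$ vertices outside $e_1\cup\dots\cup e_{i-1}$, i.e.\ at most two ``old'' vertices.

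I then satisfy the instance greedily along this order. When $e_i$ is processed, the contribution to its literal-shifted Hamming weight from its at most two already-assigned variables is some $a\in\{0,1,2\}$, and by choosing the $f\geq k-2$ fresh variables suitably we can realize any total weight in $\{a,a+1,\dots,a+f\}\supseteq[a,a+(k-2)]$. Since $P$ is symmetric and not \kequ, $P^{-1}(1)$ contains all vectors of some weight $j$ with $1\leq j\leq k-1$ (hence also all of weight $k-j$), and a short case analysis shows $[a,a+(k-2)]\cap\{j,k-j\}\neq\emptyset$ for every $a\in\{0,1,2\}$ and every $k\geq3$ (the binding case is $j=1$, handled using $a\leq2\leq k-1$); so $e_i$ is satisfied and the claim follows. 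The same argument applies verbatim to any $Q$ with $P^{-1}(1)\subseteq Q^{-1}(1)$, since a $Q$-constraint is satisfied whenever its shifted weight equals $j$; this would give Theorem~\ref{thr:one-sided}.

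Finally, the reduction: suppose a one-sided-error $(|P^{-1}(0)|/2^k-\epsilon)$-tester has query complexity $q\leq\gamma n$. Any deterministic strategy in its support reveals at most $q$ constraints, whose underlying edges form a set $F$ with $|F|\leq\gamma n$; when $H$ is a $(\gamma,\eta)$-expander the claim produces a satisfying assignment of those constraints, which we extend arbitrarily to the remaining variables and pad to a degree-$d$ satisfiable instance consistent with all oracle answers (a routine completion in the bounded-degree model; cf.~\cite{GR08}). Hence, on the probability-$(1-o(1))$ event that $H$ is an expander, the strategy cannot reject, so the tester accepts $\Phi\sim\caldfar$ with probability $1-o(1)$, contradicting the required rejection probability $2/3-o(1)$; therefore $q=\Omega(n)$. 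I expect the main obstacle to be the combinatorial claim --- extracting the ``at least $k-2$ boundary vertices per peeled edge'' bound from expansion (so that at most two coordinates of any constraint are ever pre-committed) and checking that two pre-committed coordinates cannot drive a symmetric non-\kequ predicate out of reach; the remaining steps are standard bounded-degree-model bookkeeping.
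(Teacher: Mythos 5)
Your proof is correct and is essentially the paper's own argument: the paper also fixes a single $(|P^{-1}(0)|/2^k-\epsilon)$-far instance whose underlying hypergraph is a $(\gamma,\eta)$-expander, and shows by peeling (phrased as induction on the number of constraints) that every sub-instance on at most $\gamma n$ constraints has an edge with at most two pre-committed vertices, then uses the same symmetry observation about accepted Hamming weights to extend the assignment, concluding that a one-sided tester holds no certificate of unsatisfiability after $o(n)$ queries. Your version is slightly more careful in one spot — you explicitly take $\eta=1/4<1/2$ and count boundary vertices to justify the ``at most two shared vertices'' step, whereas the paper only states $\eta<1$ — but this is a cosmetic sharpening of the same argument, not a different route.
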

\begin{proof}
  Let $\Phi$ be a given instance.
  Since a one-sided error tester must accept $\Phi$ when $\Phi$ is satisfiable,
  it cannot reject $\Phi$ unless it has found an unsatisfiable sub-instance of $\Phi$.
  We show that for any $\epsilon>0$ there exists $d$ for which the following holds: 
  there exists an instance $\Phi$ of \csp{$P$} with a degree bound $d$ such that any linear-size sub-instance is satisfiable while $\Phi$ is $(|P^{-1}(0)|/2^k-\epsilon)$-far from satisfiability.
  The lemma clearly holds from this fact.

  From Lemma~\ref{lmm:far}, 
  for any $\epsilon>0$ and $\eta>0$,
  there exist $d\geq 1, \gamma>0$, and an instance $\Phi$ with a degree bound $d$ such that $\Phi$ is $(|P^{-1}(0)|/2^k-\epsilon)$-far from satisfiability and the underlying hypergraph is a $(\gamma,\eta)$-expander.
  Let $\Phi'$ be a sub-instance of $\Phi$,
  and let $V(\Phi')$ and $E(\Phi')$ denote the set of variables and constraints of $\Phi'$, respectively.
  We show that $\Phi'$ is satisfiable when $|E(\Phi')|\leq \gamma n$ by induction on $|E(\Phi')|$.

  Clearly, any sub-instance with no constraint is satisfiable.
  Suppose that any sub-instance of $\Phi$ with less than $m$ constraints is satisfiable.
  Let $\Phi'$ be a sub-instance of $\Phi$ with $m$ constraints.
  Then, since $H$ is a $(\gamma,\eta)$-expander,
  $|V(\Phi')|\geq (k-1-\eta)|E(\Phi')|$.
  Since $\eta<1$, 
  there exists some constraint $C\in E(\Phi')$ such that $C$ shares at most two variables with $E(\Phi')-C$.
  Suppose that $C$ shares two variables $x_u,x_v$ with $E(\Phi')-C$.
  Note that $P$ is a symmetric predicate except \equ.
  If $P$ accepts $x\in \bit^k$ with $|x|=1$, 
  then $P$ accepts $x$ with $|x|=k-1$.
  If not, there exists some $2\leq w \leq k-2$ such that $P$ accepts $x$ with $|x|=w$.
  Thus, $P$ accepts $x\in \bit^k$ with $|x|=w$ for some $2\leq w \leq k-1$.
  Hence, regardless of the values of $x_u,x_v$,
  we can satisfy $C$ by appropriately choosing the values of the rest of the variables in $C$.
  Other cases are similar.
  Thus, the induction completes and the theorem follows.
\end{proof}
\begin{proof}[Proof of Theorem~\ref{thr:one-sided}]
  Let $Q:\bit^k\to\bit$ be a predicate such that $P^{-1}(1)\subseteq Q^{-1}(1)$ for a symmetric predicate $P\bit^k\to\bit$ except \kequ.
  As the proof of Theorem~\ref{thr:two-sided},
  We change the definition of $\cald_H$.  
  Thus, 
  for each edge $e=(v_1,\ldots,v_k)$ of $H$,
  we choose $\bfb_e$ uniformly at random from the set $\{b\in \bit^k\mid P((\bfx_{v_1},\ldots,\bfx_{v_k})+b)=1 \}$ instead of $\{b\in \bit^k\mid Q((\bfx_{v_1},\ldots,\bfx_{v_k})+b)=1 \}$.
  Then, the rest of the proof is the same as the proof of Theorem~\ref{thr:one-sided-weak}.
\end{proof}

\section{An $\epsilon$-Tester for \kequ}\label{apx:equ}
In this section, we prove Theorem~\ref{thr:equ}.
The idea is to transform an instance of \kequ into a graph and use a bipartiteness tester given in~\cite{GR99}.

We define a reduction $\varphi$, which maps an instance $\Phi$ of \kequ to an instance $\Phi'$ of $2$-\equ.
The set of variables of $\Phi'$ is the same as $\Phi$.
For each constraint in $\Phi$ of the form $\ell_1=\ell_2=\ldots=\ell_k$,
where each $\ell_i$ is a literal,
we simply introduce $k(k-1)/2$ constraints in $\Phi'$ of the form $\ell_i=\ell_j (1\leq i \leq j \leq k)$.

\begin{lemma}\label{lmm:reduction-from-keq}
  If $\Phi$ is a satisfiable instance of \kequ,
  then $\varphi(\Phi)$ is satisfiable.
  On the contrary, if $\Phi$ is $\epsilon$-far from satisfiability,
  then $\varphi(\Phi)$ is $\epsilon'$-far from satisfiability where $\epsilon'=2\epsilon/k$.
\end{lemma}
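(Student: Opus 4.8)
The plan is to prove both halves by comparing, assignment by assignment, how many constraints $\Phi$ and $\varphi(\Phi)$ violate. The first half is immediate: if $\sigma$ satisfies $\Phi$, then in every constraint $\ell_1 = \cdots = \ell_k$ of $\Phi$ all the literals take a common value under $\sigma$, so each derived constraint $\ell_i = \ell_j$ of $\varphi(\Phi)$ is satisfied; hence $\varphi(\Phi)$ is satisfiable.

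For the second half I would argue the contrapositive: assuming $\varphi(\Phi)$ is \emph{not} $\epsilon'$-far from satisfiability, I show $\Phi$ is not $\epsilon$-far. First I would pin down the parameters. The instance $\varphi(\Phi)$ has the same $n$ variables as $\Phi$, and each $k$-EQU constraint incident to a variable $x$ spawns exactly $k-1$ of the derived $2$-EQU constraints incident to $x$; therefore the degree bound of $\varphi(\Phi)$ is $d' = d(k-1)$, and ``$\epsilon'$-far'' for $\varphi(\Phi)$ means one must delete at least $\epsilon' d' n / 2 = \epsilon d (k-1) n / k$ constraints. So the hypothesis gives a set $R'$ of derived constraints with $|R'| < \epsilon d (k-1) n / k$ together with an assignment $\sigma$ that satisfies $\varphi(\Phi) - R'$; in particular every derived constraint violated by $\sigma$ lies in $R'$.

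The crux is a counting argument at the level of a single $k$-EQU constraint. If $\sigma$ fails a constraint $C = (\ell_1 = \cdots = \ell_k)$ of $\Phi$, then its literals are not all equal: say $j$ of them evaluate to $0$ and $k - j$ to $1$ with $1 \le j \le k - 1$. Then exactly $j(k-j) \ge k - 1$ of the $\binom{k}{2}$ derived constraints of $C$ (those pairing a $0$-literal with a $1$-literal) are violated by $\sigma$, and hence belong to $R'$. Since $\varphi$ creates a fresh block of derived constraints for each $k$-EQU constraint, these blocks are disjoint, so the number $r$ of constraints of $\Phi$ failed by $\sigma$ obeys $(k-1) r \le |R'| < \epsilon d (k-1) n / k$, i.e. $r < \epsilon d n / k$. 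Deleting exactly those $r$ constraints from $\Phi$ yields an instance satisfied by $\sigma$, so $\Phi$ is not $\epsilon$-far, completing the contrapositive.

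I expect the only delicate point to be the bookkeeping surrounding the change of degree bound $d \mapsto d(k-1)$ and the disjointness of the derived-constraint blocks --- concretely, one must use that $\varphi$ introduces separate constraints per $k$-EQU constraint (rather than merging coincident literal pairs arising from different constraints), which is exactly what validates both the degree computation and the inequality $(k-1) r \le |R'|$.
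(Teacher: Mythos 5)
Your proof is correct, and it takes a genuinely different (and in fact sharper) route than the paper's. The paper argues by deleting the $\Phi$-constraint corresponding to \emph{each} deleted $\Phi'$-constraint: it sets $d'=dk$, notes that ``not $\epsilon'$-far'' means at most $\epsilon' d' n/2$ deletions in $\Phi'$, transfers these one-for-one to deletions in $\Phi$, and invokes that $\varphi(\Phi_{\mathrm{rm}})\subseteq\Phi'_{\mathrm{rm}}$. However, with $\epsilon'=2\epsilon/k$ and $d'=dk$ that budget is $\epsilon' d' n/2 = \epsilon d n$, not $\epsilon d n/k$ as the paper writes, so the one-for-one transfer overshoots the $\epsilon d n/k$ threshold for $\Phi$ by a factor of $k$; the paper's argument as written only supports $\epsilon'$ on the order of $2\epsilon/k^2$. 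Your proof recovers the stated $\epsilon'=2\epsilon/k$ by exploiting amplification at the constraint level: a failed $k$-\textsf{EQU} constraint with $j$ zeros among its literals violates $j(k-j)\geq k-1$ of its $\binom{k}{2}$ derived pair-constraints, these blocks are disjoint across constraints, so $(k-1)r\leq|R'|$. Combined with the tighter degree bound $d'=d(k-1)$, this gives exactly $r<\epsilon d n/k$. So your argument not only proves the lemma but repairs what appears to be an arithmetic slip in the published proof; the only thing worth adding is a remark that the multiset of derived constraints is kept per-source-constraint (no coalescing), which you did flag as the delicate point.
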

\begin{proof}
  Let $\Phi'=\varphi(\Phi)$.
  The former part is obvious.
  Furthermore, if $\Phi'$ is satisfiable, then $\Phi$ is satisfiable.

  We show the latter part.
  Suppose that $\Phi'$ is not $\epsilon'$-far from satisfiability.
  Since the degree bound of $\Phi'$ is (at most) $d'=dk$,
  we can make $\Phi'$ satisfiable by removing less than $\epsilon' d'n/2=\epsilon dn / k$ constraints.
  Let $\Phi'_{\textrm{rm}}$ be the resulting instance.

  We simulate this removal in $\Phi$.
  That is, for each removed constraint in $\Phi'$,
  we remove the corresponding constraint in $\Phi$.
  Let $\Phi_{\textrm{rm}}$ be the resulting instance of \kequ.
  The number of removed constraints is at most $\epsilon dn/k$.
  The important fact is that $\varphi(\Phi_{\textrm{rm}})$ is a sub-instance of $\Phi'_{\textrm{rm}}$.
  Since $\Phi'_{\textrm{rm}}$ is satisfiable, 
  $\Phi_{\textrm{rm}}$ is also satisfiable.
  However, this contradicts the fact that $\Phi$ is $\epsilon$-far from satisfiability.
\end{proof}

Next,
we define a reduction $\varphi_G$, which maps an instance $\Phi$ of $2$-\equ to a graph $G$.
First, each literal of $\Phi$ forms a vertex in $G$.
Next, for each variable $x$ of $\Phi$, 
we introduce an edge $(x,\overline{x})$ in $G$.
We call these edges \textit{variable edges}.
Furthermore, 
for each constraint in $\Phi$ of the form $\ell_1=\ell_2$ where $\ell_1$ and $\ell_2$ are literals,
we introduce two edges $(\ell_1,\overline{\ell_2})$ and $(\overline{\ell_1},\ell_2)$ in $G$.
We call these edges \textit{constraint edges}.
The supposed bipartition of $G$ is into the set of literals whose values are $1$ (true) and $0$ (false) in the solution of $\Phi$.

\begin{lemma}\label{lmm:reduction-from-2eq}
  If $\Phi$ is a satisfiable instance of \textsf{$2$-\equ},
  then $\varphi_G(\Phi)$ is satisfiable.
  On the contrary, if $\Phi$ is $\epsilon$-far from satisfiability,
  then $\varphi_G(\Phi)$ is $\epsilon'$-far from satisfiability where $\epsilon'=\epsilon/(4d)$.
\end{lemma}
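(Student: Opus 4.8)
The plan is to verify the two assertions separately: the satisfiability-preservation by exhibiting an explicit $2$-coloring of $\varphi_G(\Phi)$, and the farness-preservation by contraposition together with a charging argument that tracks which edges of $\varphi_G(\Phi)$ "witness" a violated constraint.

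For the first part, given a satisfying assignment $\sigma$ of $\Phi$ I would take the bipartition of the literal vertices into those assigned $1$ by $\sigma$ and those assigned $0$. Each variable edge $(x,\overline{x})$ crosses the bipartition since exactly one of $x,\overline{x}$ is true; and for a constraint $\ell_1=\ell_2$ that $\sigma$ satisfies we have $\sigma(\ell_1)=\sigma(\ell_2)$, hence $\sigma(\overline{\ell_2})\neq\sigma(\ell_1)$ and $\sigma(\overline{\ell_1})\neq\sigma(\ell_2)$, so both constraint edges $(\ell_1,\overline{\ell_2})$ and $(\overline{\ell_1},\ell_2)$ cross as well. Thus $\varphi_G(\Phi)$ is bipartite.

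For the quantitative part I would argue the contrapositive. Assume $G:=\varphi_G(\Phi)$ is not $\epsilon'$-far from bipartiteness, so there is an edge set $F$ with $G-F$ bipartite and $|F|$ below the farness threshold of $G$. Fix a $2$-coloring $(A,B)$ of $G-F$ and read off an assignment for $\Phi$: for each variable $x$ set $x=1$ iff the literal vertex $x$ lies in $A$ (and if the variable edge $(x,\overline{x})$ lies in $F$, so its endpoints may be on the same side, set $x$ arbitrarily). The key observation is that each constraint $\ell_1=\ell_2$ has four \emph{associated} edges of $G$ — the two constraint edges $(\ell_1,\overline{\ell_2}),(\overline{\ell_1},\ell_2)$ and the two variable edges $(\ell_1,\overline{\ell_1}),(\ell_2,\overline{\ell_2})$ — which form a $4$-cycle; if none of these four lies in $F$, then in the bipartition $\ell_1$ and $\ell_2$ end up on the same side, the values of $\ell_1,\ell_2$ are well-defined (their variable edges are present), and they coincide, so the constraint is satisfied. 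Hence every violated constraint has at least one of its four associated edges in $F$.

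Then I would charge each violated constraint to such an edge and bound the total count: a constraint edge is an associated edge of a single constraint, while a variable edge $(u,\overline{u})$ is associated with at most $d$ constraints (those containing variable $u$, by the degree bound $d$ of $\Phi$), so the number of violated constraints is at most $d\lvert F\rvert$. Deleting these violated constraints from $\Phi$ leaves a sub-instance satisfied by the assignment above, so $\Phi$ is made satisfiable by removing at most $d\lvert F\rvert$ constraints; if $\Phi$ is $\epsilon$-far this forces $d\lvert F\rvert\geq\epsilon dn/2$, i.e.\ $\lvert F\rvert\geq\epsilon n/2$ for every such $F$. Finally I would do the bookkeeping on $G$: it has $2n$ vertices and degree bound at most $d+1\leq 2d$, so the threshold of $\epsilon n/2$ edges is exactly the $\epsilon'=\epsilon/(4d)$ fraction of its normalization, giving that $G$ is $\epsilon'$-far from bipartiteness. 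I expect the main obstacle to be making this charging/normalization accounting airtight — in particular the treatment of variable edges that land in $F$ (which is precisely what forces the factor $d$) and of possibly repeated constraints of $\Phi$ (best handled by viewing $G$ as a multigraph so that each constraint edge belongs to one constraint).
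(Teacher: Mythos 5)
Your proof is correct and arrives at the same bound, but the bookkeeping is organized differently from the paper's. The paper argues the contrapositive by starting from the small repair set $E'$ for $G$ and first \emph{canonicalizing} it so that it contains no variable edges: each variable edge $(x,\overline x)\in E'$ is replaced by all (at most $2d$) constraint edges incident to $x$ and $\overline x$, after which bipartiteness of $G-E'$ is still preserved, and only then are the surviving constraint edges mapped back to constraints of $\Phi$. You instead keep $F$ untouched, read an assignment directly off a $2$-coloring of $G-F$, observe that the four edges associated to a constraint form a $4$-cycle so that a violated constraint must have one of them in $F$, and charge: a constraint edge is charged by at most one constraint while a variable edge $(x,\overline x)$ is charged by at most $d$ constraints, giving at most $d|F|$ violations. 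The two calculations land on the same $\epsilon'=\epsilon/(4d)$ — the paper pays a blowup of $2d$ in canonicalization while treating $G$ as degree-$d$, whereas you pay a blowup of $d$ in the charging and explicitly account for $G$ having degree up to $d+1\le 2d$. Your route is marginally cleaner: it avoids having to justify that the canonicalization step preserves bipartiteness after removal (the paper asserts this without proof), and you correctly flag the multigraph convention needed to handle repeated constraints, a point the paper leaves implicit.
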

\begin{proof}
  Let $G=(V,E)=\varphi_G(\Phi)$.
  The number of vertices of $G$ is $2n$,
  where $n$ is the number of variables of $\Phi$ and the degree bound $d$ of $G$ is the same as $\Phi$.
  The former part of the lemma is obvious.
  Furthermore, if $G$ is bipartite, then $\Phi$ is satisfiable.
  
  We show the latter part.
  Suppose that $G$ is not $\epsilon'$-far from satisfiability.
  Let $E'\subseteq E$ be the set of edges such that $G$ becomes bipartite by removing $E'$ and $|E'|<\epsilon' d (2n)/2$.
  First, we canonicalize $E'$ so that $E'$ does not contain variable edges.
  This is done as follows.
  If $E'$ contains a variable edge $(x,\overline{x})$,
  we exclude the edge from $E'$,
  and instead we add to $E'$ every constraint edge of the form $(x,\ell)$ and $(\overline{x},\ell)$ where $\ell$ is a literal.
  This preserves the property that $G$ becomes bipartite by removing $E'$.
  Since the degree bound of $G$ is $d$,
  after canonicalizing $E'$, 
  the size of $|E'|$ is at most $2d\cdot \epsilon' d (2n)/2=\epsilon dn/2$.
  Let $G_{\textrm{rm}}$ be the resulting graph after removing $E'$.

  We simulate this removal in $\Phi$.
  That is, for each removed edge in $G$,
  we remove the corresponding constraint in $\Phi$.
  This can be done since we excluded variable edges.
  Let $\Phi_{\textrm{rm}}$ be the resulting instance of $2$-\equ.
  The number of removed constraints is at most $\epsilon dn/2$.
  Again, $\varphi(\Phi_{\textrm{rm}})$ is a sub-instance of $G_{\textrm{rm}}$.
  Since $G_{\textrm{rm}}$ is bipartite, 
  $\Phi_{\textrm{rm}}$ is satisfiable.
  However, this contradicts the fact that $\Phi$ is $\epsilon$-far from satisfiability.
\end{proof}

Finally, we use the following algorithm for testing bipartiteness.
\begin{lemma}\cite{GR99}\label{lmm:bipartiteness}
  There exists a one-sided error $\epsilon$-tester for bipartiteness whose running time is $O(\sqrt{n}\poly(\log n/\epsilon))$,
  where $n$ is the number of vertices.
\end{lemma}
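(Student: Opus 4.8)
The plan is to analyze the random-walk based tester of Goldreich and Ron~\cite{GR99}. First I would fix a lazy random walk on $G$: at each step, with probability $1/2$ the walk stays put, and otherwise it selects one of the $d$ edge-slots of the current vertex uniformly at random and traverses it if an edge is present there (staying otherwise); this makes the walk well defined on every bounded-degree graph and, since each vertex has exactly $d$ slots, keeps the stationary distribution uniform on each connected component. For a seed $s$, let $p^L_s(v,\sigma)$ denote the probability that such a length-$L$ walk from $s$ ends at $v$ having made a number of genuine (non-lazy) steps of parity $\sigma\in\bit$. The tester I would analyze repeats $\Theta(1/\epsilon)$ times: pick a uniformly random seed $s$, run $K=\Theta(\sqrt n\,\poly(\log n/\epsilon))$ independent length-$L$ walks from $s$ with $L=\poly(\log n/\epsilon)$, and store the pair (endpoint, parity) of each; if two walks from a common seed hit the same vertex with opposite parities it rejects, and otherwise it accepts. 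The query count is $K\cdot L\cdot\Theta(1/\epsilon)=O(\sqrt n\,\poly(\log n/\epsilon))$, and one-sidedness is immediate: in a bipartite $G$ the parity of the genuine steps of any walk equals the side of its endpoint relative to $s$, so opposite-parity collisions from one seed cannot occur and the tester always accepts.

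The substance is soundness, which I would prove by contraposition: assuming the tester accepts with probability exceeding $1/3$, I would build a $2$-coloring of $G$ with fewer than $\epsilon dn/2$ monochromatic edges. Call a seed $s$ \emph{good} if two independent length-$L$ walks from $s$ collide at a common endpoint with opposite parities with probability at least $1/(n\,\poly(\log n/\epsilon))$, and moreover this collision probability is not concentrated on a few heavy vertices (a bounded-second-moment condition); otherwise call $s$ \emph{bad}. If an $\Omega(\epsilon)$-fraction of seeds were good, then one of the $\Theta(1/\epsilon)$ iterations would land on a good seed with probability $\geq 2/3$, and a standard second-moment (birthday) argument over the $\binom K2$ pairs of walks from that seed shows $K$ walks reveal an opposite-parity collision with constant probability, forcing rejection — a contradiction. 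Hence all but an $\Omega(\epsilon)$-fraction of seeds are bad. For a bad seed $s$, define $\chi_s(v)=\arg\max_{\sigma}p^L_s(v,\sigma)$, the majority parity at $v$. The key estimate is that the expected fraction, over a uniformly random edge $(u,v)$ and over a random walk length in a suitable window, of edges monochromatic under $\chi_s$ is bounded above by a polynomial factor times the opposite-parity collision probability of $s$, which is small because $s$ is bad; averaging $\chi_s$ over a random bad seed then yields a coloring with fewer than $\epsilon dn/2$ monochromatic edges, i.e.\ $G$ is $\epsilon$-close to bipartite.

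The delicate point — and the reason the walk length must be $L=\poly(\log n/\epsilon)$ rather than constant — is that $G$ need not be an expander, so length-$L$ walks from a seed may fail to reach most of the graph, and $\chi_s$ is controlled only on the portion actually explored. The resolution, which is the technical core of~\cite{GR99} and the step I expect to be the main obstacle, is a dichotomy: either a large part of $G$ can be severed by cutting $O(\epsilon dn)$ edges (so those edges may be recolored freely and that part ignored), or the lazy walk from a typical seed effectively mixes inside its reachable region within $\poly(\log n)$ steps, in which case the collision-and-coloring bookkeeping above goes through on that region. Formalizing this alternative, identifying the right ``rapidly mixing'' subgraphs, and propagating the $\poly(1/\epsilon)$ and $\poly(\log n)$ losses through it is where essentially all the work lies; the remaining steps are routine probabilistic accounting, and for the full argument I would cite~\cite{GR99}.
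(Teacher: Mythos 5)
The paper does not prove this lemma at all---it is imported verbatim from~\cite{GR99}---so the only ``paper proof'' is the citation, and your proposal matches it: you correctly reproduce the Goldreich--Ron odd-parity-collision random-walk tester, its query count $O(\sqrt{n}\,\poly(\log n/\epsilon))$, and the one-sidedness argument, while deferring the genuinely hard soundness step (the rapid-mixing vs.\ small-cut dichotomy) to the same reference. That is essentially the same approach as the paper, so nothing further is needed.
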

\begin{proof}[Proof of~Theorem~\ref{thr:equ}]
Combining Lemmas~\ref{lmm:reduction-from-keq},~\ref{lmm:reduction-from-2eq}, and~\ref{lmm:bipartiteness}, 
the theorem holds.
\end{proof}

\section{A Linear Lower Bound for Testing \kcsp}\label{apx:kcsp}
In this section, 
we show that there exists a certain predicate $P$ such that,
\csp{$P$} requires linear number of queries to distinguish satisfiable instances from instances $(1-2k/2^k-\epsilon)$-far from satisfiability.
Then, we show the hardness of \mis.
We use a matrix to define the predicate.
\begin{definition}
  For a matrix $A\in \bit^{h\times k}$, a predicate $P_A:\bit^k\to\bit$ is defined as
  \begin{eqnarray*}
    P_A(x_1,\ldots,x_k)=1  \Leftrightarrow  A\cdot(x_1,\ldots,x_k)^T=0.
  \end{eqnarray*}
  The matrix $A$ is called a \text{generator matrix} of $P_A$.
\end{definition}
Since $A (x+b) = A x + A  b$, 
we posit that a constraint of an instance of \csp{$P_A$} is of the form $A\cdot (x_1,\ldots,x_k)^T=(b_1,\ldots,b_h)^T$.

As a hard generator matrix, we use a linear code.
A \textit{linear code} of distance $3$ and length $k$ over $\bit$ is a subspace of $\bit^k$ such that every non-zero vector in the subspace has at least $3$ non-zero entries.
We refer to the code below as Hamming code of length $k$.
\begin{fact}
  Let $2^{r-1}-1< k\leq 2^r-1$.
  Then, there exists a linear code of distance $3$ and length $k$ over $\bit$ with dimension $h=k-r$.
\end{fact}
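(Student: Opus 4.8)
The plan is to realize the desired code as the kernel of a parity-check matrix whose columns are distinct nonzero vectors of $\bit^r$, i.e.\ via the classical Hamming code construction. First I would note that $\bit^r$ has exactly $2^r-1$ nonzero vectors, so since $k\leq 2^r-1$ one may choose $k$ pairwise distinct nonzero column vectors $c_1,\ldots,c_k\in\bit^r$ and form the matrix $H\in\bit^{r\times k}$ having these as its columns. The candidate code is $\mathcal{C}=\{x\in\bit^k\mid Hx=0\}$, which is a subspace of $\bit^k$.

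Next I would pin down the dimension. By rank--nullity, $\dim\mathcal{C}=k-\mathrm{rank}(H)$, so it suffices to show $\mathrm{rank}(H)=r$, i.e.\ that the columns $c_1,\ldots,c_k$ span all of $\bit^r$. This is precisely where the lower bound $k>2^{r-1}-1$ enters: if the columns spanned only a proper subspace $W\subsetneq\bit^r$, then $\dim W\leq r-1$, so $W$ would contain at most $2^{r-1}-1$ nonzero vectors, contradicting the presence of $k>2^{r-1}-1$ distinct nonzero columns inside $W$. Hence $\mathrm{rank}(H)=r$ and $\dim\mathcal{C}=k-r=h$.

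Finally I would check the distance condition. Suppose $x\in\mathcal{C}$ is nonzero with at most two nonzero coordinates. If $x$ has a single nonzero coordinate $i$, then $Hx=c_i=0$, contradicting that $c_i$ is nonzero. If $x$ has exactly two nonzero coordinates $i\neq j$, then over $\bit$ we get $c_i+c_j=Hx=0$, hence $c_i=c_j$, contradicting distinctness. Therefore every nonzero vector of $\mathcal{C}$ has at least three nonzero entries, so $\mathcal{C}$ is a linear code of distance $3$ and length $k$ over $\bit$ with dimension $h=k-r$, as claimed.

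I do not expect a genuine obstacle: the argument is routine linear algebra over $\bit$. The one point deserving care is the dimension computation, where one must verify that the hypothesis $k>2^{r-1}-1$ actually forces $H$ to have full row rank (so that $\dim\mathcal{C}$ equals $k-r$ exactly, not merely $\geq k-r$), rather than just guaranteeing that enough distinct nonzero columns exist.
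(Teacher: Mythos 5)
The paper states this as a Fact without proof, treating it as standard coding theory. Your proof correctly reconstructs the classical Hamming-code parity-check construction: take $k$ distinct nonzero columns from $\bit^r$, define the code as the kernel, verify distance $\geq 3$ from distinctness and nonzeroness of the columns, and verify dimension $k-r$ by showing the columns must have full row rank $r$ (since any proper subspace of $\bit^r$ has at most $2^{r-1}-1$ nonzero vectors, which the hypothesis $k>2^{r-1}-1$ rules out). All steps are sound, and you correctly identified the full-rank step as the one place where the lower bound on $k$ is actually used. One small remark: even without that step, existence would still follow, since a kernel of dimension $>k-r$ contains a subspace of dimension exactly $k-r$ with at least the same minimum distance; but pinning down the rank as you did is the cleaner and more informative argument.
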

In particular, $|P_A^{-1}(1)|=2^r\leq 2k$ holds for Hamming code $A$.

We define two distributions $\cald_{sat}$ and $\cald_{far}$ of instances of \csp{$P_A$} using Hamming code $A$.
From Lemma~\ref{lmm:expander}, 
for any $\eta>0$ and $d\geq 1$,
there exists $\gamma>0$ such that we have a $k$-uniform $(\gamma,\eta)$-expander $H=(V,E)$ with $n$ vertices and a degree bound $d$.
We use $H$ as an underlying hypergraph of instances generated by $\cald_{sat}$ and $\cald_{far}$.
\begin{itemize}
  \item $\caldsat$: 
    We choose $\bfx_v \in \bit$ for each $v\in V$ uniformly at random. 
    Then, for each edge $e=(v_1,\ldots,v_k)\in E$, 
    we introduce a constraint of the form $A\cdot(x_{v_1},\ldots,x_{v_k})^T=A\cdot(\bfx_{v_1},\ldots,\bfx_{v_k})^T$.
  \item $\caldfar$:
    For each edge $e=(v_1,\ldots,v_k)\in E$,
    we choose $\bfb_e\in \bit^{h}$ uniformly at random and introduce a constraint of the form $A\cdot(x_{v_1},\ldots,x_{v_k})=\bfb_e$.
\end{itemize}
From the construction,
any instance of $\caldsat$ is satisfiable.
On the other hand, 
from Lemma~\ref{lmm:far}, 
for any $\epsilon>0$, by appropriately choosing $d$,
$1-o(1)$ fraction of instances of $\caldfar$ is $(|P^{-1}(0)|/2^k-\epsilon)$-far, i.e., $(1-2k/2^k-\epsilon)$-far.

\begin{theorem}\label{thr:generator-hardness}
  Let $A$ be Hamming code.
  Then, for any $\epsilon>0$, 
  there exists $d\geq 1$ such that every $(1-2k/2^k-\epsilon)$-tester for \csp{$P_A$} with a degree bound $d$ requires $\Omega(n)$ queries.
\end{theorem}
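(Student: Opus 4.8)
The plan is to apply Yao's minimax principle to the two distributions $\caldsat$ and $\caldfar$ already set up in this section, using the same probabilistic expander construction as in Section~\ref{sec:two-sided}. Every instance drawn from $\caldsat$ is satisfiable by construction, and by Lemma~\ref{lmm:far} a $1-o(1)$ fraction of instances from $\caldfar$ is $(|P_A^{-1}(0)|/2^k-\epsilon)$-far from satisfiability; since $|P_A^{-1}(1)|=2^r\le 2k$ gives $|P_A^{-1}(0)|/2^k=1-2^r/2^k\ge 1-2k/2^k$, such instances are in particular $(1-2k/2^k-\epsilon)$-far. So it suffices to show that no deterministic algorithm making at most $\gamma n$ queries can distinguish $\caldsat$ from $\caldfar$, for a suitable constant $\gamma>0$; then $\Omega(\gamma n)=\Omega(n)$ queries are required and Yao's principle upgrades this to randomized testers. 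We fix $\epsilon$, obtain $d$ from Lemma~\ref{lmm:far}, fix an expansion parameter $\eta=1/4$, and obtain the corresponding $\gamma$ from Lemma~\ref{lmm:expander}, so that $H\sim\calg_{n,d,k}$ is a $(\gamma,\eta)$-expander with probability $1-o(1)$; all of the following is conditioned on that event.

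The heart of the argument is the following algebraic-combinatorial lemma, which packages the interaction between the code distance and the expansion: \emph{for the generator matrix $A$ of the Hamming code of length $k$ (full row rank $h=k-r$, whose row space is a linear code of minimum distance $3$), and for any subgraph $G=(V_G,E_G)$ of a $(\gamma,\eta)$-expander with $|E_G|\le\gamma n$ and $\eta<1/2$, the linear system $\{A\cdot x|_e = b_e\}_{e\in E_G}$ is solvable for every right-hand side, and all solution sets have equal size $2^{|V_G|-h|E_G|}$.} Equivalently, the map $\Psi_G\colon x\mapsto (A\cdot x|_e)_{e\in E_G}$ is surjective onto $(\bit^h)^{E_G}$. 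By a standard rank/duality argument this reduces to showing: if a nonempty $E'\subseteq E_G$ and nonzero vectors $\lambda_e\in\bit^h$ ($e\in E'$) satisfy $\sum_{e\in E',\,v\in e}(A^{\mathsf T}\lambda_e)_v=0$ for every vertex $v$, we reach a contradiction. Applying the $(\gamma,\eta)$-expansion to $E'$ and double-counting vertex multiplicities produces an edge $e_0\in E'$ that shares at most $2+2\eta<3$ vertices with $E'\setminus\{e_0\}$; on the $\ge k-(2+2\eta)$ private vertices of $e_0$ the above sum must cancel using $e_0$ alone, forcing $\mathrm{supp}(A^{\mathsf T}\lambda_{e_0})$ to sit inside those $\le 2$ shared vertices, i.e.\ $A^{\mathsf T}\lambda_{e_0}$ has weight $\le 2$. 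But $A^{\mathsf T}\lambda_{e_0}$ is a \emph{nonzero} codeword (nonzero because $A$ has full row rank) of a distance-$3$ code, hence has weight $\ge 3$ — a contradiction.

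With this lemma in hand, indistinguishability follows. Using the randomized processes $\calpsat$ and $\calpfar$ of Section~\ref{sec:two-sided} (Lemma~\ref{lmm:equivalent}), the hypergraph-revealing behavior is literally identical; the only difference is that $\calpsat$ answers the $h$-bit literal vector of a newly revealed constraint $e$ by $A\cdot\bfx|_e$ for a hidden uniform assignment $\bfx$, whereas $\calpfar$ answers uniformly at random. For a deterministic algorithm making $t\le\gamma n$ queries, a transcript reveals a subgraph $G$ with $|E_G|\le t\le\gamma n$ together with literal vectors $(b_e)_{e\in E_G}$, and under $\caldsat$ the probability of this transcript factors as (the config-model probability of that structure and revelation order) times $\Pr_{\bfx}[\Psi_G(\bfx)=(b_e)_e]$; by the lemma the latter equals $2^{|V_G|-h|E_G|}/2^{|V_G|}=2^{-h|E_G|}$, which is exactly the probability that $\caldfar$ produces the same $(b_e)_e$ as i.i.d.\ uniform vectors, while the structural factor is the same in both. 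Hence $\calksat$ and $\calkfar$ coincide (up to the $o(1)$ failure probability of the expander event), so the tester accepts $\caldsat$ and $\caldfar$ with essentially equal probability, contradicting that it must accept the former with probability $\ge 2/3$ and reject the latter with probability $\ge 2/3-o(1)$. The main obstacle is the algebraic-expansion lemma above — extracting, from $(\gamma,\eta)$-expansion together with minimum distance $3$, that the linear system revealed by any $\gamma n$ queries is always (and uniformly) solvable; everything else is bookkeeping with the machinery of Section~\ref{sec:two-sided}. This generalizes the $3$-\xor lower bound of~\cite{BOT02}, where $A=[1\ 1\ 1]$ and the relevant distance-$3$ code is the length-$3$ repetition code $\{000,111\}$.
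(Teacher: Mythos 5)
Your proposal is correct and follows essentially the same approach as the paper: a Yao-style indistinguishability argument between $\caldsat$ and $\caldfar$ built on showing that the linear system revealed by any $\le\gamma n$ constraints has linearly independent rows (equivalently, your $\Psi_G$ is surjective with uniform fibers), which is exactly the paper's core claim. The only difference is cosmetic — you derive the contradiction by extracting a single edge $e_0$ with at most two shared vertices and applying the distance-$3$ property to it locally, whereas the paper counts the total number of incident variables globally across all constraints in the dependence support; both use the same expansion bound $\eta<1/2$ and the same code-distance fact.
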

\begin{proof}
  Let $\Phi$ be an instance chosen from $\caldsat$.
  One constraint of \csp{$P_A$} consists of a chunk of $h$ linear equation.
  Thus, in total, there exists $mh$ linear equations in $\Phi$,
  where $m$ is the number of constraints of $\Phi$.
  We can write these equations in the form $Mx = b$ using a matrix $M \in \bit^{mh \times n}$ and a vector $b \in \bit^{mh}$.
  Here, $M$ is uniquely determined by the underlying hypergraph $H$ regardless of $b$.

  We show that for any set of $s\leq \gamma n$ constraints of $\Phi$, 
  the corresponding rows in $M$ are linearly independent.
  Suppose that there exists a set $R$ of $s$ constraints whose corresponding rows are linearly dependent.
  Let $S$ denote the set of variables incident to $R$.
  Since every chunk equation comes from a distance-$3$ code, every linear combination of rows within a chunk must have at least three elements.
  Hence, the linear combination required to derive $0$ must include at least three elements from each of the $s$ constraints.
  To derive $0$, each of these elements must occur an even number of times,
  and hence $s$ constraints can involve at most $ks-3s/2=(k-1-1/2)s$ variables in total.
  If we choose $\eta<1/2$, this is impossible.

  Let $M'x=b'$ be a sub-instance obtained by choosing any $\gamma n$ constraints.
  Since the rows of $M'$ are linearly independent, $M'\bfx$ is also uniformly distributed when $\bfx\in \bit^{n}$ is chosen uniformly at random.
  Thus, no algorithm can distinguish instances of $\caldsat$ from instances of $\caldfar$ with $\gamma n$ queries.
  The theorem follows.
\end{proof}

\begin{proof}[Proof of Theorem~\ref{thr:mis}]
Using a modified version of the FGLSS reduction from \maxkcsp to \mis used in~\cite{Tre01}, 
which is tailored for bounded-degree instances,
we have this theorem.
\end{proof}

%query: k
%hardness 2k/2^k
%delta = 2^{-k}/k
%1/delta=k 2^k
%d = 2^k poly k
%degree d=2^k poly(k):
%approximation 2^k/2k
%2^k/2k=d/(poly(k)2k)>d/poly(log d)

%% \section{A Lower Bound for Approximating \textsf{Max E2LIN2}}
%% We construct an instance $\Phi'$ of \textsf{Max E2LIN2} from an instance $\Phi$ of \maxnae.
%% For each constraint of the form $\nae(x_1+b_1,x_2+b_2,x_3+b_3)$ of $\Phi$, 
%% we add constraints of the form $x_1+b_1 \neq x_2+b_2, x_2+b_2 \neq x_3+b_3$, and $x_1+b_1 \neq x_3+b_3$ to $\Phi'$.
%% Note that exactly zero or two constraints among these three constraints can be satisfied by an assignment.
%% Thus, $\Opt(\Phi')=\frac{2}{3}\Opt(\Phi)$.
%% Since, any $(1/4-\epsilon)$-tester for \nae requires $\Omega(n^{1/2+\delta})$ queries from Theorem~\ref{thr:two-sided}, we have the following.
%% \begin{corollary}\label{crl:max-e2lin2}
%%   For any $\epsilon>0$, 
%%   there exists some $d$ and $\delta>0$ such that any $(3/4-2\epsilon,\epsilon n)$-approximation algorithm for \textsf{Max E2LIN2} with a degree bound $d$ requires $\Omega(n^{1/2+\delta})$ queries.
%%   \qed
%% \end{corollary}
%% It is known that $(16/17,\epsilon n)$-approximation algorithm for \maxcut (a special case of \textsf{Max E2LIN2}) with a degree bound requires $\Omega(n)$ queries using the reduction from \textsf{Max E3LIN2}~\cite{BOT02}.
%% Also, it can be easily shown that $(1/2-2\epsilon,\epsilon n)$-approximation algorithm for \textsf{Max E2LIN2} with a degree bound requires $\Omega(n^{1/2})$ queries.

\end{document}